\DeclareSymbolFont{cyrillic}{T2A}{cmr}{m}{n}
\DeclareMathSymbol{\Sha}{\mathalpha}{cyrillic}{216}
\newcounter{tempcounter}
\newcommand{\periodafter}[1]{\ifstrempty{#1}{}{#1.}}
\titleformat{\section}[block]{\scshape\filcenter\LARGE\boldmath}{\thesection.}{.5em}{}
\titleformat{\subsection}[block]{\bfseries\filcenter\large\boldmath}{\thesubsection.}{.5em}{\medskip}
\titleformat{\subsubsection}[runin]{\bfseries\boldmath}{\thesubsubsection.}{.5em}{\periodafter}
\titlespacing{\subsubsection}{0pt}{\topsep}{.5em}
\newtheoremstyle{ntheorem}%
	{\topsep}{\topsep}{\itshape}{0pt}{\bfseries}{.}{.5em}%
	{\thmnumber{#2.\hspace{.5em}}\thmname{#1}\thmnote{ (#3)}}
\newtheoremstyle{ndefinition}%
	{\topsep}{\topsep}{\normalfont}{0pt}{\bfseries}{.}{.5em}%
	{\thmnumber{#2.\hspace{.5em}}\thmname{#1}\thmnote{ (#3)}}
\newtheoremstyle{nremark}%
	{\topsep}{\topsep}{\normalfont}{0pt}{\itshape}{.}{.5em}%
	{\thmnumber{}\thmname{#1}\thmnote{ (#3)}}
\theoremstyle{ntheorem}
  	\newtheorem{theorem}[subsubsection]{Theorem}
  	\newtheorem{proposition}[subsubsection]{Proposition}
	\newtheorem{lemma}[subsubsection]{Lemma}
\theoremstyle{ndefinition}
	\newtheorem{remark}[subsubsection]{Remark}
\def\@equationname{equation}
\newenvironment{eqn}[1]{%
    \def\mymathenvironmenttouse{#1}%
    \ifx\mymathenvironmenttouse\@equationname%
        \refstepcounter{subsubsection}%
    \else
        \patchcmd{\@arrayparboxrestore}{equation}{subsubsection}{}{}
        \patchcmd{\print@eqnum}{equation}{subsubsection}{}{}%
        \patchcmd{\incr@eqnum}{equation}{subsubsection}{}{}%
    \fi
    \csname\mymathenvironmenttouse\endcsname%
}{%
    \ifx\mymathenvironmenttouse\@equationname%
        \tag{\thesubsubsection}%
    \fi
    \csname end\mymathenvironmenttouse\endcsname%
}
	\edef\Drop@@{%
		\dimen@=#1\relax
		\dimen@=.5\dimen@
		\A@=-\sinDirection\dimen@
		\B@=\cosDirection\dimen@
		\setboxz@h{%
			\setbox2=\hbox{\kern3\A@\raise3\B@\copy\z@}%
			\dp2=\z@ \ht2=\z@ \wd2=\z@ \box2
			\setbox2=\hbox{\kern\A@\raise\B@\copy\z@}%
			\dp2=\z@ \ht2=\z@ \wd2=\z@ \box2
			\setbox2=\hbox{\kern-\A@\raise-\B@\copy\z@}%
			\dp2=\z@ \ht2=\z@ \wd2=\z@ \box2
			\setbox2=\hbox{\kern-3\A@\raise-3\B@ \noexpand\boxz@}%
			\dp2=\z@ \ht2=\z@ \wd2=\z@ \box2
		}%
		\ht\z@=\z@ \dp\z@=\z@ \wd\z@=\z@ \noexpand\styledboxz@
	}%
\xydef@\Tttip@{\kern2pt \vrule height2pt depth2pt width\z@
	\Tttip@@ \kern2pt \egroup
	\U@c=0pt \D@c=0pt \L@c=0pt \R@c=0pt \Edge@c={\circleEdge}%
	\def\Leftness@{.5}\def\Upness@{.5}%
	\def\Drop@@{\styledboxz@}\def\Connect@@{\straight@{\dottedSpread@\jot}}}
\xydef@\Tttip@@{%
	\dimen@=.25\dimen@
 	\B@=\cosDirection\dimen@
	\setboxz@h\bgroup\reverseDirection@\line@ \wdz@=\z@ \ht\z@=\z@ \dp\z@=\z@
	{\vDirection@(1,-1)\xydashl@ \xyatipfont\char\DirectionChar}%
	{\vDirection@(1,+1)\xydashl@ \xybtipfont\char\DirectionChar}%
}
\xydef@\ar@form{
	\ifx \space@\next \expandafter\DN@\space{\xyFN@\ar@form}%
	\else\ifx ^\next \DN@ ^{\xyFN@\ar@style}\edef\arvariant@@{\string^}%
	\else\ifx _\next \DN@ _{\xyFN@\ar@style}\edef\arvariant@@{\string_}%
	\else\ifx 0\next \DN@ 0{\xyFN@\ar@style}\def\arvariant@@{0}%
	\else\ifx 1\next \DN@ 1{\xyFN@\ar@style}\def\arvariant@@{1}%
	\else\ifx 2\next \DN@ 2{\xyFN@\ar@style}\def\arvariant@@{2}%
	\else\ifx 3\next \DN@ 3{\xyFN@\ar@style}\def\arvariant@@{3}%
	\else\ifx 4\next \DN@ 4{\xyFN@\ar@style}\def\arvariant@@{4}%
	\else\ifx \bgroup\next \let\next@=\ar@style
	\else\ifx [\next \DN@[##1]{\ar@modifiers{[##1]}}
	\else\ifx *\next \DN@ *{\ar@modifiers}%
	\else\addLT@\ifx\next \let\next@=\ar@slide
	\else\ifx /\next \let\next@=\ar@curveslash
	\else\ifx (\next \let\next@=\ar@curveinout 
	\else\addRQ@\ifx\next \addRQ@\DN@{\ar@curve@}%
	\else\addLQ@\ifx\next \addLQ@\DN@{\xyFN@\ar@curve}%
	\else\addDASH@\ifx\next \addDASH@\DN@{\defarstem@-\xyFN@\ar@}%
	\else\addEQ@\ifx\next \addEQ@\DN@{\def\arvariant@@{2}\defarstem@-\xyFN@\ar@}%
	\else\addDOT@\ifx\next \addDOT@\DN@{\defarstem@.\xyFN@\ar@}%
	\else\ifx :\next \DN@:{\def\arvariant@@{2}\defarstem@.\xyFN@\ar@}%
	\else\ifx ~\next \DN@~{\defarstem@~\xyFN@\ar@}%
	\else\ifx !\next \DN@!{\dasharstem@\xyFN@\ar@}%
	\else\ifx ?\next \DN@?{\ar@upsidedown\xyFN@\ar@}%
	\else \let\next@=\ar@error
	\fi\fi\fi\fi\fi\fi\fi\fi\fi\fi\fi\fi\fi\fi\fi\fi\fi\fi\fi\fi\fi\fi\fi \next@}
\newcommand{\fl}{\rightarrow}
\newcommand{\qfl}{\xymatrix@1@C=10pt{\ar@4 [r] &}}
\renewcommand{\tilde}[1]{\widetilde{#1}}
\newcommand{\tck}[1]{#1^{\top}}
\DeclareMathOperator{\id}{Id}
\renewcommand{\phi}{\varphi}
\renewcommand{\epsilon}{\varepsilon}
\newcommand{\Cr}{\mathcal{C}}
\newcommand{\Er}{\mathcal{E}}
\def\catego#1{\mathsf{#1}}
\DeclareMathOperator{\Nbb}{\mathbb{N}}
\DeclareMathOperator{\Ccal}{\mathcal{C}}
\DeclareMathOperator{\Dcal}{\mathcal{D}}
\newcommand{\Squa}[1]{\catego{Sq}_{#1}}
\newcommand{\SquaFill}[2]{\catego{SqF}_{#1}^{#2}}
\newcommand{\oto}[1]{\overset{#1}{\to}}
\newcommand{\ofrom}[1]{\overset{#1}{\leftarrow}}
\theoremstyle{ntheorem}
\tikzset{global scale/.style args={#1and#2}{scale=#1, every node/.append style={scale=#2}}}
\tikzset{triangle/.style = {path picture={\draw (path picture bounding box.north west) -- (path picture bounding box.north east) -- (path picture bounding box.south) -- (path picture bounding box.north west);}}}
\tikzset{Rightarrow/.style={double equal sign distance,>={Implies},->}, Rrightarrow/.style={-,preaction={draw,Rightarrow}}, Rrrightarrow/.style={preaction={draw,Rightarrow},-,double,double distance=0.2pt}}
\providecommand{\leftsquigarrow}{%
  \mathrel{\mathpalette\reflect@squig\relax}%
}
\newcommand{\reflect@squig}[2]{%
  \reflectbox{$\m@th#1\rightsquigarrow$}%
}
\DeclareFontFamily{U}{mathx}{\hyphenchar\font45}
\DeclareFontShape{U}{mathx}{m}{n}{
      <5> <6> <7> <8> <9> <10>
      <10.95> <12> <14.4> <17.28> <20.74> <24.88>
      mathx10
      }{}
\DeclareSymbolFont{mathx}{U}{mathx}{m}{n}
\DeclareMathAccent{\widecheck}{0}{mathx}{"71}
\newcommand{\cat}[1]{#1^{\ast}}
\newcommand{\cstabs}[1]{#1^{\circ_i}}
\newcommand{\cstabsr}[1]{#1^{\top_i}}
\newcommand{\onebf}{{\bf 1}}
\renewcommand{\id}{id}
\renewcommand{\leq}{\leqslant}
\renewcommand{\geq}{\geqslant}
\newcommand{\SSS}[1]{§\ref{#1}}
\def\hhmm{\number\hh:\ifnum\mm<10{}0\fi\number\mm}
\definecolor{vert}{rgb}{0,0.45,0}
\definecolor{rouge}{rgb}{0.89,0.04,0.36}
\definecolor{MyGray}{gray}{0.6}
\definecolor{MyRed}{RGB}{212,42,42}
\newcommand{\cb}[3]{\mathrm{C}^{#1}_{#2}(#3)}
\newcommand{\res}[2]{#1 \mathbin{|} #2}
\newcommand{\Abst}[1]{{\mathsf{#1}}}
\newcommand{\Brc}[2]{\Abst{B}_{#1}(#2)}
\newcommand{\LBrc}[2]{\Abst{LB}_{#1}(#2)}
\newcommand{\Cf}[2]{\Abst{Cf}_{#1}(#2)}
\newcommand{\LCf}[2]{\Abst{LCf}_{#1}(#2)}
\newcommand{\NCf}[2]{\Abst{NCf}_{#1}(#2)}
\newcommand{\NLCf}[2]{\Abst{NLCf}_{#1}(#2)}
\newcommand{\CR}[1]{\Abst{CR}(#1)}
\newcommand{\NCR}[1]{\Abst{NCR}(#1)}
\def\catego#1{\mathsf{#1}}
\newcommand{\extGamma}[3]{\Gamma_{({#1} ~ {#2})}^{#3}}
\newcommand{\PreCub}[1]{{\catego{PreCub}_{#1}}}			
\newcommand{\GlobCat}[1]{{\catego{Cat}_{#1}}}
\newcommand{\CubCat}[1]{{\catego{Cub}_{#1}}}				
\newcommand{\CubCatG}[1]{{\CubCat{#1}^{\Gamma}}}			
\newcommand{\CubPol}[1]{{\catego{CubPol}_{#1}}}			
\DeclareMathOperator{\Xcal}{\mathcal{X}}
\newcommand{\filler}[2]{{A_{#1}(#2)}}		
\newcommand{\fillerCR}[1]{{B(#1)}}	
\newcommand{\auteur}[3]{
\noindent
\begin{minipage}[t]{.45\textwidth}
\begin{flushright}
\textsc{#1} \\
{\footnotesize\textsf{#2}}
\end{flushright} 
\end{minipage}
\qquad
\begin{minipage}[t]{.45\textwidth}
#3
\end{minipage}
}
\begin{document}
\thispagestyle{empty}

\begin{center}

\begin{doublespace}
\begin{huge}
{\scshape Cubical coherent confluence,}
\end{huge}

\vskip+1.5pt

\begin{huge}
{\scshape $\omega$-groupoids and the cube equation}
\end{huge}

\vskip+2pt

\bigskip
\hrule height 1.5pt 
\bigskip

\vskip+5pt

\begin{Large}
{\scshape Philippe Malbos - Tanguy Massacrier - Georg Struth}
\end{Large}
\end{doublespace}


\vskip+20pt

\begin{small}\begin{minipage}{14cm}
\noindent\textbf{Abstract --}
We study the confluence property of abstract rewriting systems
internal to cubical categories. We introduce cubical contractions, a
higher-dimensional generalisation of reductions to normal forms, and
employ them to construct cubical polygraphic resolutions of convergent
rewriting systems. Within this categorical framework, we establish
cubical proofs of fundamental rewriting results -- Newman’s lemma, the
Church–Rosser theorem, and Squier’s coherence theorem -- via the
pasting of cubical coherence cells. We moreover derive, in purely
categorical terms, the cube law known from the $\lambda$-calculus and
Garside theory. As a consequence, we show that every convergent
abstract rewriting system freely generates an acyclic cubical
groupoid, in which higher-dimensional generators can be replaced by
degenerate cells beyond dimension two.

\medskip

\smallskip\noindent\textbf{Keywords --} Coherence proofs, abstract
rewriting systems, higher-dimensional rewriting, cubical categories,
cubical contractions, cubical coherent confluence, cube law. 

\smallskip\noindent\textbf{M.S.C. 2020 --} 03B35, 68Q42, 18N30.
\end{minipage}
\end{small}
\end{center}


\begin{center}
\begin{small}\begin{minipage}{12cm}
\renewcommand{\contentsname}{}
\setcounter{tocdepth}{1}
\tableofcontents
\end{minipage}
\end{small}
\end{center}

\vskip+10pt

\section{Introduction}

This work started from the study of $n$-branchings of rewriting paths
in polygraphic resolutions and homotopical reduction-completion
procedures of higher-dimensional rewriting
systems~\cite{GuiraudMalbos12advances,GaussentGuiraudMalbos15}. Such
branchings can be regarded as computations starting in the same
state. An important property of branching computations is confluence,
which holds if these computations may eventually join in a common
state. Higher-dimensional rewriting is usually based on strict
$\omega$-categories~\cite{Polybook2025}, wich compose cells of
globular shape. Yet it often seems more natural to assemble confluence
and other rewriting diagrams into higher-dimensional cubes. So why not
use cubical categories instead for rewriting?

The relationship between rewriting theory~\cite{Terese03} -- a
fundamental model of computation with far-reaching applications in
mathematics and computer science -- and higher globular categories is
natural and well studied~\cite{Polybook2025}. We consider it in its
purest form through abstract rewriting systems, through ($1$-poly)graphs
$\partial^-,\partial^+:X_1\to X_0$, where $X_0$ is a set of $0$-cells
or vertices, $X_1$ is a set of $1$-cells or directed edges, and
$\partial^-$, $\partial^+$ are source and target maps relating
them. A rewriting path or computation is then a morphism or $1$-cell
in the (free) path category generated by such a graph. Higher
structure emerges in rewriting either through structured objects, or
alternatively through relationships between rewriting paths and higher
relationships between higher relationships. The free monoid used in
string rewriting, for instance, is a category with a single $0$-cell;
rewriting steps then become $2$-cells. Alternatively, in the left
square below, the $2$-cell $A$ expresses a relationship between the
rewriting paths along its faces.
\begin{eqn}{equation*}
\begin{tikzcd}[global scale = 2 and 2 and 1 and 1.2]
  w \ar[rr, "f"] \ar[dd, "g"'] && x \ar[dd, "h"] \\
&A&\\
y \ar[rr, "k"'] && z
\end{tikzcd}
\qquad\qquad\qquad
\begin{tikzcd}[global scale = 2 and 2 and 1 and 1.2]
  w \ar[rr, "f"] \ar[dd, "g"'] && x \ar[dd, "\res{g}{f}"] \\
  &A(f,g)&\\
y \ar[rr, "\res{f}{g}"'] && z
\end{tikzcd}
\end{eqn}
The square on the right expresses confluence of the branching
$y\xleftarrow{g} w\xrightarrow{f} x$ more specifically in the sense
that the paths $f$, $g$ can be extended from $y$ and $z$ to some
common vertex $z$, the notation $A(f,g)$ indicating the existential
dependency of its faces $f\vert g$ and $g\vert f$ on $f$ and
$g$. Likewise, confluences of $n$-branchings lead naturally to
coherence $n$-cubes, which globular categories obviously model as
globes.

Rewriting with higher cells requires higher-dimensional rewriting
systems supplying generators, relations and rewriting paths in higher
dimensions: so-called computads~\cite{Street76,Street87} or
polygraphs~\cite{Burroni93}.  Polygraphic
resolutions~\cite{Metayer03,GuiraudMalbos12advances,Polybook2025} then
amount to the construction of higher-dimensional rewriting systems
with desirable properties such as confluence and termination
guarantees. When rewriting with structured objects, these can be
obtained via reduction-completion procedures that resolve obstacles to
confluence given by certain
$n$-branchings~\cite{GaussentGuiraudMalbos15}.  These have been
developed for resolving algebraic and categorical structures in
homological algebra for
categories~\cite{Metayer03,GuiraudMalbos12advances}, associative
algebras~\cite{GuiraudHoffbeckMalbos19,LiuMalbos2025} and
operads~\cite{MalbosRen23}, as well as for
algebraic~\cite{GaussentGuiraudMalbos15} and
categorical~\cite{CurienMimram17} coherence proofs.

Proofs about rewriting systems are often presented in semi-formal
diagrammatic style. The literature abounds in particular with diagrams
gluing cubes~\cite{Terese03,Barendregt84}. In higher-dimensional
rewriting, this amounts to composing higher cells in the underlying
categories.

The idea of using cubical categories for higher-dimensional rewriting
is not new. A cubical approach has been pioneered by
Lucas~\cite{LucasPhD2017,Lucas2018,Lucas20}, building on Brown and
Higgin's cubical categories~\cite{BrownH81,AlAglBrownSteiner2002},
which in turn add compositions to the cubical sets of
Serre~\cite{Serre1951} and Kan~\cite{Kan1955}.  Lucas has in
particular proved the existence of cubical polygraphs, adapting ideas
by Batanin~\cite{Batanin98} and Garner~\cite{Garner2010}. His
polygraphs carry a monoidal structure to capture ``string'' rewriting
with monoid objects. Using this formalism he has verified some
standard confluence properties using cubical $2$-polygraphs, and
studied certain polygraphic resolutions for monoids. Our work is
strongly influenced by his. Al-Agl, Brown and Steiner have
shown that cubical categories with connection maps are equivalent to
globular ones~\cite{AlAglBrownSteiner2002}, which suggests that one
may translate between these two approaches to higher-dimensional
rewriting.

Higher confluence properties, in dimension $3$ and with emphasis on
cubes, have received longstanding interest in the rewriting
literature, too. Lévy has derived a cube law in the
$\lambda$-calculus, showing that all $3$-branchings of certain
rewriting paths of $\lambda$-terms extend around the edges of
$3$-dimensional confluence cubes~\cite{LevyPhD78}. Several sections in
Barendregt's monograph on the $\lambda$-calculus~\cite{Barendregt84}
are devoted to this cube law and a theory of residuals akin to
$f\vert g$ and $g\vert f$ in the diagram above. A comprehensive survey
on the cube law in rewriting has been writen by Endrullis and
Klop~\cite{EndrullisKlop2019}, including work by Klop himself, who has
returned to $3$-confluences and the cube law several times within four
decades. Endrullis and Klop not only open up fascinating relationships
with knot and Garside theory~\cite{EndrullisKlop2019,Dehornoy2015},
they also use the cube law as a hypothesis for a $3$-confluence
proof. By contrast, van Oostrom has recently sketched a combinatorial
bricklaying procedure for $3$-confluence proofs that is meant to
satisfy the cube law by construction~\cite{vanOostrom23}.

Here, we combine the two lines of work on cubical higher-dimensional
rewriting and higher confluence proofs in the context of polygraphic
resolutions of higher-dimensional cubical abstract rewriting systems,
which we present as constructions of certain cubical
$\omega$-groupoids.

To this end, we first extend the framework of cubical
higher-dimensional rewriting with contractions, which are essential
for constructing cubical polygraphs with the rewriting properties
desired. For this, we work with cubical $(\omega,p)$-categories where
cells in dimensions greater than $p+1$ are invertible. Their
definitions are recalled in Section~\ref{S:CubicalCategory}. Our
notion of contraction, introduced in Section~\ref{S:CubContrStrat}, is
given by a family of lax
transformations~\cite{AlAglBrownSteiner2002,LucasPhD2017,Lucas2018}, a
generalisation of natural transformations to cubical
categories. Intuitively, contractions extend rewriting strategies to
higher dimensions. For their definition, we first impose a quotient
structure in dimension $p$ on the underlying $(\omega,p)$-category,
and then define a section as a choice of a representative, for
instance a normal form.  Contractions extend this choice function
recursively to higher dimensions. This leads to a notion of
contracting cubical $(\omega,p)$-category, in which all cells of
dimension greater than $p+1$ can be contracted. The main result in
this context, Theorem~\ref{T:ContractingImpliesAcyclic}, shows that
every contracting $(\omega,0)$-category (hence every cubical
$\omega$-groupoid) is acyclic, so that all boundaries with a cubical
hole can be filled with a cell.

As examples of abstract cubical rewriting, we revisit some classical
diagrammatic confluence proofs in higher dimensions as cubical cell
compositions in Section~\ref{S:CCConfluence} , including variants of
Newman's lemma and the Church-Rosser theorem in two cubical
directions.  We also prove a variant of Squier's
theorem~\cite{SquierOttoKobayashi94}, which requires contractions and
can be seen as a low-dimensional version of
Theorem~\ref{T:ContractingImpliesAcyclic} for confluent and
terminating rewriting systems. In particular, we present a proof of
Newman's lemma in three cubical directions without explicitly use of
the cube law, as it is an immediate consequence of the geometry
imposed by the axioms of cubical categories. Using contractions, we
can even derive the cube law without involving coherence $3$-cells. To
simplify proofs, we use an internal abstract rewriting system in an
$(\omega,p)$-category, which can be seen as a generalisation of a
polygraph.

Our final contribution, in Section~\ref{S:PolygraphicResolutionARS},
lies in the study of polygraphic resolutions of cubical categories.
More specifically, we construct an acyclic cubical $\omega$-groupoid
from an abstract rewrite systems $\partial^-,\partial^+:X_1\to X_0$,
using a normalisation strategy based on contractions. For this, we
first introduce an explicit construction of cubical polygraphs and
prove Theorem~\ref{T:AcyclicityNormalisation}, a converse of
Theorem~\ref{T:ContractingImpliesAcyclic}, showing that free cubical
$\omega$-groupoids on polygraphs are acyclic if and only they are
contracting. We then turn to polygraphic resolutions of confluent and
terminating abstract rewriting systems, extending them recursively to
acyclic $\omega$-groupoids in Theorem~\ref{T:AcyclicExtensionARS},
which involves studying their $n$-branchings. Finally, in
Theorem~\ref{T:TruncatedAcyclicExtensionARS}, we refine this
construction so that it generates no non-trivial higher cells in
dimension greater than $2$. This result confirms in a more structural
way that the cube law does not require coherence $3$-cells in our
setting.  For abstract rewriting systems, no cubes are needed, because
homotopically, all cubes are empty.

In combination, these contributions shed in particular some light on
the cube law and address a longstanding question in the rewriting
community, which has been asked quite poignantly by
Klop~\cite{Klop2023}: ``\emph{One would expect [...] in higher
  category theory [...] that the Cube Equation [...] would be very
  much present [...]. But it seems that the contrary is the case:
  nowhere [...] one encounters the Cube Equation or residual
  notions. (I would love to be corrected!) How come? [...] Is a
  fundamental notion as \textbf{confluence} a total stranger in
  categories? }''.

\section{Preliminaries on Cubical Categories}
\label{S:CubicalCategory}

Cubical categories, introduced by Brown and
Higgins~\cite{BrownHiggins1981,BrownHiggins1981b}, are cubical sets
equipped with partial composition operations along the faces of
higher-dimensional cubes, and with identity cells in every dimension.
In this section, we adopt the axioms of Al-Agl, Brown and
Steiner~\cite{AlAglBrownSteiner2002}, augmented with the
cell-invertibility structure introduced by Lucas~\cite{LucasPhD2017},
and we recall the notion of lax transformations of cubical
categories—referred to as $1$-fold left homotopies
in~\cite{AlAglBrownSteiner2002}. Our setting is that of cubical
$\omega$-categories, possibly equipped with connections and inverses,
as formalised in~\cite{GrandisMauri2003}. For each $n\in \mathbb{N}$,
a cubical $n$-category is defined as the truncation of a cubical
$\omega$-category.

\subsection{Cubical $\omega$-categories}
\label{SS:CubicalCategories}

We henceforth assume that Greek letters $\alpha,\beta$ occurring as
superscripts of operators range over~$\{-,+\}$.

\subsubsection{}
\label{SSS:DefCubCat}
A \emph{cubical $\omega$-category} $\Ccal$ consists of
\begin{enumerate}
\item\label{I:CubicalCells} a family $(\Ccal_k)_{0\leq k}$ of sets of \emph{$k$-cells of $\Ccal$},
\item\label{I:CubicalRelations} \emph{face maps} $\partial_{k,i}^\alpha : \Ccal_k \to \Ccal_{k-1}$, for $1\leq i\leq k$, satisfying the \emph{cubical relations} 
\begin{eqn}{equation}
\label{E:AxiomPreCubClass}
\partial_{k-1,i}^\alpha\partial_{k,j}^\beta=\partial_{k-1,j-1}^\beta\partial_{k,i}^\alpha\qquad
(1\le i<j<k),
\end{eqn}
\item \emph{degeneracy maps} $\epsilon_{k,i}:\Ccal_{k-1}\to\Ccal_k$, for $1\leq i\leq k$,
\item \emph{composition maps} $\circ_{k,i}:\Ccal_k\times_{k,i}\Ccal_k \fl \Ccal_k$, for $1\leq i\leq k$, defined on the pullback $\Ccal_k\times_{k,i}\Ccal_k$ of the cospan $\Ccal_k\oto{\partial_{k,i}^+}\Ccal_{k-1}\ofrom{\partial_{k,i}^-}\Ccal_k$.
\setcounter{tempcounter}{\value{enumi}}
\end{enumerate}
These data are subject to the relations listed in
Appendix~\ref{AA:AxiomsCubCat}. Throughout this paper we consider cubical $\omega$-categories with
\begin{enumerate}
\setcounter{enumi}{\value{tempcounter}}
\item \emph{connection maps} $\Gamma_{k,i}^\alpha:\Ccal_{k-1}\to\Ccal_k$, for
$1\leq i<k$, satisfying the relations in Appendix~\ref{AA:AxiomsCubCatConnections}.
\end{enumerate}

A \emph{functor} $F: \Ccal \to \Dcal$ \emph{of cubical $\omega$-categories} is a
family of maps $(F_k:\Ccal_k\to\Dcal_k)_{0\leq k}$ that
preserve all face, degeneracy, composition and connection maps, see
Appendix~\ref{AA:CubicalFunctors}.  

All categories considered are cubical, so we drop this adjective wherever possible.

\subsubsection{}
\label{SSS:IllustratingCubCat}
Any $k$-cell $A$ and its faces can be represented, for
$1 \leq i < j \leq k$, by the diagram
\begin{equation*}
\begin{tikzcd}[global scale = 2 and 2 and 1 and 1.2]
\ar[r, shorten <= -5] \ar[d, shorten <= -7] & i \\
j & 
\end{tikzcd}
\qquad
\begin{tikzcd}[global scale = 4 and 2.1 and 1 and 1.2]
\partial_{k-1,i}^- \partial_{k,j}^- A \ar[rr, "\partial_{k,i}^- A"] \ar[dd, "\partial_{k,j}^- A"'] && \partial_{k-1,i}^- \partial_{k,j}^+ A \ar[dd, "\partial_{k,j}^+ A"] \\
 & A & \\
\partial_{k-1,i}^+ \partial_{k,j}^- A \ar[rr, "\partial_{k,i}^+ A"'] && \partial_{k-1,i}^+ \partial_{k,j}^+ A
\end{tikzcd}
\end{equation*}
The arrows on the left indicate the two directions along which the
faces of the cell $A$ are drawn.
\emph{Degeneracies}, cells in the codomains of degeneracy maps, are
illustrated as follows, where boxes as those on the right have been
introduced in~\cite{AlAglBrownSteiner2002}:
\begin{equation*}
\begin{tikzcd}[global scale = 2 and 2 and 1 and 1.2]
\ar[r, shorten <= -5] \ar[d, shorten <= -7] & i \\
j & 
\end{tikzcd}
\qquad
\begin{tikzcd}[global scale = 2 and 2 and 1 and 1.2]
x \ar[rr, equal] \ar[dd, "f"'] && x \ar[dd, "f"] \\
 & \epsilon_{k,i} f & \\
y \ar[rr, equal] && y
\end{tikzcd}
\quad
\text{or}
\quad
\vcenter{\hbox{
\begin{tikzpicture}[global scale = 1 and 1]
\draw [-] (0,0) -- (1,0);
\draw [-] (0,1) -- (1,1);
\draw [-] (0,0) -- (0,1);
\draw [-] (1,0) -- (1,1);
\draw [-] (0.2,0.5) -- (0.8,0.5);
\end{tikzpicture}
}}
\qquad\qquad
\begin{tikzcd}[global scale = 2 and 2 and 1 and 1.2]
x \ar[rr, "f"] \ar[dd, equal] && y \ar[dd, equal] \\
 & \epsilon_{k,j} f & \\
x \ar[rr, "f"'] && y
\end{tikzcd}
\quad
\text{or}
\quad
\vcenter{\hbox{
\begin{tikzpicture}[global scale = 1 and 1]
\draw [-] (0,0) -- (1,0);
\draw [-] (0,1) -- (1,1);
\draw [-] (0,0) -- (0,1);
\draw [-] (1,0) -- (1,1);
\draw [-] (0.5,0.2) -- (0.5,0.8);
\end{tikzpicture}
}}
\end{equation*}
The arrows between the two copies of $x$ or $y$ are drawn as equality
arrows to indicate that these faces are themselves degenerate.

The $\circ_{k,i}$-composition of two $k$-cells $A$, $B$ in
direction $i$ glues these cells along $i$ if the upper faces of
the first cell in all other directions match the lower faces in all other
directions of the second:
\begin{equation*}
\begin{tikzcd}[global scale = 2 and 2 and 1 and 1.2]
\ar[r, shorten <= -5] \ar[d, shorten <= -7] & i \\
j & 
\end{tikzcd}
\qquad
\begin{tikzcd}[global scale = 2 and 2 and 1 and 1.2]
\ar[rr] \ar[dd, shorten <= -4] && \ar[dd, shorten <= -4, dotted, "f"] && \ar[rr] \ar[dd, shorten <= -4, dotted, "f"'] && \ar[dd, shorten <= -4] && \ar[rr] \ar[dd, shorten <= -4] && \ar[dd, shorten <= -4] \\
 & A && \circ_{k,i} && B && = && A\circ_{k,i}B & \\
\ar[rr] &&&& \ar[rr] &&&& \ar[rr] && 
\end{tikzcd}
\qquad
\text{or}
\qquad
\vcenter{\hbox{
\begin{tikzpicture}[global scale = 1 and 1]
\node [] () at (0.5,0.5) {$A$};
\node [] () at (1.5,0.5) {$B$};
\draw [-] (0,0) -- (2,0);
\draw [-] (0,1) -- (2,1);
\draw [-] (0,0) -- (0,1);
\draw [-] (1,0) -- (1,1);
\draw [-] (2,0) -- (2,1);
\end{tikzpicture}
}}
\end{equation*}
Such diagrams make it easy to check that the degeneracies
$\epsilon_{k,i}$ provide identities for the $\circ_{k,i}$-composition.

\emph{Connections} are cells in the codomains of the
connection maps~$\Gamma_{k,i}$. Their diagrams are as
follows~\cite{AlAglBrownSteiner2002}:
\begin{equation*}
\begin{tikzcd}[global scale = 2 and 2 and 1 and 1.2]
\ar[r, shorten <= -5] \ar[d, shorten <= -7] & i \\
j & 
\end{tikzcd}
\qquad
\begin{tikzcd}[global scale = 2 and 2 and 1 and 1.2]
x \ar[rr, "f"] \ar[dd, "f"'] && y \ar[dd, equal] \\
 & \Gamma_{k,i}^- f & \\
y \ar[rr, equal] && y
\end{tikzcd}
\quad
\text{or}
\quad
\vcenter{\hbox{
\begin{tikzpicture}[global scale = 1 and 1]
\draw [-] (0,0) -- (1,0);
\draw [-] (0,1) -- (1,1);
\draw [-] (0,0) -- (0,1);
\draw [-] (1,0) -- (1,1);
\draw [-] (0.2,0.5) -- (0.5,0.5) -- (0.5,0.8);
\end{tikzpicture}
}}
\qquad\qquad
\begin{tikzcd}[global scale = 2 and 2 and 1 and 1.2]
x \ar[rr, equal] \ar[dd, equal] && x \ar[dd, "f"] \\
 & \Gamma_{k,i}^+ f & \\
x \ar[rr, "f"'] && y
\end{tikzcd}
\quad
\text{or}
\quad
\vcenter{\hbox{
\begin{tikzpicture}[global scale = 1 and 1]
\draw [-] (0,0) -- (1,0);
\draw [-] (0,1) -- (1,1);
\draw [-] (0,0) -- (0,1);
\draw [-] (1,0) -- (1,1);
\draw [-] (0.5,0.2) -- (0.5,0.5) -- (0.8,0.5);
\end{tikzpicture}
}}
\end{equation*}

A cell in $\Ccal$ is \emph{thin} if it is a composite of degeneracies
and
connections~\cite{BrownHiggins1977,BrownHiggins1981,BrownHiggins1981b}. An
example is
\begin{equation*}
\begin{tikzpicture}[global scale = 1 and 1]
\draw [-] (0,0) -- (0,2);
\draw [-] (0,0) -- (3,0);
\draw [-] (0,1) -- (2,1);
\draw [-] (0,2) -- (3,2);
\draw [-] (1,1) -- (1,2);
\draw [-] (2,0) -- (2,2);
\draw [-] (3,0) -- (3,2);
\draw [-] (0.5,1.2) -- (0.5,1.5) -- (0.8,1.5);
\draw [-] (1,0.2) -- (1,0.8);
\draw [-] (1.2,1.5) -- (1.8,1.5);
\draw [-] (2.2,1) -- (2.5,1) -- (2.5,1.3);
\end{tikzpicture}
\end{equation*}
We follow common practice and omit dimension indices $k$ if suitable.

\subsection{Cubical $(\omega,p)$-categories and lax transformations}
\label{SS:Cubical_omega_p_LaxTransformations}

\subsubsection{Invertibility}
\label{SS:DefGrpdCubCat}

Invertible cubical cells were introduced by Brown and
Higgins~\cite{BrownHiggins1981} to define cubical
$\omega$-groupoids. Here we start with more general definitions for
cubical $(\omega,p)$-categories~\cite{LucasPhD2017}.
A $k$-cell $A$ of an $\omega$-category $\Ccal$ is
\emph{$R_{i}$-invertible}, for $1 \leq i \leq k$, if there is
a $k$-cell $B$ such that
\[
A \circ_{i} B = \epsilon_{i} \partial_{i}^- A
\qquad\text{and}\qquad
B \circ_{i} A = \epsilon_{i} \partial_{i}^+ A.
\]
The $k$-cell $B$ is thus uniquely defined and denoted
$R_{i} A$, using the (partial) \emph{inversion} map $R_{i}$.
A $k$-cell $A$ has an \emph{$R_{i}$-invertible shell}, for
$1 \leq i \leq k$, if
\begin{enumerate}
\item the cells $\partial_{j}^\alpha A$ are $R_{i-1}$-invertible, for every $1\leq j<i$, 
\item the cells $\partial_{j}^\alpha A$ are $R_{i}$-invertible, for every $i<j\leq k$.
\end{enumerate}

Inverting a $k$-cell $A$ along direction $i$ swaps the faces
$\partial_{i}^-A$, $\partial_{i}^+A$ and inverts all other faces:
\begin{equation*}
\begin{tikzcd}[global scale = 2 and 2 and 1 and 1.2]
\ar[r, shorten <= -5] \ar[d, shorten <= -7] & i \\
j & 
\end{tikzcd}
\qquad
\begin{tikzcd}[global scale = 2 and 2 and 1 and 1.2]
\ar[rr] \ar[dd, shorten <= -4, "\partial_{i}^-A"'] && \ar[dd, shorten <= -4, "\partial_{i}^+A"] \\
 & A & \\
\ar[rr] && 
\end{tikzcd}
\;\;
\overset{R_{i}}{\longmapsto}
\;\;
\begin{tikzcd}[global scale = 2 and 2 and 1 and 1.2]
\ar[dd, shorten <= -4, "\partial_{i}^+A"'] && \ar[ll] \ar[dd, shorten <= -4, "\partial_{i}^-A"] \\
 & R_{i}A & \\
 && \ar[ll]
\end{tikzcd}
\end{equation*}

Using the map $R_{i}$, Lucas~\cite{LucasPhD2017} introduced an alternative inversion map
\[
T_{i} A \: := \:  \left(\epsilon_{i} \partial_{i+1}^- A \circ_{i+1} \Gamma_{i}^+ \partial_{i}^+ A\right) 
\circ_{i} \left(R_{i} \left(\Gamma_{i}^+ \partial_{i+1}^- A \circ_{i+1} a \circ_{i+1} \Gamma_{i}^- \partial_{i+1}^+ A\right) \right) 
\circ_{i} \left(\Gamma_{i}^- \partial_{i}^- A \circ_{i+1} \epsilon_{i} \partial_{i+1}^+ A\right),
\]
for all $1\leq i<k$ and every $k$-cell $A$. The $T_i$ exchange the
faces of a cell $A$ between the directions $i$ and $(i+1)$ while
applying inversion maps to all other faces:
\begin{equation*}
\begin{tikzcd}[global scale = 2.5 and 2 and 1 and 1.3]
\ar[r, shorten <= -5] \ar[d, shorten <= -7] & i+1 \\
i &
\end{tikzcd}
\qquad
\begin{tikzcd}[global scale = 2 and 2 and 1 and 1.2, longer arrows = 0.4em and 0.4em]
\phantom{\circ} \ar[rr, "\partial_{i}^- A"] \ar[dd, "\partial_{i+1}^- A"'] && \phantom{\circ} \ar[dd, "\partial_{i+1}^+ A"] \\
 & A & \\
\phantom{\circ} \ar[rr, "\partial_{i}^+ A"'] && \phantom{\circ}
\end{tikzcd}
\;\;
\overset{T_{i}}{\longmapsto}
\;\;
\begin{tikzcd}[global scale = 2 and 2 and 1 and 1.2, longer arrows = 0.4em and 0.4em]
\phantom{\circ} \ar[rr, "\partial_{i+1}^- A"] \ar[dd, "\partial_{i}^- A"'] && \phantom{\circ} \ar[dd, "\partial_{i}^+ A"] \\
 & T_{i} A & \\
\phantom{\circ} \ar[rr, "\partial_{i+1}^+ A"'] && \phantom{\circ}
\end{tikzcd},
\end{equation*}
Additional properties of inversion maps, which are needed later, are
listed in Appendix~\SSS{AA:InversionMaps}.

\subsubsection{$(\omega,p)$-categories and $\omega$-groupoids} 
An \emph{$(\omega,p)$-category} $\Ccal$ is an $\omega$-category in
which every $k$-cell with an $R_{i}$-invertible shell is
$R_{i}$-invertible for all $k>p$ and $1 \leq i \leq k$.  A
\emph{functor of $(\omega,p)$-categories} is a functor between the
underlying $\omega$-categories.  An \emph{$\omega$-groupoid} is an
$(\omega,0)$-category.

\subsubsection{Lax transformations}
\label{D:DefLaxTransCubCat}
We recall Lucas' definition of lax transformations (called \emph{lax
  $1$-transfors} by him)~\cite{Lucas2018,LucasPhD2017}. They adapt
natural transformations to cubical categories. We use them to
define contractions of $(\omega,p)$-categories in
Section~\ref{SS:CubicalNormalisationStrategies}.

 A \emph{lax transformation} $\eta:F\Rightarrow G$ between
 $(\omega,p)$-functors $F,G:\Ccal\to \Dcal$ is a family of maps that sends
 each $k$-cell $x$ in $\Ccal$ to a $(k+1)$-cell $\eta_x$ in
 $\Dcal$, for every $k\in\mathbb{N}$.  It satisfies, for all
 $1\leq i \leq k$ and $k$-cells $x,y$ in $\Ccal$,
\begin{enumerate}
\item \label{I:AxiomLaxTransCubFaces}
if $i\neq 1$ then $\partial_{1}^-\eta_x = F(x)$,
  $\partial_{1}^+\eta_x = G(x)$ and
  $\partial_{i}^\alpha\eta_x = \eta_{\partial_{i-1}^\alpha
    x}$,
\item $\eta_{x\circ_{i} y} = \eta_x\circ_{i+1}\eta_y$ if $x$ and $y$ are $i$-composable,
\item $\eta_{\epsilon_{i} z} = \epsilon_{i+1} \eta_z$ if $k<n-1$,
\item $\eta_{\Gamma_{i}^\alpha z} = \Gamma_{i+1}^\alpha \eta_z$ if $i<k<n-1$.
\end{enumerate}

Axiom~\ref{I:AxiomLaxTransCubFaces}) indicates that $\sigma_x$ is a
transformation from $F(x)$ to $G(x)$, in the sense that its source and
target faces in direction $1$ are determined by $F(x)$ and $G(x)$,
respectively. Its faces in the other directions are determined by the
value of $\sigma$ at the faces of $x$, suggesting that $\sigma$ can be
defined recursively in the dimensions. The shape of $\sigma_x$ is
\begin{equation*}
\begin{tikzcd}[global scale = 2 and 2 and 1 and 1.2]
\ar[r, shorten <= -5] \ar[d, shorten <= -7] & i \\
1 & 
\end{tikzcd}
\qquad
\begin{tikzcd}[global scale = 3.3 and 2 and 1 and 1.2]
F(\partial_{i-1}^-x) \ar[rr, "F(x)"] \ar[dd, "\sigma_{\partial_{i-1}^-x}"'] && F(\partial_{i-1}^+x) \ar[dd, "\sigma_{\partial_{i-1}^+x}"] \\
& \Downarrow\sigma_x & \\
G(\partial_{i-1}^-x) \ar[rr, "G(x)"'] && G(\partial_{i-1}^+x)
\end{tikzcd}
\end{equation*}

\section{Cubical contractions and acyclicity}
\label{S:CubContrStrat}

In this section, we introduce contractions for cubical categories, extending the
corresponding notion for globular
categories~\cite{GuiraudMalbos12advances}, and generalising the
normalisation strategies of rewriting theory to higher dimensions.
The main result in this section,
Theorem~\ref{T:ContractingImpliesAcyclic}, shows that contracting
$\omega$-groupoids are acyclic, providing a constructive
method for proving acyclicity.
 
\subsection{Contractions}
\label{SS:CubicalNormalisationStrategies}

Defining contractions for an $(\omega,p)$-category $\Ccal$
requires a notion of section, and in turn the construction of quotient
$p$-categories on $(\omega,p)$-categories.

\subsubsection{}
The face maps in the coequaliser
\begin{equation*}
\begin{tikzcd}[global scale = 6 and 5 and 1 and 1.2]
\Ccal_{p+1} \ar[r, shift left = 0.5, "\partial_{1}^-"] \ar[r, shift right = 0.5, "\partial_{1}^+"'] & \Ccal_p \ar[r, "\pi"] & \overline{\Ccal}_p
\end{tikzcd}
\end{equation*}
in the category $\catego{Set}$ compare the two faces of a $(p+1)$-cell in direction
$1$.  We could have chosen any other direction~$i$ instead to
construct $\overline{\Ccal}_p$, as the following lemma shows.

\begin{lemma}
\label{L:QuotientSectionEquivDirections}
In every cubical $(\omega,p)$-category $\Ccal$, the following coequalisers
are equal for $2\leq j\leq p+1$:
\begin{equation*}
\begin{tikzcd}[global scale = 6 and 5 and 1 and 1.2]
\Ccal_{p+1} \ar[r, shift left = 0.5, "\partial_{1}^-"] \ar[r, shift right = 0.5, "\partial_{1}^+"'] & \Ccal_p \ar[r, "\pi"] & \overline{\Ccal_p}
\end{tikzcd}
\qquad\text{ and }\qquad
\begin{tikzcd}[global scale = 6 and 5 and 1 and 1.2]
\Ccal_{p+1} \ar[r, shift left = 0.5, "\partial_{j}^-"] \ar[r, shift right = 0.5, "\partial_{j}^+"'] & \Ccal_p \ar[r, "\pi"] & \overline{\Ccal_p}.
\end{tikzcd}
\end{equation*}
\end{lemma}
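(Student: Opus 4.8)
The coequaliser of $\partial_1^-,\partial_1^+$ in $\Set$ is the quotient $\Ccal_p/{\approx_1}$, where $\approx_1$ denotes the least equivalence relation on $\Ccal_p$ containing every pair $(\partial_1^- A,\partial_1^+ A)$ with $A\in\Ccal_{p+1}$; likewise the second coequaliser is $\Ccal_p/{\approx_j}$. Since both are quotients of the same set $\Ccal_p$ and the projection $\pi$ is forced by the universal property, the two coequalisers coincide exactly when $\approx_1={\approx_j}$ as equivalence relations. So the plan is to prove this equality. It suffices to prove the two inclusions $\approx_1\subseteq{\approx_j}$ and $\approx_j\subseteq{\approx_1}$; these are established by the same construction with the two directions interchanged, so I concentrate on the first, namely on showing that $\partial_1^- A\approx_j\partial_1^+ A$ for every $(p+1)$-cell $A$.

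The natural device for transporting a distinguished pair of faces from direction $1$ to direction $j$ is the transposition map $T_i$ of \SSS{SS:DefGrpdCubCat}, since iterating $T_1,\dots,T_{j-1}$ swaps direction $1$ up to direction $j$. However, $T_i$ is built from the inversion map $R_i$ and is only defined on a cell with an $R_i$-invertible shell; the shell of a generic $(p+1)$-cell $A$ consists of the $p$-cells $\partial_\bullet^\alpha A$, which need not be invertible in an $(\omega,p)$-category. This is the main obstacle: \emph{the naive transpose of $A$ simply does not exist}. The remedy is to replace the single transpose by a zigzag of $(p+1)$-cells built only from the connections $\Gamma_i^\alpha$ and the degeneracies $\epsilon_i$, which are available unconditionally.

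Concretely, I would first reduce to adjacent directions: it is enough to show $\approx_i={\approx_{i+1}}$ for $1\le i\le p$, for then $\approx_1=\cdots={\approx_{p+1}}$ gives the statement for every $j$. For adjacent directions I use the connections $\Gamma_i^\pm\colon\Ccal_p\to\Ccal_{p+1}$, whose opposite faces in direction $i+1$ are the input cell and a degeneracy: feeding $\partial_i^- A$ to $\Gamma_i^-$ and $\partial_i^+ A$ to $\Gamma_i^+$ exhibits each of $\partial_i^- A$ and $\partial_i^+ A$ as $\approx_{i+1}$-equivalent to an explicit degenerate $p$-cell, namely $\epsilon_i\partial_i^+\partial_i^- A$ and $\epsilon_i\partial_i^-\partial_i^+ A$, determined by the boundary of $A$. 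It then remains to connect these two degeneracies, which I do with degeneracies $\epsilon$ of the remaining faces $\partial_\ell^\alpha A$ ($\ell\neq i$) of $A$: because these boundary faces link the relevant lower cells, the associated degenerate cells are joined by a short $\approx_{i+1}$-chain. Composing these links yields the desired zigzag $\partial_i^- A\approx_{i+1}\partial_i^+ A$, and the reverse inclusion is obtained by the symmetric construction with $i$ and $i+1$ interchanged.

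Beyond the absence of inverses, the technical heart is bookkeeping: one must read off, from the cubical relations~\eqref{E:AxiomPreCubClass} and the connection axioms of Appendix~\ref{AA:AxiomsCubCatConnections}, the precise faces of each connection and degeneracy cell used, and verify that consecutive members of the zigzag share their direction-$(i+1)$ face \emph{exactly}. I expect this face-matching check, carried out uniformly in $p$ and $i$, to be the only delicate point; the conceptual content is captured entirely by the two moves ``bend a face to a degeneracy along a connection'' and ``slide between degeneracies via $\epsilon$''.
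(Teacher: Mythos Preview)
Your approach is correct and takes a genuinely different route from the paper's. The paper does \emph{not} avoid the transposition and inversion maps: instead it first reduces to $(p+1)$-cells $A$ whose faces in all directions other than $j$ are degenerate (this characterisation of the coequaliser relation uses that such cells compose in direction $j$ and, having thin shells, admit $R_j$-inverses in an $(\omega,p)$-category). On such a normalised $A$ the shell \emph{is} invertible, so the paper legitimately applies $R_1 T_2\cdots T_{j-1}$ to a composite of $A$ with extended connections, producing in one stroke a witness for direction $1$. Your zigzag of connections and degeneracies bypasses both the normalisation step and the invertibility hypothesis entirely; in fact your argument works in any cubical $\omega$-category with connections, not just in $(\omega,p)$-categories. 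The trade-off is that you must chain adjacent directions rather than jump from $j$ to $1$ at once, and the middle link of your zigzag (connecting the two degenerate endpoints $\epsilon_i\partial_i^+\partial_i^- A$ and $\epsilon_i\partial_i^-\partial_i^+ A$) is slightly more work than your sketch indicates. A cleaner variant is the three-step chain $\partial_i^- A \sim \epsilon_i\partial_i^+\partial_i^- A \sim \epsilon_i\partial_i^+\partial_i^+ A \sim \partial_i^+ A$, witnessed respectively by $\Gamma_i^-(\partial_i^- A)$, $\epsilon_i(\partial_{i+1}^+ A)$ and $\Gamma_i^-(\partial_i^+ A)$, which avoids having to bridge two distinct degenerate endpoints.
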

\begin{proof}
  Two $p$-cells $f$, $g$ in $\Ccal$ are in the same equivalence class
  of the second coequaliser if and only if there is a $(p+1)$-cell $A$
  in $\Ccal$ such that, for all $1 \leq i \leq p+1$ such that
  $i \neq j$,
\begin{equation*}
  \partial_{j}^- A = f,
  \qquad
  \partial_{j}^+ A = g,
  \qquad
\partial_{i}^\alpha A =
\begin{cases*}
\epsilon_{j-1} \partial_{i}^\alpha f & if $i<j$, \\
\epsilon_{j} \partial_{i-1}^\alpha f & if $i>j$.
\end{cases*}
\end{equation*}

These identities assemble to the diagram
\begin{equation*}
\begin{tikzcd}[global scale = 2 and 2 and 1 and 1.2]
\ar[r, shorten <= -5] \ar[d, shorten <= -7] & i \\
j & 
\end{tikzcd}
\qquad
\begin{tikzcd}[global scale = 2 and 2 and 1 and 1.2]
x \ar[rr, "f"] \ar[dd, equal] && y \ar[dd, equal] \\
 & A & \\
x \ar[rr, "g"'] && y
\end{tikzcd}
\end{equation*}
Let $A$ be such a cell and define
$B = R_1 T_2 \dots T_{j-1} \left( \extGamma{1}{j}{+} f \circ_j A
  \circ_j \extGamma{1}{j}{-} g \right)$ where $\extGamma{1}{j}{+}$ and
$\extGamma{1}{j}{-}$ are extended connections defined as
$\extGamma{l}{m}{\alpha} = T_{m-1} \dots T_{l+1} \Gamma_l^\alpha$, for
all $l<m$. The faces of $B$ in direction $1$ are equal to $f$ and $g$;
all others are degenerate. Thus $f$ and $g$ are in the same
equivalence class for the first coequaliser. The reverse direction is similar.
\end{proof}

\subsubsection{Quotient category $ \overline{\Ccal}_p$}
\label{SSS:QuotientCat}

We equip the set $\overline{\Ccal}_p$ with face, composition,
degeneracy and connection maps. For the composition map $\circ_i$, for
$i<p$, we write $X\times_{\Ccal_i}X$ for the pullback of
$X\oto{\partial_i^-}\Ccal_i\ofrom{\partial_i^+}X$ for any set $X$. We
use the coequaliser
\begin{equation*}
\begin{tikzcd}[global scale = 10 and 5 and 1 and 1.2]
\Ccal_{p+1}\times_{\Ccal_i}\Ccal_{p+1} 
\ar[r, shift left = 0.5, "\partial_1^-\times \partial_1^-"] \ar[r, shift right = 0.5, "\partial_1^+\times \partial_1^+"'] & \Ccal_p\times_{\Ccal_i}\Ccal_p \ar[r] & \overline{\Ccal_p\times_{\Ccal_i}\Ccal_p}\simeq\overline{\Ccal}_p\times_{\Ccal_i}\overline{\Ccal}_p
\end{tikzcd}
\end{equation*}
($\simeq$ is unique because coequalisers and pullbacks commute in
$\catego{Set}$) to define
$\circ_i:\overline{\Ccal}_p\times_{\Ccal_i}\overline{\Ccal}_p\to\overline{\Ccal}_p$
as the unique map for which the diagram
\begin{equation*}
\begin{tikzcd}[global scale = 8 and 5 and 1 and 1.2]
\Ccal_{p+1}\times_{\Ccal_i}\Ccal_{p+1} \ar[r, shift left = 0.5] \ar[r, shift right = 0.5] \ar[d, "\circ_i"'] & \Ccal_p\times_{\Ccal_i}\Ccal_p \ar[r] \ar[d, "\circ_i"'] & \overline{\Ccal}_p\times_{\Ccal_i}\overline{\Ccal}_p
\ar[d, dotted, "\circ_i"] \\
\Ccal_{p+1} \ar[r, shift left = 0.5] \ar[r, shift right = 0.5] & \Ccal_p \ar[r] & \overline{\Ccal}_p
\end{tikzcd}
\end{equation*}
commutes. Face, degeneracy and connection maps are defined likewise,
using the universal property of the coequaliser $\overline{\Ccal}_p$.
This extends $\Ccal_{p-1}$ to an $(\omega,p)$-category, also denoted
$\overline{\Ccal}_p$. Its $p$-cells are equivalence classes modulo
$\Ccal_{p+1}$, and it has degenerate and connection cells in
dimensions higher than $p$.

\subsubsection{Unital sections}
\label{SSS:DefSectionCub}
The canonical projection $(\omega,p)$-functor
$\pi:\Ccal\to\overline{\Ccal}_p$ is an identity on $k$-cells for
$k<p$. It sends $p$-cells to their equivalence classes in
$\overline{\Ccal}_p$ and $k$-cells of dimension $k>p$ to degenerate
cells.  The \emph{fibre} of $\pi$ over a $p$-cell $u$ in
$\overline{\Ccal}_p$ extends to the $(\omega,0)$-category $\Ccal_u$
defined as follows:
\begin{enumerate}
\item its $0$-cells are the $p$-cells $x$ in $\Ccal$ such that $\pi(x)=u$,
\item its $k$-cells are the $(p+k)$-cells $f$ in $\Ccal$ such that
  $\partial_{p+1,1}^{\alpha_1}\partial_{p+2,1}^{\alpha_2}\dots\partial_{p+k,1}^{\alpha_k}
  f\in u$, for every $k\geq1$, 
\item its face maps $\partial'^\alpha_{k,i}$ on $\Ccal_u$ are the
  $\partial_{p+k,i}^\alpha$, for all $1\leq i\leq k$,
\item likewise for the degeneracy, connection and composition maps.
\end{enumerate}

A \emph{section} of the projection $\pi:\Ccal\to\overline{\Ccal}_p$ is a family 
\begin{equation*}
\iota=(\iota_u:\onebf\to\Ccal_u)_{u\in\overline{\Ccal}_p}
\end{equation*}
of $(\omega,0)$-functors, where $\onebf$ is the terminal category in
$\Ccal$.  We only consider \emph{unital} sections, which satisfy
$\iota_{\pi(t)}=t$ for every thin $p$-cell $t$ in $\Ccal$ and for all
$p\geq1$, but usually omit this adjective.

The section $\iota$ sends each $p$-cell $u$ in $\overline{\Ccal}_p$ to
a functor $\iota_u$ with the representative $p$-cell of $u$ in $\Ccal$
in its image, while leaving all thin cells unchanged.  We write
$\iota_u$ for this representative of $u$ as well.  Moreover, for every
$k$-cell $f$ of $\Ccal$ with $p\leq k$ we write $\widehat{f}$ for the
image of $\iota_{\pi(f)}$ in $\Ccal_{\pi(f)}$ by abuse of
notation,. Example diagrams for sections are given
in~\SSS{SSS:IllustrContr}.

\subsubsection{Contractions}
\label{SSS:ContractionsCub}
Let $\iota$ be a section of the projection $\pi:\Ccal\to\overline{\Ccal}_p$. 
A \emph{$\iota$-contraction} of $\Ccal$  is a family~$\sigma$ of lax transformations
\begin{equation*}
\left(
\begin{tikzcd}[global scale = 3 and 2 and 1 and 1.2]
\Ccal_u \ar[rr, bend left, "\id", ""'{name=U}] \ar[dr, bend right, "\zeta"'] && \Ccal_u \\
& |[alias=D]| \onebf \ar[ur, bend right, "\iota_u"'] &
\ar[from=U, to=D, Rightarrow, "\sigma_u"]
\end{tikzcd}
\right)_{u\in\overline{\Ccal}_p}\,,
\end{equation*}
where $\zeta$ is the unique $(\omega,0)$-functor into $\onebf$, such that
\begin{eqn}{equation}
\label{E:ConditionContractions}
\sigma_{\iota_u}=\epsilon_{1}\iota_u
\qquad\text{and}\qquad
\sigma_{\sigma_f}=\Gamma_{1}^-\sigma_f, 
\end{eqn}
for each $u$ in $\overline{\Ccal}_p$ and $f$ in $\Ccal_k$ with
$p\leq k$, and where $\sigma_g$ stands for $(\sigma_{\pi(g)})_g$ for
each cell $g$ in $\Ccal_{\ell}$ for $p \leq \ell$.
  
Expanding this definition, a $\iota$-contraction $\sigma$ is a family
of maps $(\Ccal_k \to \Ccal_{k+1})_{k\ge p}$ such that for each
$k$-cells $f,g$ in $\Ccal$ and every $i$ with $p+1\leq i\leq k$, the
conditions {{\bf i)}-{\bf iv)} from \SSS{D:DefLaxTransCubCat} hold:
\begin{enumerate}
\item The \emph{boundary} $\partial(\sigma_f)$ is the $(k-1)$-square
  $f^\partial$ defined by
\begin{eqn}{equation*}
\partial_1^-f^\partial=f,
\qquad
\partial_1^+f^\partial=\epsilon_k\dots\epsilon_{p+1}\widehat{x},
\qquad
\partial_i^\alpha f^\partial=\sigma_{\partial_{i-1}^\alpha f},
\end{eqn}
which yields the diagram
\begin{equation*}
\begin{tikzcd}[global scale = 2 and 2 and 1 and 1.2]
\ar[r, shorten <= -5] \ar[d, shorten <= -7] & i \\
1 & 
\end{tikzcd}
\qquad
f^\partial = \begin{tikzcd}[global scale = 2 and 2 and 1 and 1.2]
  x \ar[rr, "f"] \ar[dd, "\sigma_x"'] && y \ar[dd, "\sigma_y"] \\
  && \\
  \widehat{x} \ar[rr, equal] && \widehat{y}
\end{tikzcd}
\end{equation*}
\item If $f$ and $g$ are $\circ_i$-composable, then
\begin{equation*}
\begin{tikzcd}[global scale = 1.3 and 1 and 1 and 1.2]
\ar[rr, shorten <= -5] \ar[dd, shorten <= -7] && i+1 \\\\
1 & 
\end{tikzcd}
\qquad
\sigma_{f \circ_i g} = \sigma_f\circ_{i+1}\sigma_g =
\begin{tikzcd}[global scale = 2 and 2 and 1 and 1.2]
x \ar[rr, "f"] \ar[dd, "\sigma_x"'] && y \ar[rr, "g"] \ar[dd, "\sigma_y"{pos=0.3}] && z \ar[dd, "\sigma_z"] \\
 & \sigma_f && \sigma_g & \\
\widehat{x} \ar[rr, equal] && \widehat{y} \ar[rr, equal] && \widehat{z}
\end{tikzcd}
\end{equation*}
\item
\begin{equation*}
\begin{tikzcd}[global scale = 1.3 and 1 and 1 and 1.2]
 & i+2 & \\
\ar[ur, shorten <= -3] \ar[rr, shorten <= -5] \ar[dd, shorten <= -7] && i+1 \\\\
1 && 
\end{tikzcd}
\qquad
\sigma_{\epsilon_i f} = \epsilon_{i+1} \sigma_f =
\begin{tikzcd}[global scale = 1.2 and 1.2 and 1 and 1.2]
 && y \ar[rrrr, equal] \ar[dddd] &&&& y \ar[dddd] \\
 &&& \epsilon_i f &&& \\
x \ar[uurr, "f"] \ar[rrrr, equal, crossing over] \ar[dddd] &&&& x
\ar[uurr] && \\
 & \sigma_f &&&&& \\
 && \widehat{y} \ar[rrrr, equal] &&&& \widehat{y} \\\\
\widehat{x} \ar[uurr, equal] \ar[rrrr, equal] &&&& \widehat{y}
\ar[uurr, equal] \ar[from=uuuu, crossing over]&& 
\end{tikzcd}
\end{equation*}
\item If $i<k$, then
\begin{equation*}
\begin{tikzcd}[global scale = 1.3 and 1 and 1 and 1.2]
 & i+2 & \\
\ar[ur, shorten <= -3] \ar[rr, shorten <= -5] \ar[dd, shorten <= -7] && i+1 \\\\
1 && 
\end{tikzcd}
\qquad
\sigma_{\Gamma_i^\alpha f} = \Gamma_{i+1}^\alpha \sigma_f =
\begin{tikzcd}[global scale = 1.2 and 1.2 and 1 and 1.2]
 && y \ar[rrrr, equal] \ar[dddd] &&&& y \ar[dddd] \\
 &&& \Gamma_i^\alpha f &&& \\
x \ar[uurr, "f"] \ar[rrrr, crossing over] \ar[dddd] &&&& y \ar[uurr, equal] && \\
 & \sigma_f &&&&& \\
 && \widehat{y} \ar[rrrr, equal] &&&& \widehat{y} \\\\
\widehat{x} \ar[uurr, equal] \ar[rrrr, equal] &&&& \widehat{y}
\ar[uurr, equal]\ar[from=uuuu, crossing over] && 
\end{tikzcd}
\end{equation*}
\end{enumerate}

In addition, the second condition in~\eqref{E:ConditionContractions}
expands as follows: $\sigma_{\sigma_f}$ is the thin cell
\begin{equation*}
\begin{tikzcd}[global scale = 1 and 1 and 1 and 1.2]
 & i & \\
\ar[ur, shorten <= -3] \ar[rr, shorten <= -5] \ar[dd, shorten <= -7] && 2 \\\\
1 && 
\end{tikzcd}
\qquad
\sigma_{\sigma_f} = \Gamma_1^- \sigma_f =
\begin{tikzcd}[global scale = 1.2 and 1.2 and 1 and 1.2]
 && y \ar[rrrr] \ar[dddd] &&&& \widehat{y} \ar[dddd, equal] \\
 &&& \sigma_f &&& \\
x \ar[uurr, "f"] \ar[rrrr ,crossing over] \ar[dddd] &&&& \widehat{x} \ar[uurr, equal] && \\
 & \sigma_f &&&&& \\
 && \widehat{y} \ar[rrrr, equal] &&&& \widehat{y} \\\\
\widehat{x} \ar[uurr, equal] \ar[rrrr, equal] &&&& \widehat{x}
\ar[uurr, equal]\ar[from=uuuu, equal, crossing over] && 
\end{tikzcd}
\end{equation*}

The first condition  in~\eqref{E:ConditionContractions} is equivalent to
$\sigma_{\widehat{x}}=\epsilon_{1}\widehat{x}$ for each $p$-cell
$x$ in $\Ccal$:
\begin{equation*}
\begin{tikzcd}[global scale = 4 and 2 and 1 and 1.2]
\widehat{x} \ar[r, equal, "\sigma_{\widehat{x}}"] & \widehat{x}
\end{tikzcd}
\end{equation*}

Examples of contractions in low dimensions are given
in~\SSS{SSS:IllustrContr}. Contractions, understood as families of lax
transformations, can be computed recursively across all dimensions,
starting from a chosen section. They are also compatible with
inverses, as stated in the following lemma.

\begin{lemma}
For every $k$-cell $f$ with $p\leq k$, and for all $1\leq i\leq k$ and $1\leq j<k$, 
\[
\sigma_{R_i f} = R_{i+1} \sigma_f
\qquad\text{ and }\qquad
\sigma_{T_j f} = T_{j+1} \sigma_f.
\]
\end{lemma}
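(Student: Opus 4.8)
The plan is to use that a contraction $\sigma$, regarded fibrewise as a family of lax transformations, commutes with every structural operation of the cubical category after the uniform shift $i\mapsto i+1$ of direction indices. Concretely, conditions \textbf{i)}--\textbf{iv)} of \SSS{SSS:ContractionsCub} give the face identity $\partial_{i+1}^\alpha\sigma_f=\sigma_{\partial_i^\alpha f}$, the composition identity $\sigma_{f\circ_i g}=\sigma_f\circ_{i+1}\sigma_g$, the degeneracy identity $\sigma_{\epsilon_i f}=\epsilon_{i+1}\sigma_f$, and the connection identity $\sigma_{\Gamma_i^\alpha f}=\Gamma_{i+1}^\alpha\sigma_f$. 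Since $R_i$ is characterised purely by composites, degeneracies and faces, and $T_j$ is defined in \SSS{SS:DefGrpdCubCat} as an explicit composite of faces, degeneracies, connections and a single $R_j$, both identities should follow by transporting $\sigma$ through these defining expressions. I would therefore establish the $R$-identity first and then feed it into the $T$-identity.

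For the first identity I would argue by the uniqueness of inverses recorded in \SSS{SS:DefGrpdCubCat}: it suffices to show that $\sigma_{R_i f}$ is a two-sided $\circ_{i+1}$-inverse of $\sigma_f$. Using successively the composition identity, the defining equation of $R_i$, the degeneracy identity and the face identity, I compute
\[
\sigma_f\circ_{i+1}\sigma_{R_i f}
=\sigma_{f\circ_i R_i f}
=\sigma_{\epsilon_i\partial_i^- f}
=\epsilon_{i+1}\sigma_{\partial_i^- f}
=\epsilon_{i+1}\partial_{i+1}^-\sigma_f,
\]
and symmetrically $\sigma_{R_i f}\circ_{i+1}\sigma_f=\epsilon_{i+1}\partial_{i+1}^+\sigma_f$. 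These are exactly the two equations defining $R_{i+1}\sigma_f$, so uniqueness yields $\sigma_{R_i f}=R_{i+1}\sigma_f$; in particular $\sigma_f$ is $R_{i+1}$-invertible whenever $f$ is $R_i$-invertible, which is what makes the right-hand side meaningful.

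For the second identity I would apply $\sigma$ to the explicit expression for $T_j f$ and distribute it over the three outer $\circ_j$-composites and the inner $\circ_{j+1}$-composites. Each outer $\circ_j$ becomes $\circ_{j+1}$ and each inner $\circ_{j+1}$ becomes $\circ_{j+2}$; the degeneracy factors $\epsilon_j\partial_{j+1}^\alpha f$ and connection factors $\Gamma_j^\alpha\partial_\bullet^\beta f$ are carried through by the degeneracy and connection identities, the central term $R_j(\cdots)$ by the $R$-identity just proved, and the face identity rewrites $\sigma_{\partial_j^\alpha f}$ as $\partial_{j+1}^\alpha\sigma_f$ and $\sigma_{\partial_{j+1}^\alpha f}$ as $\partial_{j+2}^\alpha\sigma_f$. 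Matching the resulting terms against the definition of $T_{j+1}$ applied to $\sigma_f$ then gives $\sigma_{T_j f}=T_{j+1}\sigma_f$.

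The main obstacle I anticipate is entirely the bookkeeping in this last step. One must keep the two distinct shifts straight (the outer composition shifting by one and the inner composition by two), verify that every intermediate cell still lies in the domain of $\sigma$ and carries the $R$-invertible shell needed for the $R$-identity and the inversion maps to be defined, and confirm that the faces produced by the face identity are precisely those demanded by the formula for $T_{j+1}$. Care is also needed to check that each commutation rule from \SSS{SSS:ContractionsCub} is genuinely applicable in the direction required by the ranges $1\le i\le k$ and $1\le j<k$; no idea beyond the uniqueness argument of the first part is expected, only this index discipline.
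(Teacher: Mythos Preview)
Your argument is correct and follows the same plan as the paper: for the $R$-identity, compute the composites $\sigma_f\circ_{i+1}\sigma_{R_if}$ and $\sigma_{R_if}\circ_{i+1}\sigma_f$ via the lax-transformation axioms and conclude by uniqueness; for the $T$-identity, push $\sigma$ through the defining formula using the $R$-identity just established. The only difference is cosmetic: the paper phrases the first step as ``these composites are thin'' and then invokes the result from \cite{LucasPhD2017} that thin cells with equal boundaries coincide, whereas your computation lands directly on the degeneracies $\epsilon_{i+1}\partial_{i+1}^{\mp}\sigma_f$ and needs only uniqueness of $\circ_{i+1}$-inverses---slightly more self-contained. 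For the $T$-identity the paper says only ``similar''; your explicit term-by-term transport is precisely how one would unpack that.
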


\begin{proof}
  For $\sigma_{R_if}=R_{i+1}\sigma_f$, we check that
  $\sigma_{R_if}\circ_{i+1}\sigma_f$ and
  $\sigma_f\circ_{i+1}\sigma_{R_if}$ are thin cells.  The claim then
  holds because thin cells with the same boundaries are
  equal~\cite{LucasPhD2017}.  The proof of
  $\sigma_{T_j f} = T_{j+1} \sigma_f$ is similar.
\end{proof}

\subsubsection{}
\label{SSS:IllustrContr}
We present example diagrams for $f^\partial$ and $\sigma_f$ for a cell
$f$ of low dimension in an $\omega$-groupoid $\Ccal$.

\begin{enumerate}
\item If $x\in \Ccal_0$, then $x^\partial$ is the $0$-square
  $(x,\widehat{x})$ and $\sigma_x : x \fl \widehat{x}$ the $1$-cell
  filling it.
\item If $f\in \Ccal_1$, then $f^\partial$ is an $1$-square and
  $\sigma_f$ a $2$-cell filling it:
\begin{equation*}
\begin{tikzcd}[global scale = 2 and 2 and 1 and 1.2]
\ar[r, shorten <= -5] \ar[d, shorten <= -7] & 2 \\
1 & 
\end{tikzcd}
\qquad
\begin{array}{cc}
f^\partial =
\begin{tikzcd}[global scale = 2 and 2 and 1 and 1.2]
x \ar[rr, "f"] \ar[dd, "\sigma_x"'] && y \ar[dd, "\sigma_y"] \\\\
\widehat{x} \ar[rr, equal] && \widehat{y}
\end{tikzcd}
 & \qquad
\begin{tikzcd}[global scale = 2 and 2 and 1 and 1.2]
x \ar[rr, "f"] \ar[dd, "\sigma_x"'] && y \ar[dd, "\sigma_y"] \\
 & \sigma_f & \\
\widehat{x} \ar[rr, equal] && \widehat{y}
\end{tikzcd}
\end{array}
\end{equation*}
\item If $A\in \Ccal_2$, then $A^\partial$ is a
  $2$-square and $\sigma_A$ a $3$-cell
  filling it:
\begin{equation*}
\begin{tikzcd}[global scale = 1 and 1 and 1 and 1.2]
 & 3 & \\
\ar[ur, shorten <= -2] \ar[rr, shorten <= -5] \ar[dd, shorten <= -7] && 2 \\\\
1 && 
\end{tikzcd}
\qquad
\begin{array}{cc}
A^\partial =
\begin{tikzcd}[global scale = 1.2 and 1.2 and 1 and 1.2]
 && y_3 \ar[rrrr] \ar[dddd] &&&& y \ar[dddd] \\
 &&& A &&& \\
x \ar[uurr] \ar[rrrr, crossing over] \ar[dddd] &&&& y_2 \ar[uurr] && \\
 & \sigma_{\partial_1^- A} &&&& \sigma_{\partial_1^+ A} & \\
 && \widehat{y_3} \ar[rrrr, equal] &&&& \widehat{y} \\\\
\widehat{x} \ar[uurr, equal] \ar[rrrr, equal] &&&& \widehat{y_2}
\ar[uurr, equal]\ar[from=uuuu, crossing over] && 
\end{tikzcd}
 & \qquad\qquad 
\begin{tikzcd}[global scale = 1.2 and 1.2 and 1 and 1.2]
 && y_3 \ar[rrrr] \ar[dddd] &&&& y \ar[dddd] \\\\
x \ar[uurr] \ar[rrrr, crossing over] \ar[dddd] &&&& y_2 \ar[uurr] && \\
 &&& \sigma_A &&& \\
 && \widehat{y_3} \ar[rrrr, equal] &&&& \widehat{y} \\\\
\widehat{x} \ar[uurr, equal] \ar[rrrr, equal] &&&& \widehat{y_2}
\ar[uurr, equal]\ar[from=uuuu, crossing over] && 
\end{tikzcd}
\end{array}
\end{equation*}
\end{enumerate}

\subsubsection{}
An $(\omega,p)$-category is \emph{contracting} if it admits a
contraction.
This property does not depend on particular choices of sections.  For each
$\widehat{(-)}$-contraction $\sigma$, we can define a
$\tilde{(-)}$-contraction $\tau$ such that, for every $k$-cell $f$,
with $p\leq k < n$, the $(k+1)$-cell $\tau_f$ is the composition
\begin{equation*}
\tau_f = \sigma_f\circ_1R_1\sigma_{\epsilon_k\dots\epsilon_{p+1}\tilde{x}}
= \sigma_f\circ_1\epsilon_{k+1}\dots\epsilon_{p+2}R_1\sigma_{\tilde{x}},
\end{equation*}
where $x=\partial_{p+1}^-\dots\partial_k^-f$.  For $p=0$ and
$x\in\Ccal_0$, for instance, $\tau_x$ is the $\circ_1$-composition
\begin{equation*}
\begin{tikzcd}[global scale = 4.6 and 1.5 and 1 and 1.2]
x \ar[r, "\sigma_x"] & \widehat{x} \ar[r, "R_1\sigma_{\tilde{x}}"] & \tilde{x}
\end{tikzcd}
\end{equation*}
and for $f\in\Ccal_1$, $\tau_f$ is the $\circ_1$-composition
\begin{equation*}
\begin{tikzcd}[global scale = 3.7 and 2 and 1 and 1.2]
x \ar[rr, "f"] \ar[dd] && y \ar[dd] \\
 & \sigma_f & \\
\widehat{x} \ar[rr, equal] \ar[dd] && \widehat{y} \ar[dd] \\
 & R_1\sigma_{\epsilon_1\tilde{x}}=\epsilon_2R_1\sigma_{\tilde{x}} & \\
\tilde{x} \ar[rr, equal] && \tilde{y}
\end{tikzcd}
\end{equation*}

\subsection{Acyclic $\omega$-groupoids}
\label{SS:ContractionAcyclicity}

We now show that acyclicity of $\omega$-groupoids can be obtained by
constructing contractions. Our proof unfolds cubes into cubes with
degenerate faces in each direction $i \geq 2$ using folding and
unfolding maps~\cite[Def. 3.1]{AlAglBrownSteiner2002}.

\subsubsection{Acyclicity}
\label{SSS:SquaresFillersAcyclicity}

Defining acyclicity for a cubical $(\omega,p)$-category
$\Ccal$ requires three further notions:

\begin{enumerate}
\item A $k$-\emph{square} of $\Ccal$, for $k\geq0$, is a family
$(f_i^\alpha)_{1\leq i\leq k+1,\alpha}$ of $k$-cells in $\Ccal$ such
that
\begin{eqn}{equation}
\label{E:SquareEquations}
\partial_{i}^\alpha f_j^\beta=\partial_{j-1}^\beta f_i^\alpha,
\end{eqn}
for all $1\leq i<j\leq k+1$. We write $\Squa{k}(\Ccal)$ for the
set of $k$-squares of $\Ccal$.

\item The \emph{boundary} $\partial A$ of a $k$-cell $A$ in $\Ccal$,
  for $k \geq 1$, is the $(k-1)$-square
  $(\partial_i^\alpha A)_{1\leq i\leq k,\alpha}$.

\item A \emph{filler} of a $k$-square $S$ is a $(k+1)$-cell $A$ such that
$\partial A =S$. 
\end{enumerate}
An $(\omega,p)$-category $\Ccal$ is \emph{acyclic} if, for $k\geq p$,
every $k$-square of $\Ccal$ has a filler.

The following diagrams show  a $2$-cell $A$ and its boundary $1$-square:
\begin{equation*}
\begin{tikzcd}[global scale = 2 and 2 and 1 and 1.2]
\ar[r, shorten <= -5] \ar[d, shorten <= -7] & 2 \\
1
\end{tikzcd}
\qquad
\begin{tikzcd}[global scale = 2 and 2 and 1 and 1.2]
 \ar[rr, "\partial_{1}^- A"] \ar[dd, "\partial_{2}^- A"'] && \ar[dd, "\partial_{2}^+ A"] \\
 & \Downarrow A & \\
 \ar[rr,"\partial_{1}^+ A"'] && 
\end{tikzcd}
\qquad\quad
\partial A = 
\begin{tikzcd}[global scale = 2 and 2 and 1 and 1.2]
 \ar[rr, "\partial_{1}^- A"] \ar[dd, "\partial_{2}^- A"'] && \ar[dd, "\partial_{2}^+ A"] \\\\
 \ar[rr,"\partial_{1}^+ A"'] && 
\end{tikzcd}
\end{equation*}

\subsubsection{Folding and unfolding}
\label{SSS:FoldingUnfolding}
Let $\Ccal$ be an $\omega$-category. The \emph{folding maps}
$\psi_i,\Psi_j,\Phi_k:\Ccal_m\to\Ccal_m$ are defined, for
$1\leq i\leq m-1$, $1\leq j\leq m$ and $0\leq k\leq m$ as
\begin{align*}
\psi_i(x) & =  \Gamma_i^+ \partial_{i+1}^- x \circ_{i+1} x \circ_{i+1} \Gamma_i^- \partial_{i+1}^+ x
= 
\vcenter{\hbox{
\begin{tikzpicture}[global scale = 1 and 1]
\node [] () at (1.5,0.5) {$x$};
\draw [-] (0,0) -- (3,0);
\draw [-] (0,1) -- (3,1);
\draw [-] (0,0) -- (0,1);
\draw [-] (1,0) -- (1,1);
\draw [-] (2,0) -- (2,1);
\draw [-] (3,0) -- (3,1);
\draw [-] (0.5,0.2) -- (0.5,0.5) -- (0.8,0.5);
\draw [-] (2.2,0.5) -- (2.5,0.5) -- (2.5,0.8);
\end{tikzpicture}
}}\;,\\
\Psi_j &=  \begin{cases*}
	\id & if $j=1$, \\
	\psi_{j-1} \Psi_{j-1} & otherwise
	\end{cases*}	
= 
\psi_{j-1} \psi_{j-2} \dots \psi_1,\\
\Phi_k & =  \begin{cases*}
	\id & if $k=0$, \\
	\Phi_{k-1} \Psi_k & otherwise
	\end{cases*}
=  
\Psi_1 \Psi_2 \dots \Psi_k
= 
\psi_1 (\psi_2 \psi_1) \dots (\psi_{k-1} \dots \psi_1).
\end{align*}
They extend to maps from $(m-1)$-squares to 
$(m-1)$-squares~\cite[Prop. 8.5]{AlAglBrownSteiner2002}.

Consider the sets
\[
  \SquaFill{m-1}{\varphi}  =  \{(S,A)\in\Squa{m-1}(\Ccal)\times\Ccal_m
  \mid \partial A=\varphi(S)\}
\]
of squares with corresponding fillers, for $\varphi\in\{\psi_i,\Psi_j,\Phi_k\}$.
The \emph{unfolding maps}
$\overline{\psi}_i:\SquaFill{m-1}{\psi_i}\to\Ccal_m$,
$\overline{\Psi}_j:\SquaFill{m-1}{\Psi_j}\to\Ccal_m$ and
$\overline{\Phi}_k:\SquaFill{m-1}{\Phi_k}\to\Ccal_m$ are defined as
\begin{align*}
\overline{\psi}_i(S,A)
& = 
(\epsilon_i S_i^- \circ_{i+1} \Gamma_i^+ S_{i+1}^+) 
 \circ_i A 
 \circ_i (\Gamma_i^- S_{i+1}^- \circ_{i+1} \epsilon_i S_i^+)
= 
\vcenter{\hbox{
\begin{tikzpicture}[global scale = 1 and 1]
\node [] () at (1.5,1.5) {$A$};
\draw [-] (0,0) -- (3,0);
\draw [-] (0,1) -- (3,1);
\draw [-] (0,2) -- (3,2);
\draw [-] (0,3) -- (3,3);
\draw [-] (0,0) -- (0,3);
\draw [-] (1,0) -- (1,1);
\draw [-] (1,2) -- (1,3);
\draw [-] (2,0) -- (2,1);
\draw [-] (2,2) -- (2,3);
\draw [-] (3,0) -- (3,3);
\draw [-] (0.2,0.5) -- (0.5,0.5) -- (0.5,0.8);
\draw [-] (1.5,0.2) -- (1.5,0.8);
\draw [-] (1.5,2.2) -- (1.5,2.8);
\draw [-] (2.5,2.2) -- (2.5,2.5) -- (2.8,2.5);
\end{tikzpicture}
}}\\
\overline{\Psi}_j(S,A) & = 
	\begin{cases*}
	A & if $j=1$, \\
	\overline{\Psi}_{j-1}(S,\overline{\psi}_{j-1}(\Psi_{j-1}(S),A)) & otherwise,
	\end{cases*}
\\
\overline{\Phi}_k(S,A) & = 
	\begin{cases*}
	A & if $k=0$, \\
	\overline \Psi_k(S,\overline\Phi_{k-1}(\Psi_k(S),A)) & otherwise.
	\end{cases*}
\end{align*}

\begin{lemma}
\label{L:UnFoldingProperties}
Every folding or unfolding map
$\overline\varphi\in\{\overline\psi_i,\overline\Psi_j,\overline\Phi_k\}$
satisfies $\partial \overline{\varphi}(S,A) = S$, for every
$(m-1)$-square $S$ and $m$-cell $A$ such that
$\partial A = \varphi(S)$.
\end{lemma}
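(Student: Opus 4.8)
The plan is to prove the three assertions in the order $\overline\psi_i$, $\overline\Psi_j$, $\overline\Phi_k$: the statement for $\overline\psi_i$ serves as the base case, and the statements for $\overline\Psi_j$ and $\overline\Phi_k$ then follow by induction on $j$ and on $k$. Throughout I write $S_l^\beta$ for the components of the $(m-1)$-square $S$ and $\partial_l^\beta$ for the face maps; the only tools are the face-composition, face-degeneracy and face-connection relations of Appendix~\ref{AA:AxiomsCubCat} and Appendix~\ref{AA:AxiomsCubCatConnections}, together with the compatibility of $\psi_i$ with faces of squares recorded in \cite[Prop. 8.5]{AlAglBrownSteiner2002}.

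For the base case I would compute $\partial_l^\beta\overline\psi_i(S,A)$ for every $1\leq l\leq m$ and $\beta$ from
\[
\overline\psi_i(S,A) = (\epsilon_i S_i^- \circ_{i+1} \Gamma_i^+ S_{i+1}^+) \circ_i A \circ_i (\Gamma_i^- S_{i+1}^- \circ_{i+1} \epsilon_i S_i^+),
\]
using the hypothesis $\partial A=\psi_i(S)$, and check that the outcome is $S_l^\beta$. There are three cases. If $l=i$, the outer composite is along direction $i$, so $\partial_i^-$ selects the left factor and $\partial_i^+$ the right one; distributing these faces over the inner $\circ_{i+1}$, the face-degeneracy law $\partial_i^\alpha\epsilon_i=\id$ produces $S_i^\alpha$, while the $i$-faces of the connection cells $\Gamma_i^\pm S_{i+1}^\mp$ are degenerate and act as identities for $\circ_i$, leaving exactly $S_i^-$ and $S_i^+$. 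If $l=i+1$, the face $\partial_{i+1}^\beta$ commutes with the outer $\circ_i$; the middle contribution is the folded face $\partial_{i+1}^\beta A=(\psi_i(S))_{i+1}^\beta$, and the connection wings $\Gamma_i^\pm$ are chosen precisely so that gluing them onto this folded face unfolds it back to $S_{i+1}^\beta$. If $l\neq i,i+1$, the face commutes with both compositions (with the usual index shift), one has $\partial_l^\beta A=\psi_i(S_l^\beta)$ by the face-compatibility of $\psi_i$, and the square relations~\eqref{E:SquareEquations} rewrite the wings in terms of the faces of $w:=S_l^\beta$; the resulting composite is exactly $\overline\psi_i(\partial w,\psi_i(w))$, which equals $w$ by the cell-level identity $\overline\psi_i(\partial v,\psi_i(v))=v$, valid for every cell $v$ and expressing that unfolding inverts folding.

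The inductive steps are formal. For $\overline\Psi_j$ I argue by induction on $j$: the case $j=1$ is immediate, since $\overline\Psi_1(S,A)=A$ and $\partial A=\Psi_1(S)=S$. For $j>1$ the factorisation $\Psi_j=\psi_{j-1}\Psi_{j-1}$ turns $\partial A=\Psi_j(S)$ into $\partial A=\psi_{j-1}(\Psi_{j-1}(S))$, so $(\Psi_{j-1}(S),A)\in\SquaFill{m-1}{\psi_{j-1}}$ and, by the base case, $A':=\overline\psi_{j-1}(\Psi_{j-1}(S),A)$ satisfies $\partial A'=\Psi_{j-1}(S)$; the induction hypothesis then gives $\partial\overline\Psi_{j-1}(S,A')=S$, and by definition $\overline\Psi_j(S,A)=\overline\Psi_{j-1}(S,A')$. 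The case of $\overline\Phi_k$ is entirely analogous, by induction on $k$ with trivial base $k=0$: the factorisation $\Phi_k=\Phi_{k-1}\Psi_k$ gives $(\Psi_k(S),A)\in\SquaFill{m-1}{\Phi_{k-1}}$, the induction hypothesis yields $\partial\overline\Phi_{k-1}(\Psi_k(S),A)=\Psi_k(S)$, and the already-established statement for $\overline\Psi_k$ finishes the argument since $\overline\Phi_k(S,A)=\overline\Psi_k(S,\overline\Phi_{k-1}(\Psi_k(S),A))$.

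The main obstacle is the base case, and within it the subcases $l=i+1$ and $l\neq i,i+1$. Both reduce to showing that the connection and degeneracy cells attached by $\overline\psi_i$ exactly cancel the folding performed by $\psi_i$; this is a purely connection-theoretic identity that must be extracted from the interchange law and the connection axioms of Appendix~\ref{AA:AxiomsCubCatConnections}, and it is where careful bookkeeping of the index shifts produced by pushing $\partial_l^\beta$ through $\circ_i$ and $\circ_{i+1}$ is essential. By contrast, the subcase $l=i$ and the two inductions on $j$ and $k$ are routine.
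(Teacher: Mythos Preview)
Your proposal is correct and follows exactly the same line as the paper's proof, which reads in full: ``The proof of $\overline{\psi}_i$ is straightforward. Those for $\overline{\Psi}_j$ and $\overline{\Phi}_k$ follow by induction.'' You have simply supplied the details behind ``straightforward'': the case split on $l$ relative to $i,i+1$, the reduction of the generic face to the cell-level identity that unfolding inverts folding, and the two formal inductions on $j$ and $k$. One small remark: in the case $l<i$ the index shift pushes everything down by one, so the identity you invoke is $\overline\psi_{i-1}(\partial w,\psi_{i-1}(w))=w$ rather than $\overline\psi_i$, and symmetrically there is no shift for $l>i+1$; this does not affect the argument.
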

\begin{proof}
  The proof of $\overline{\psi}_i$ is straightforward.  Those
  for $\overline{\Psi}_j$ and $\overline{\Phi}_k$ follow by induction.
\end{proof}

We are now prepared for the main result of this section.
\begin{theorem}
\label{T:ContractingImpliesAcyclic}
Every contracting $\omega$-groupoid is acyclic.
\end{theorem}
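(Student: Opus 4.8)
The plan is to reduce the existence of a filler for an arbitrary square to the case of a \emph{folded} square, whose faces in all directions $i \geq 2$ are thin, and to produce such a filler directly from the contraction. Fix a $k$-square $S$ of $\Ccal$ with $k \geq 0$; since $\Ccal$ is an $\omega$-groupoid we must exhibit a $(k+1)$-cell $A$ with $\partial A = S$. First I would apply the complete folding $\Phi_{k+1}$ of \SSS{SSS:FoldingUnfolding} to obtain the $k$-square $\Phi_{k+1}(S)$, all of whose faces in directions $i \geq 2$ are thin, the entire content being concentrated in the two direction-$1$ faces $u := \partial_1^- \Phi_{k+1}(S)$ and $v := \partial_1^+ \Phi_{k+1}(S)$~\cite{AlAglBrownSteiner2002}. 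Granting a filler $A'$ of the folded square, that is a $(k+1)$-cell with $\partial A' = \Phi_{k+1}(S)$, the unfolding $\overline{\Phi}_{k+1}(S, A')$ is the desired filler of $S$: indeed $\partial\, \overline{\Phi}_{k+1}(S, A') = S$ by \Cref{L:UnFoldingProperties}, applied with $\varphi = \Phi_{k+1}$. Everything therefore reduces to filling the folded square.

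For this central step I would set
\[
A' := \sigma_u \circ_1 R_1 \sigma_v,
\]
using that $\Ccal$, being an $\omega$-groupoid, is $R_1$-invertible in the relevant dimension. The contraction supplies the two $(k+1)$-cells $\sigma_u$ and $\sigma_v$, each running in direction $1$ from $u$, respectively $v$, to the thin cell $\epsilon_k \cdots \epsilon_1 \widehat{x}$, where $x$ is the common source corner of $u$ and $v$; since $u$ and $v$ share this corner the two targets coincide, so the composite is defined and one reads off $\partial_1^- A' = u$ and $\partial_1^+ A' = \partial_1^- \sigma_v = v$. The role of the inverse $R_1 \sigma_v$ is to collapse the transverse faces: distributing faces over the direction-$1$ composite gives, for $i \geq 2$,
\[
\partial_i^\alpha A' = \sigma_{\partial_{i-1}^\alpha u} \circ_1 R_1 \sigma_{\partial_{i-1}^\alpha v},
\]
and because $\partial_{i-1}^\alpha u = \partial_{i-1}^\alpha v$ in the folded cube, this is a cell composed with its $R_1$-inverse, hence a thin cell by the defining property of $R_1$.

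The main obstacle I expect is precisely the verification that $\partial A' = \Phi_{k+1}(S)$ on the nose, that is, that the transverse faces computed above agree with the prescribed thin faces of the folded square. This rests on two ingredients. First, the contraction axioms~iii) and~iv) of \SSS{SSS:ContractionsCub}, together with the normalisation $\sigma_{\widehat{x}} = \epsilon_1 \widehat{x}$ from~\eqref{E:ConditionContractions}, ensure that $\sigma$ and $R_1$ send thin cells to thin cells, so that each $\partial_i^\alpha A'$ is thin. Second, since both $\partial_i^\alpha A'$ and the corresponding face of $\Phi_{k+1}(S)$ are thin cells with the same boundary — the latter being forced by the square equations~\eqref{E:SquareEquations} once the direction-$1$ faces match — they coincide, because thin cells with equal boundary are equal~\cite{LucasPhD2017}. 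The careful bookkeeping here, which propagates the identification through the recursively defined faces of $\sigma_u$ and $\sigma_v$ and uses the coherence condition $\sigma_{\sigma_f} = \Gamma_1^- \sigma_f$ to control the iterated contraction cells, is the technical heart of the argument; once it is in place, $A'$ fills $\Phi_{k+1}(S)$ and $\overline{\Phi}_{k+1}(S, A')$ fills $S$, establishing acyclicity.
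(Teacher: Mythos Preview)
Your proposal is correct and follows essentially the same route as the paper: fold the square with $\Phi$, fill the folded square by $\sigma_u \circ_1 R_1 \sigma_v$, then unfold via $\overline{\Phi}$ and Lemma~\ref{L:UnFoldingProperties}. The one place where you make life harder than necessary is the verification of the transverse faces: you plan to argue that $\partial_i^\alpha A'$ and the $i$-th face of $\Phi_{k+1}(S)$ are both thin with the same boundary and then invoke uniqueness of thin cells, and you anticipate needing the coherence condition $\sigma_{\sigma_f}=\Gamma_1^-\sigma_f$ to control this. In fact none of that is required: since $\partial_{i-1}^\alpha u=\partial_{i-1}^\alpha v$, the defining equation for $R_1$ gives directly
\[
\partial_i^\alpha A' \;=\; \sigma_{\partial_{i-1}^\alpha u}\circ_1 R_1\sigma_{\partial_{i-1}^\alpha u} \;=\; \epsilon_1\partial_1^-\sigma_{\partial_{i-1}^\alpha u} \;=\; \epsilon_1\,\partial_{i-1}^\alpha u,
\]
which is exactly the $i$-th face of the folded square by \cite[Prop.~3.6]{AlAglBrownSteiner2002}. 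So the ``technical heart'' you flag is a one-line computation, and the second contraction axiom in~\eqref{E:ConditionContractions} plays no role in this proof.
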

\begin{proof}
  Suppose $\Ccal$ is an $\omega$-groupoid with a section
  $\widehat{(-)}$ of the projection $\pi:\Ccal\to\overline{\Ccal}_0$ and a contraction~$\sigma$.  For $m\geq 2$, let $S$ be
  an $(m-1)$-square. We set $T=\Phi_m(S)$, $g^\alpha = T_1^\alpha$ and
  $A = \sigma_{g^-} \circ_1 R_1\sigma_{g^+}$. Then
  \begin{equation*}
    T_k^\alpha = \epsilon_1 \partial_1^- T_k^\alpha = \epsilon_1
  \partial_1^+ T_k^\alpha = \epsilon_1 \partial_k^\alpha g^- =
  \epsilon_1 \partial_k^\alpha g^+
\end{equation*}
for every $1<k\leq m$ by
  \cite[Prop. 3.6]{AlAglBrownSteiner2002}.  It follows that
  $\partial_k^\alpha g^- = \partial_k^\alpha g^+$, for every
  $1<k\leq m$. Hence $A$ is a filler of $T$, because, for $1<k\leq m$,
\begin{equation*} \partial_k^\alpha A =
    \partial_k^\alpha \sigma_{g^-} \circ_1 R_1 \partial_k^\alpha
    \sigma_{g^+} = \sigma_{\partial_{k-1}^\alpha g^-} \circ_1 R_1
    \sigma_{\partial_{k-1}^\alpha g^+} = \sigma_{\partial_{k-1}^\alpha
      g^-} \circ_1 R_1 \sigma_{\partial_{k-1}^\alpha g^-} = \epsilon_1
    \partial_1^- \sigma_{\partial_{k-1}^\alpha g^-} = T_k^\alpha,
\end{equation*}
and the case $k=1$ is obvious. Finally, set
$B=\overline{\Phi}_m(S,A)$. By the above calculation and
Lemma~\ref{L:UnFoldingProperties}, $\partial B=S$, that is, $B$
is a filler of $S$ and acyclicity of $\Ccal$ follows.
\end{proof}

\subsubsection{The case $n=2$}
Theorem~\ref{T:ContractingImpliesAcyclic} remains valid for
$n$-groupoids with $n\geq 2$. First, the definitions of sections and
contraction in~\SSS{SSS:DefSectionCub}
and~\SSS{SSS:ContractionsCub} extend to $n$-groupoids, forgetting
all cells of dimension greater than $n$. The proof replays that for
$\omega$-groupoids, except that only $(m-1)$-squares with
$2\leq m\leq n$ require consideration. As an example, we show that
every contracting $2$-groupoid $\Ccal$ is acyclic.  Suppose $\Ccal$
has a $\widehat{(-)}$-contraction $\sigma$.  We start with a
$1$-square
\begin{equation*}
S=
\begin{tikzcd}[global scale = 4 and 4 and 1 and 1.2]
a \ar[r, "S_1^-"] \ar[d, "S_2^-"'] & b \ar[d, "S_2^+"] \\
c \ar[r, "S_1^+"'] & d
\end{tikzcd}.
\end{equation*}
The folding maps yield the $1$-square
\begin{equation*}
T=\Phi_2(S)=\Psi_2(S)=\psi_1(S)=
\begin{tikzcd}[global scale = 2 and 2 and 1 and 1.2]
a \ar[equal, rr] \ar[equal, dd] && a \ar[rr, "S_1^-"] \ar[dd] && b \ar[rr, "S_2^+"] \ar[dd] && d \ar[equal, dd] \\
& \Gamma && S && \rotatebox[origin=c]{180}{$\Gamma$} & \\
a \ar[rr, "S_2^-"'] && c \ar[rr, "S_1^+"'] && d \ar[equal, rr] && d
\end{tikzcd}
=
\begin{tikzcd}[global scale = 4 and 4 and 1 and 1.2]
a \ar[r, "S_1^-"] \ar[d, equal] & b \ar[r, "S_2^+"] & d \ar[d, equal] \\
a \ar[r, "S_2^-"'] & c \ar[r, "S_1^+"'] & d
\end{tikzcd}
.
\end{equation*}\
The contraction $\sigma$ fills the $1$-square $T$ with
\begin{equation*}
A=
\begin{tikzcd}[global scale = 4 and 2 and 1 and 1.2]
a \ar[rr, "T_1^- = S_1^- \circ_1 S_2^+"] \ar[dd, "\sigma_x"'] && d \ar[dd, "\sigma_{x'}"] \\
 & \sigma_{T_1^-} & \\
\widehat{a} \ar[rr, equal] && \widehat{d} \\
 & R_1\sigma_{T_1^+} & \\
a \ar[uu, "\sigma_x"] \ar[rr, "T_1^+ = S_2^- \circ_1 S_1^+"'] && d \ar[uu, "\sigma_{x'}"']
\end{tikzcd}
=
\begin{tikzcd}[global scale = 4 and 2 and 1 and 1.2]
a \ar[rr, "T_1^-"] \ar[dd, equal] && d \ar[dd, equal] \\
 & \sigma_{T_1^-} \circ_1 R_1\sigma_{T_1^+} & \\
a \ar[rr, "T_1^+"'] && d
\end{tikzcd}
.
\end{equation*}
The unfolding maps then allow us to construct the following filler of
$S$, showing that $\Ccal$ is acyclic:
\begin{equation*}
B=\overline{\Phi}_2(S,A)=\overline{\Psi}_2(S,A)=\overline{\psi}_1(S,A)=
\begin{tikzcd}[global scale = 2 and 2 and 1 and 1.2]
a \ar[equal, rr] \ar[equal, dd] && a \ar[rr, "S_1^-"] \ar[equal, dd] && b \ar[equal, rr] \ar[equal, dd] && b \ar[dd, "S_2^+"] \\
&&& \vert && \Gamma & \\
a \ar[equal, rr] \ar[equal, dd] && a \ar[rr] && b \ar[rr] && d \ar[equal, dd] \\
&&& A &&& \\
a \ar[rr] \ar[dd, "S_2^-"'] && c \ar[rr] \ar[equal, dd] && d \ar[equal, rr] \ar[equal, dd] && d \ar[equal, dd] \\
& \rotatebox[origin=c]{180}{$\Gamma$} && \vert &&& \\
c \ar[equal, rr] && c \ar[rr, "S_1^+"'] && d \ar[equal, rr] && d
\end{tikzcd}
\end{equation*}

  Lucas has established a variant of
  Theorem~\ref{T:ContractingImpliesAcyclic} for cubical monoidal
  $(2,0)$-poly\-graphs~\cite{Lucas20}. Instead of using folding and
  unfolding maps, he rotates cells with the same shape as
  contractions with the inversion maps $R_{i}$ and $T_{i}$, and then
  glues them using connection maps. Folding and unfolding maps seem to
  make the proof for cubical $\omega$-groupoids easier. These
  maps rotate all the faces of cubes in direction $1$, so that the
  proof does not become more difficult with increasing dimension.

\section{Cubical coherent confluence}
\label{S:CCConfluence}

We now use the cubical machinery introduced in the previous section to
establish confluence properties of abstract rewriting systems (ARS) in
cubical $(p+2)$- or $(p+3)$-categories, for any $p\in\Nbb$.  Although
cubes have $(p+2)$ dimensions, we restricted rewriting relations in
two or three fixed directions. Apart from coherent versions of
Newman's lemma and the Church-Rosser theorem in two directions, we
also prove Newman's lemma also in three directions, for which
additional structure was present or a specific cube law had to be
imposed
previously~\cite{LevyPhD78,Barendregt84,EndrullisKlop2019,Klop2022}.
The coherence in these results expresses the way to tile confluence or local confluence diagrams by pasting a given set of higher-dimensional witnesses.
Finally, as a special case of Theorem~\ref{T:ContractingImpliesAcyclic}, we derive a cubical version
of Squier’s theorem, using normalisation strategies as special kinds
of sections and contractions.  We assume familiarity with the basics
of classical
rewriting~\cite{DershowitzJouannaud90,Klop92,BaaderNipkow98,Terese03}.

\subsection{Confluence fillers}
\label{SS:ConfluenceFillers}

\subsubsection{Abstract rewriting in cubical categories}
\label{SSS:RewritingSystemInCubCat}

Let $\Ccal$ be a $(p+2)$-category for some $p\in\Nbb$.
We fix an integer $i$ such that $1\leq i\leq p-1$, representing a choice of direction.
A \emph{$p$-ARS in $\Ccal$} is a subset~$\Xcal_{\Ccal}$ of $\Ccal_{p+1}$, whose elements are non-degenerate in direction $i$.
We write $\cstabs{\Xcal_{\Ccal}}$ (resp. $\cstabsr{\Xcal_{\Ccal}}$) 
for the smallest subsets of $\Ccal_{p+1}$ that contain $\Xcal_{\Ccal}$ 
and are stable under $\circ_i$-compositions (resp. $\circ_i$-compositions and inversions). 
The elements of $\cstabs{\Xcal_{\Ccal}}$ are sequences $(f_1,\ldots,f_k)$ of $\circ_i$-composable $(p+1)$-cells in $\Ccal$, called \emph{rewriting paths of length $k$}, which we identify with their composite $f_1\circ_i \ldots \circ_i f_k$ in $\Ccal$. 
The elements of~$\cstabsr{\Xcal_{\Ccal}}$ are sequences of $\circ_i$-composable $(p+1)$-cells in $\Ccal$ and their $R_i$-inverses, called \emph{rewriting zigzags}.

The $p$-ARS $\Xcal_{\Ccal}$ is \emph{Noetherian} (in direction $i$) if
it admits no rewriting path of infinite length.  This property is
needed for proofs by induction on rewriting paths.

A \emph{branching (in direction $i$)} of $\Xcal_{\Ccal}$ is a
pair $(f_1,f_2)$ of $(p+1)$-cells in $\cstabs{\Xcal_{\Ccal}}$ such that
$\partial_{i}^-f_1=\partial_{i}^-f_2$. It is \emph{local} if $f_1,f_2\in\Xcal_{\Ccal}$.
We denote by $\Brc{}{\Xcal_{\Ccal}}$ (resp. $\LBrc{}{\Xcal_{\Ccal}}$) the set of branchings (resp. local branchings) of $\Xcal_{\Ccal}$.
The $p$-ARS $\Xcal_{\Ccal}$ is \emph{(locally) confluent (in
  direction $i$)} if for every (local) branching $(f_1,f_2)$ of $\Xcal_{\Ccal}$,
there are $g_1,g_2\in\cstabs{\Xcal}$ such that
\begin{equation*}
\partial_{i}^+f_1 = \partial_{i}^-g_1,
\qquad
\partial_{i}^+f_2 = \partial_{i}^-g_2,
\qquad
\partial_{i}^+g_1 = \partial_{i}^+g_2.
\end{equation*}
These identities determine the \emph{confluence diagram}
\begin{eqn}{equation}
\label{E:ConflDiag}
\begin{tikzcd}[global scale = 2.5 and 2 and 1 and 1.2]
\ar[r, shorten <= -5] \ar[d, shorten <= -7] & i+1 \\
i & 
\end{tikzcd}
\qquad
\begin{tikzcd}[global scale = 2 and 2 and 1 and 1.2]
x \ar[rr, "f_2"] \ar[dd, "f_1"'] && y_2 \ar[dd, "g_2"] \\
&& \\
y_1 \ar[rr, "g_1"'] && z
\end{tikzcd}
\end{eqn}
The $p$-ARS $\Xcal_{\Ccal}$ is \emph{convergent} if it is confluent and Noetherian.

\subsubsection{Confluence fillers}
A \emph{(local) confluence filler (in direction $i$)} of a (local) branching
$(f_1,f_2)$ of $\Xcal_{\Ccal}$ is a $(p+2)$-cell $\filler{2}{f_1,f_2}$ in $\Ccal$ such that
\begin{equation*}
\partial_{i}^-\filler{2}{f_1,f_2} = f_2,
\qquad
\partial_{i+1}^-\filler{2}{f_1,f_2} = f_1,
\qquad
\partial_{i}^+\filler{2}{f_1,f_2},\partial_{i+1}^+\filler{2}{f_1,f_2} \in \cstabs{\Xcal_{\Ccal}}.
\end{equation*}
This determines a (local) confluence diagram similar to~\eqref{E:ConflDiag}:
\begin{equation*}
\begin{tikzcd}[global scale = 2.25 and 2 and 1 and 1.2]
\ar[r, shorten <= -5] \ar[d, shorten <= -7] & i+1 \\
i & 
\end{tikzcd}
\qquad
  \begin{tikzcd}[global scale = 2.5 and 2.1 and 1 and 1.2]
  x \ar[rr, "f_2"] \ar[dd, "f_1", swap] && y_2 \ar[dd,
  "\partial_{i+1}^+\filler{2}{f_1,f_2}"] \\
  & \filler{2}{f_1,f_2} & \\
  y_1 \ar[rr, "\partial_{i}^+\filler{2}{f_1,f_2}", swap] && z
  \end{tikzcd}
\end{equation*}
We write $\LCf{}{\Xcal_{\Ccal}}$ (resp. $\Cf{}{\Xcal_{\Ccal}}$) for
the subset of $\Cr_{p+2}$ of cells with the shape of a local confluence
filler (resp. confluence filler), that is, $(p+2)$-cells $A$ such that
\[
\partial_i^+(A),\partial_{i+1}^+(A)\in \Xcal_{\Ccal}^{\circ_i},
\qquad
\partial_i^-(A),\partial_{i+1}^-(A)\in \Xcal_{\Ccal},
\quad
\text{(resp.}\;\;
\partial_i^-(A),\partial_{i+1}^-(A)\in \Xcal_{\Ccal}^{\circ_i}).
\]
Therefore, $A_2$ defines a map 
$A_2 : \LBrc{}{\Xcal_{\Ccal}} \fl \LCf{}{\Xcal_{\Ccal}}$ 
(resp. $A_2 : \Brc{}{\Xcal_{\Ccal}} \fl \Cf{}{\Xcal_{\Ccal}}$).

We can now state and prove a coherent cubical version of Newman’s
lemma.

\begin{proposition}
\label{P:CubicalCoherentNewmanVersionDimP}
For a Noetherian $p$-ARS $\Xcal_{\Ccal}$, each map $A_2$ extends from
$\LBrc{}{\Xcal_{\Ccal}} \fl \LCf{}{\Xcal_{\Ccal}}$ to~$\Brc{}{\Xcal_{\Ccal}} \fl \Cf{}{\Xcal_{\Ccal}}$.
\end{proposition}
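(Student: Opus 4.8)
The plan is to build the extension by well-founded (Noetherian) induction on the source of a branching, tiling a general confluence filler out of the given local ones together with smaller confluence fillers supplied by the induction hypothesis. Since $\Xcal_{\Ccal}$ is Noetherian in direction $i$, the relation ``$y<x$ iff $x$ reduces to $y$ along a non-degenerate rewriting path'' is well founded, so I argue by induction on the common source $x=\partial_i^-f_1=\partial_i^-f_2$ of a branching $(f_1,f_2)$, assuming that a confluence filler in $\Cf{}{\Xcal_{\Ccal}}$ has already been produced for every branching whose source is a proper reduct of $x$.

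For the inductive step I first dispose of the degenerate case: if one branch, say $f_1$, is degenerate in direction $i$, then $f_2$ already completes the square and the filler is a suitable thin cell, i.e. a composite of degeneracies $\epsilon_j$ and connections $\Gamma_j^\alpha$ built from $f_2$; symmetrically if $f_2$ is trivial. Otherwise both branches have length at least one, and I decompose them as $f_1=e_1\circ_i f_1'$ and $f_2=e_2\circ_i f_2'$ with leading steps $e_1,e_2\in\Xcal_{\Ccal}$. The local branching $(e_1,e_2)$ carries the prescribed local filler $A_2(e_1,e_2)\in\LCf{}{\Xcal_{\Ccal}}$, whose positive faces $p_1=\partial_i^+A_2(e_1,e_2)$ and $p_2=\partial_{i+1}^+A_2(e_1,e_2)$ lie in $\cstabs{\Xcal_{\Ccal}}$ and resolve the peak to a common cell.

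The branching $(f_1',p_1)$ has source $\partial_i^+e_1$, and the branching $(f_2',\,p_2\circ_i q_2)$ has source $\partial_i^+e_2$, where $q_2$ denotes the half of the first inductive filler continuing the common reduct; both sources are proper reducts of $x$ since $e_1,e_2$ are non-degenerate. The induction hypothesis therefore yields confluence fillers $B_1$ for $(f_1',p_1)$ and $B_2$ for $(f_2',\,p_2\circ_i q_2)$, and the desired cell is assembled by pasting these three pieces in directions $i$ and $i+1$, schematically $\filler{2}{f_1,f_2}=\bigl(A_2(e_1,e_2)\circ_i B_1\bigr)\circ_{i+1}B_2$ up to padding. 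The composition axioms and the cubical relations \eqref{E:AxiomPreCubClass} then force $\partial_i^-\filler{2}{f_1,f_2}=f_2$, $\partial_{i+1}^-\filler{2}{f_1,f_2}=f_1$, and the positive faces to lie in $\cstabs{\Xcal_{\Ccal}}$, so the result is a genuine confluence filler. When $(f_1,f_2)$ is itself local, the two inductive pieces become degenerate and the unit laws for $\circ_i$ and $\circ_{i+1}$ collapse the composite back to $A_2(e_1,e_2)$, which is precisely the compatibility of the extension with the given map on $\LBrc{}{\Xcal_{\Ccal}}$.

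I expect the main obstacle to be the cubical bookkeeping of this pasting rather than the rewriting logic, which is the classical Newman argument. Concretely, the pieces $A_2(e_1,e_2)$, $B_1$ and $B_2$ do not fit into one rectangle on the nose: their shared faces are rewriting paths of differing lengths, so one must insert the correct degeneracies $\epsilon_j$ and connections $\Gamma_j^\alpha$ — in the spirit of the folding computations of \SSS{SSS:FoldingUnfolding} — to make the $\circ_i$- and $\circ_{i+1}$-composites defined, and then verify, using the cubical and connection axioms, that all auxiliary faces are degenerate and that the four boundary faces emerge as claimed. The remaining points to nail down are the closure of the positive faces under $\circ_i$ (so that they stay in $\cstabs{\Xcal_{\Ccal}}$) and the collapse to $A_2$ on local branchings.
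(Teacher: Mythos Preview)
Your approach is the paper's: Noetherian induction on the source, decompose each branch into first-step-plus-tail, apply the given $A_2$ to the local peak and the induction hypothesis to the two remaining branchings, and paste the three squares (your left-column-then-right tiling is the mirror image of the paper's top-row-then-bottom, which is immaterial). Contrary to your expectation, no padding with degeneracies or connections is needed: the three confluence fillers share faces on the nose---each common edge is simultaneously a $\partial_i^+$- or $\partial_{i+1}^+$-face of one piece and the matching $\partial^-$-face of the next---so the $\circ_i$- and $\circ_{i+1}$-composites are defined directly and the boundary check is a one-line application of the face-of-composite axioms in Appendix~\ref{AA:AxiomsCubCat}.
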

\begin{proof}
  We extend the map $A_2$ by Noetherian induction in direction $i$ on
  the source of branchings. We order $p$-cells by the relation
  $\leq$ generated by $\Xcal_{\Ccal}$, defined by $x\leq y$ if there
  is a rewriting path $f$ such that $\partial_i^-f=y$ and
  $\partial_i^+f=x$.

  The base case is trivial. For the induction step, le $(f_1,f_2)$ be
  a branching. If $f_1$ is a degeneracy in direction $i$, the result
  is trivial, as the map $A_2$ is extended by the formula
  $\filler{2}{f_1,f_2}=\epsilon_if_2$. The case where $f_2$ is a
  degeneracy in direction $i$ is similar. In the other cases, we
  decompose $f_1=g_1\circ_{i}h_1$ and $f_2=g_2\circ_{i}h_2$, with
  $g_1,g_2\in\Xcal_{\Ccal}$ and $h_1,h_2\in\cstabs{\Xcal_{\Ccal}}$ and
  extend $A_2$ recursively as
\[
\filler{2}{f_1,f_2} = \left(\filler{2}{g_1,g_2} \circ_{i+1} \filler{2}{\partial_i^+\filler{2}{g_1,g_2},h_2}\right) \circ_i \filler{2}{h_1,\partial_i^+\left(\filler{2}{g_1,g_2}\circ_{i+1}\filler{2}{\partial_i^+\filler{2}{g_1,g_2},h_2}\right)}.
\]
This pasting of cubes ressembles the classical diagrammatic
proof of Newman's lemma:
\begin{eqn}{equation}
\label{E:CubicalSchemeNewman2}
\begin{tikzcd}[global scale = 2.5 and 2 and 1 and 1]
\ar[r, shorten <= -6] \ar[d, shorten <= -8] & i+1 \\
i &
\end{tikzcd}
\hspace{6em}
\begin{tikzcd}[global scale = 5 and 2 and 1 and 1.3, longer arrows = 2 and 2]
x \ar[rr, "g_2"] \ar[dd, "g_1"'] && x' \ar[rr, "h_2"] \ar[dd] && x'' \ar[dd] \\
& \filler{2}{g_1,g_2} && \filler{2}{\partial_i^+\filler{2}{g_1,g_2},h_2} & \\
y \ar[rr] \ar[dd, "h_1"'] && y' \ar[rr] && y'' \ar[dd] \\
&& \filler{2}{h_1,\partial_i^+\left(\filler{2}{g_1,g_2}\circ_{i+1}\filler{2}{\partial_i^+\filler{2}{g_1,g_2},h_2}\right)} && \\
z \ar[rrrr] &&&& z'
\end{tikzcd}
\end{eqn}
\end{proof}

\subsubsection{Church-Rosser fillers} 
A \emph{Church-Rosser filler (in direction $i$)} of a cell $f$ in $\cstabsr{\Xcal_{\Ccal}}$ is a $(p+2)$-cell $\fillerCR{f}$ in $\Ccal$ such that 
\[
\partial_{i}^-\fillerCR{f} = f,
\qquad
\partial_{i}^+\fillerCR{f} = \epsilon_{i}\partial_{i}^+\partial_{i+1}^+\fillerCR{f},
\qquad
\partial_{i+1}^-\fillerCR{f},\partial_{i+1}^+\fillerCR{f} \in \cstabs{\Xcal},
\]
which determines the \emph{Church-Rosser diagram}
\begin{equation*}
\begin{tikzcd}[global scale = 2.5 and 2 and 1 and 1.2]
\ar[r, shorten <= -5] \ar[d, shorten <= -7] & i+1 \\
i & 
\end{tikzcd}
\qquad
\begin{tikzcd}[global scale = 2 and 2 and 1 and 1.2]
x \ar[rr, "f"] \ar[dd, "\partial_{i+1}^-\fillerCR{f}", swap] && y \ar[dd, "\partial_{i+1}^+\fillerCR{f}"] \\
& \fillerCR{f} & \\
z \ar[rr, equal] && z
\end{tikzcd}
\end{equation*}
Once again this correspondence defines a map
$B : \cstabsr{\Xcal_{\Ccal}} \fl \CR{\Xcal_{\Ccal}}$, were
$\CR{\Xcal_{\Ccal}}$ denotes the subset of $\Ccal_{p+2}$ of cells of
the shape of a Church-Rosser fillers.

With these definitions, we prove a coherent cubical version of the
Church-Rosser theorem.

\begin{proposition}
\label{P:CubicalCoherentChurchRosserVersionDimP}
For a $p$-ARS $\Xcal_{\Ccal}$ in a $(p+2,p+1)$-category $\Ccal$, each
map $A_2 : \Brc{}{\Xcal_{\Ccal}} \fl \Cf{}{\Xcal_{\Ccal}}$ induces a
map $B : \cstabsr{\Xcal_{\Ccal}} \fl \CR{\Xcal_{\Ccal}}$.
\end{proposition}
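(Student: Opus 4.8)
The plan is to build $B$ by induction on the length of a rewriting zigzag, transcribing into cubical pasting the classical derivation of the Church--Rosser property from confluence. Throughout I use that the hypothesis supplies $A_2$ on \emph{all} branchings (full confluence, as obtained in Proposition~\ref{P:CubicalCoherentNewmanVersionDimP}), and that in a $(p+2,p+1)$-category every $(p+2)$-cell is $R_j$-invertible, so that the confluence cells and connection cells assembled below may be freely inverted and glued; the $R_i$-inverses of the $(p+1)$-cells occurring in a zigzag exist by the very definition of $\cstabsr{\Xcal_{\Ccal}}$. The recursion peels off the last step: each nonempty zigzag is written as $f=f'\circ_i s$ with $f'$ a shorter zigzag and $s$ a single generator, either a forward step $s\in\Xcal_{\Ccal}$ or a backward step $s=R_i t$ with $t\in\Xcal_{\Ccal}$.

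For the base cases, the empty zigzag at a vertex $x$ is filled by the iterated degeneracy $\epsilon_{i+1}\epsilon_i x$, all of whose faces are degenerate, and a single forward step $s\colon x\to y$ is filled by the connection $\Gamma_i^- s$, whose top face is $s$, whose two faces in direction $i+1$ are $s$ and a degeneracy (both in $\cstabs{\Xcal_{\Ccal}}$), and whose bottom face is degenerate; these satisfy the defining conditions of a Church--Rosser filler. A single backward step is the case $f'$ empty of the inductive step below, so it needs no separate treatment.

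For the inductive step let $\fillerCR{f'}$ be the filler given by induction, with direction-$(i+1)$ faces the forward paths $p'\colon x\to z'$ and $q'\colon y'\to z'$ and degenerate direction-$i$ target at $z'$. If $s=R_i t$ is backward, then $t$ is a forward step into $y'$, and it suffices to prepend $t$ to $q'$: gluing to $\fillerCR{f'}$ a connection cell carrying $t$ yields $\fillerCR{f}$ with the same common reduct $z'$, its right direction-$(i+1)$ face becoming $t\circ_i q'\in\cstabs{\Xcal_{\Ccal}}$ and no confluence being needed. If $s\colon y'\to y$ is forward, then $q'$ and $s$ form a branching at $y'$, which I resolve by the confluence filler $\filler{2}{s,q'}$, producing forward paths to a new common reduct $z$; pasting this filler to $\fillerCR{f'}$ along the shared face $q'$ and composing in directions $i$ and $i+1$ --- exactly as in the tiling~\eqref{E:CubicalSchemeNewman2} used for Newman's lemma --- gives $\fillerCR{f}$. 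In either case $f\mapsto\fillerCR{f}$ lands in $\CR{\Xcal_{\Ccal}}$, so $A_2$ induces the desired map $B$.

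The hard part is the bookkeeping of boundaries at each paste. One must check that the shared faces genuinely coincide so that the $\circ_i$- and $\circ_{i+1}$-composites are defined, that the resulting direction-$(i+1)$ faces stay in $\cstabs{\Xcal_{\Ccal}}$, and --- most delicately --- that the direction-$i$ target face remains degenerate, i.e.\ $\partial_i^+\fillerCR{f}=\epsilon_i\partial_i^+\partial_{i+1}^+\fillerCR{f}$. The genuine obstacle is that the face $q'$ shared between $\fillerCR{f'}$ and the confluence filler sits in direction $i+1$ in the former but in direction $i$ in the latter, so the cells do not compose on the nose; realigning them forces the systematic use of the connections $\Gamma_i^\pm$ and the inversions $R_i, R_{i+1}, T_j$ (whose compatibility with the relevant faces is recorded in Appendix~\SSS{AA:InversionMaps}) to rotate the backward steps into forward position before gluing. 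As with the construction of $A_2$ in Proposition~\ref{P:CubicalCoherentNewmanVersionDimP}, the recursion rests on a chosen factorisation of each zigzag into single steps, and the well-definedness of $B$ as a map on $\cstabsr{\Xcal_{\Ccal}}$ is understood in the same sense as there.
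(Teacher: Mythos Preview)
Your argument is correct and follows the same inductive scheme as the paper's: induction on zigzag length, handling one orientation of the peeled step with a bare connection and the other by invoking $A_2$ on the branching formed with the forward path supplied by the induction hypothesis. The differences are cosmetic. The paper peels off the \emph{first} segment rather than the last, so for it the forward case is the easy one (a $\Gamma_i^-$ whiskered by a degeneracy) while the backward case needs $A_2$---the mirror image of your split---and it decomposes into maximal forward/backward segments in $\cstabs{\Xcal_{\Ccal}}$ rather than single generators. The paper also spells out the pasting formulas explicitly; in particular, after gluing in the confluence filler it appends a final $\Gamma_i^-$ layer in direction $i$ to force the bottom face back to a degeneracy, which is exactly the delicate point you flag but leave to bookkeeping.
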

\begin{proof}
  Every cell $f$ in $\cstabsr{\Xcal_{\Ccal}}$ is an zigzag
  $f_1\circ_{i}\dots\circ_{i}f_k$ of minimal length $k$ of
  non-$\circ_{i}$-identity cells in $\cstabs{\Xcal_{\Ccal}}$ and of
  $R_{i}$-inverses of such cells.  We define the map $B$ on cells of
  $\cstabsr{\Xcal_{\Ccal}}$ by induction on their length $k$.  The
  base case $k = 1$ is trivial. For the induction step, for $k\geq2$
  and $f_1\in\cstabs{\Xcal_{\Ccal}}$, we extend~$B$ recursively:
\begin{equation*}
\fillerCR{f} = \left(\Gamma_{i}^-f_1\circ_{i+1}\epsilon_{i}(f_2\circ_{i}\dots\circ_{i}f_k)\right)\circ_{i}\fillerCR{f_2\circ_{i}\dots\circ_{i}f_k},
\end{equation*}
which corresponds to the diagram
\begin{eqn}{equation}
\begin{tikzcd}[global scale = 2.5 and 2 and 1 and 1]
\ar[r, shorten <= -6] \ar[d, shorten <= -8] & i+1 \\
i & 
\end{tikzcd}
\hspace{4em}
\begin{tikzcd}[global scale = 2.5 and 2 and 1 and 1.3]
x \ar[rr, "f_1"] \ar[dd, "f_1"'] && x' \ar[rrrr, leftrightarrow, "f_2\circ_{i}\dots\circ_{i}f_k"] \ar[dd, equal] &&&& x'' \ar[dd, equal] \\
& \Gamma_{i}^- &&& \epsilon_{i} && \\
x' \ar[rr, equal] \ar[dd] && x' \ar[rrrr, leftrightarrow, "f_2\circ_{i}\dots\circ_{i}f_k"] &&&& x'' \ar[dd] \\
&&& \fillerCR{f_2\circ_{i}\dots\circ_{i}f_k} &&& \\
y \ar[rrrrrr, equal] &&&&&& y
\end{tikzcd}
\end{eqn}

Otherwise, for $k\geq2$ and $R_{i}f_1\in\cstabs{\Xcal_{\Ccal}}$, we
extend $B$ recursively using the map $A_2$:
\begin{gather*}
\fillerCR{f} = \left(\epsilon_{i}f_1\circ_{i+1}\Gamma_{i}^+f_1\circ_{i+1}\fillerCR{f_2\circ_{i}\dots\circ_{i}f_k}\right)
\circ_{i}\left(R_{i+1}\Gamma_{i}^-f_1\circ_{i+1}\filler{2}{f_1,g}\right)
\circ_{i}\Gamma_{i}^-\partial_{i}^+\filler{2}{f_1,g},
\end{gather*}
where $g$ denotes the $(p+1)$-cell
$\partial_{i+1}^-\fillerCR{f_2\circ_{i}\dots\circ_{i}f_k}$. This
corresponds to the diagram
\begin{eqn}{equation}
\begin{tikzcd}[global scale = 2.5 and 2 and 1 and 1]
\ar[r, shorten <= -6] \ar[d, shorten <= -8] & i+1 \\
i & 
\end{tikzcd}
\hspace{4em}
\begin{tikzcd}[global scale = 2.5 and 2 and 1 and 1.3]
x \ar[dd, equal] && x' \ar[ll, "f_1"'] \ar[rr, equal] \ar[dd, equal] && x' \ar[dd, "g"'] \ar[rrrr, leftrightarrow, "f_2\circ_{i}\dots\circ_{i}f_k"] &&&& x'' \ar[dd] \\
& \epsilon_{i} && \Gamma_{i}^+ &&& \fillerCR{f_2\circ_{i}\dots\circ_{i}f_k} && \\
x \ar[dd, equal] && x' \ar[ll] \ar[rr] \ar[dd] && y \ar[rrrr, equal] &&&& y \ar[dd] \\
& R_{i+1}\Gamma_{i}^- &&&& \filler{2}{f_1,g} &&& \\
x \ar[rr, equal] \ar[dd] && x \ar[rrrrrr] &&&&&& z \ar[dd, equal] \\
&&&& \Gamma_{i}^- &&&& \\
z \ar[rrrrrrrr, equal] &&&&&&&& z
\end{tikzcd}
\end{eqn}
\end{proof}

The diagrams in the proof of the coherent Church-Rosser theorem reduce
to the familiar triangular shapes in the classical diagrammatic
Church-Rosser proof once degeneracies are collapsed and the
corresponding $p$-cells are identified.

\subsection{$3$-Confluence and the cube law}
\label{SS:ThreeConflucenceCubeLaw}

\subsubsection{$3$-confluence fillers}
A \emph{$3$-branching (in direction $i$)} of a $p$-ARS $\Xcal_{\Ccal}$ in a $(p+3)$-category $\Ccal$ is a triple $(f_1,f_2,f_3)$ of $(p+1)$-cells in $\cstabs{\Xcal_{\Ccal}}$ such that $\partial_{i}^-f_1=\partial_{i}^-f_2=\partial_{i}^- f_3$. 
It is \emph{local} if $f_1,f_2,f_3\in\Xcal_{\Ccal}$.
We denote by $\Brc{3}{\Xcal_{\Ccal}}$ (resp. $\LBrc{3}{\Xcal_{\Ccal}}$) the set of $3$-branchings (resp. local $3$-branchings) of $\Xcal_{\Ccal}$.

A \emph{(local) $3$-confluence filler} with respect to a map $A_2 : \Brc{}{\Xcal_{\Ccal}} \fl \Cf{}{\Xcal_{\Ccal}}$ of a (local) $3$-branching $(f_1,f_2,f_3)$ is a $(p+3)$-cell
$\filler{3}{f_1,f_2,f_3}$ in $\Ccal$ with faces 
\begin{align*}
\partial_{i}^-\filler{3}{f_1,f_2,f_3} & =  \filler{2}{f_2,f_3},
&\partial_{i}^+\filler{3}{f_1,f_2,f_3} & =  \filler{2}{\partial_{i}^+\filler{2}{f_1,f_2},\partial_{i}^+\filler{2}{f_1,f_3}}, \\
\partial_{i+1}^-\filler{3}{f_1,f_2,f_3} & =  \filler{2}{f_1,f_3},
&\partial_{i+1}^+\filler{3}{f_1,f_2,f_3} & =  \filler{2}{\partial_{i+1}^+\filler{2}{f_1,f_2},\partial_{i}^+\filler{2}{f_2,f_3}}, \\
\partial_{i+2}^-\filler{3}{f_1,f_2,f_3} & =  \filler{2}{f_1,f_2},
&\partial_{i+2}^+\filler{3}{f_1,f_2,f_3} & =  \filler{2}{\partial_{i+1}^+\filler{2}{f_1,f_3},\partial_{i+1}^+\filler{2}{f_2,f_3}}.
\end{align*}
We write $\Cf{3}{\Xcal_{\Ccal},A_2}$
(resp. $\LCf{3}{\Xcal_{\Ccal},A_2}$) for the set of confluence fillers
(resp. local confluence fillers) with respect to the map $A_2$.

These definitions allow us to prove a coherent cubical
Newman's lemma in three directions, and thus in three dimensions.

\begin{proposition}
\label{P:CubicalCoherentNewmanDim3VersionDimP}
Let $\Xcal_{\Ccal}$ be a Noetherian $p$-ARS in a $(p+3)$-category
$\Ccal$ with a map $A_2 : \LBrc{}{\Xcal_{\Ccal}} \fl
\LCf{}{\Xcal_{\Ccal}}$. Then each map $A_3 $ extends from $\LBrc{3}{\Xcal_{\Ccal}} \fl
\LCf{3}{\Xcal_{\Ccal},A_2}$ to $\Brc{3}{\Xcal_{\Ccal}} \fl \Cf{3}{\Xcal_{\Ccal},A_2}$.
\end{proposition}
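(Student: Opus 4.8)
The plan is to replay, one dimension higher, the Noetherian induction used for the two-directional coherent Newman's lemma in Proposition~\ref{P:CubicalCoherentNewmanVersionDimP}. I would extend $A_3$ by Noetherian induction in direction $i$ on the common source $x=\partial_i^-f_1=\partial_i^-f_2=\partial_i^-f_3$ of the $3$-branching, using the same well-founded order $\leq$ on $p$-cells generated by $\Xcal_{\Ccal}$. The degenerate cases, which also absorb the base of the induction, are handled first: if some $f_j$ is a degeneracy in direction $i$, the $3$-branching collapses to a $2$-branching in the remaining two directions, and I would set $\filler{3}{f_1,f_2,f_3}$ to a suitable degeneracy of the $2$-confluence filler $\filler{2}{\cdot,\cdot}$ produced by $A_2$ for the surviving pair of branches; its six faces then reduce to faces of $A_2$-fillers and match by the cubical relations~\eqref{E:AxiomPreCubClass}. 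When $x$ is a normal form every branch is degenerate, so the base case is subsumed.

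For the induction step I would peel one rewriting step off each branch, decomposing $f_j=g_j\circ_i h_j$ with $g_j\in\Xcal_{\Ccal}$ a single step and $h_j\in\cstabs{\Xcal_{\Ccal}}$ the remaining path. Since $(g_1,g_2,g_3)$ is a local $3$-branching, $\filler{3}{g_1,g_2,g_3}$ is the given value of $A_3$ on $\LBrc{3}{\Xcal_{\Ccal}}$, and this local $3$-confluence cube forms the corner of the construction. Its three positive faces are themselves $2$-confluence fillers $\filler{2}{\cdot,\cdot}$ whose sources have strictly decreased for $\leq$. I would then extend this corner along the three directions $i$, $i+1$, $i+2$ by pasting, on the one hand, confluence fillers $A_2$ absorbing the remainders $h_1,h_2,h_3$ against the outputs of the corner cube, and on the other hand strictly smaller $3$-confluence fillers $\filler{3}{\cdot,\cdot,\cdot}$ that are available by the induction hypothesis because their sources lie strictly below $x$. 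The full filler is the $\circ_i$-, $\circ_{i+1}$- and $\circ_{i+2}$-composite of these subcubes, following in each pair of directions the same recursive pattern as in Proposition~\ref{P:CubicalCoherentNewmanVersionDimP}.

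The main obstacle is the three-dimensional bookkeeping. I would subdivide the cube to be filled into subcubes so that each pair of adjacent subcubes shares a common face, check that these shared faces literally agree --- using the cubical relations~\eqref{E:AxiomPreCubClass} together with the matching conditions in the definitions of $2$- and $3$-confluence fillers --- and verify that the $\circ_i$-, $\circ_{i+1}$- and $\circ_{i+2}$-composites are therefore well defined. It then remains to confirm that the six boundary faces of the assembled cell coincide with the prescribed $A_2$-fillers listed in the definition of $\Cf{3}{\Xcal_{\Ccal},A_2}$, so that the composite indeed lies in $\Cf{3}{\Xcal_{\Ccal},A_2}$. This is precisely the place where a cube law would classically be assumed; here the pattern of the pasting is forced by the axioms of cubical categories, so no extra cube law is required and the composite closes coherently of its own accord.
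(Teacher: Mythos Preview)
Your overall strategy matches the paper's: Noetherian induction on the common source, peeling one rewriting step off each branch, taking the local $3$-filler $B=\filler{3}{g_1,g_2,g_3}$ as the corner cube, and completing by the induction hypothesis. Two points where the paper is sharper than your sketch. First, the paper begins by invoking Proposition~\ref{P:CubicalCoherentNewmanVersionDimP} to extend $A_2$ from $\LBrc{}{\Xcal_{\Ccal}}$ to all of $\Brc{}{\Xcal_{\Ccal}}$; this is essential, since the very definition of a $3$-confluence filler requires $A_2$-values on the \emph{non-local} branchings appearing as its positive faces, and without this step $\Cf{3}{\Xcal_{\Ccal},A_2}$ would not even make sense. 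Second, no $(p+2)$-cells $\filler{2}{\cdot,\cdot}$ are ever pasted as components of the $(p+3)$-cell under construction: the paper's pasting uses exactly four $3$-cubes, the corner $B$ together with three further $3$-confluence fillers $C$, $D$, $E$ supplied by the induction hypothesis (one per direction, with sources strictly below $x$), and sets
\[
\filler{3}{f_1,f_2,f_3}=((B\circ_{i+2}C)\circ_{i+1}D)\circ_iE.
\]
The $2$-confluence fillers appear only as faces of these four cubes, and the shared faces match automatically because, by definition of $3$-confluence filler, the negative faces of each of $C,D,E$ are precisely the required $A_2$-values. So your picture of separately pasting $A_2$-cells and $A_3$-cells into a grid is more elaborate than needed; the nested four-cube composite is the whole story.
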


\begin{proof}
  By Proposition~\ref{P:CubicalCoherentNewmanVersionDimP}, the map
  $A_2$ extends from local to arbitrary branchings and
    confluences.  We extend the map $A_3$ by induction in direction
  $i$ on the source of the $3$-branchings.

The base case is trivial. 
Let $(f_1,f_2,f_3)$ be a $3$-branching with source $x$ and suppose that the map $A_3$ extends to $3$-branchings with source a $p$-cell reduced from $x$.
For each $1\leq i\leq 3$, we write $f_i = f_i'\circ_i f_i''$, where $f_i'$ belongs to $\Xcal_{\Ccal}$. 
By assumption, the local $3$-branching $(f_1',f_2',f_3')$ is filled by
the $3$-confluence filler $B=\filler{3}{f_1',f_2',f_3'}$. 
Then, using the induction hypothesis,
\begin{itemize}
\item the $3$-branching $\left(\partial_{i+1}^-\partial_{i+2}^+B,\partial_{i}^-\partial_{i+2}^+B,f_3''\right)$ is filled by the $3$-confluence filler
\begin{equation*}
C\:=\:\filler{3}{\partial_{i+1}^-\partial_{i+2}^+B,\partial_{i}^-\partial_{i+2}^+B,f_3''},
\end{equation*}
\item the $3$-branching $\left(\partial_{i+1}^-\partial_{i+1}^+B,f_2'',\partial_i^-\partial_{i+1}^+(B\circ_{i+2}C)\right)$ is filled by the $3$-confluence filler
\begin{equation*}
D\:=\:\filler{3}{\partial_{i+1}^-\partial_{i+1}^+B,f_2'',\partial_i^-\partial_{i+1}^+(B\circ_{i+2}C)},
\end{equation*}
\item the $3$-branching $\left(f_1'',\partial_{i+1}^-\partial_i^+\left((B\circ_{i+2}C)\circ_{i+1}D\right),\partial_i^-\partial_i^+\left((B\circ_{i+2}C)\circ_{i+1}D\right)\right)$ is filled by the $3$-confluence filler
\begin{equation*}
E\:=\:\filler{3}{f_1'',\partial_{i+1}^-\partial_i^+\left((B\circ_{i+2}C)\circ_{i+1}D\right),\partial_i^-\partial_i^+\left((B\circ_{i+2}C)\circ_{i+1}D\right)}.
\end{equation*}
\end{itemize}
We then extend the map $A_3$ inductively,  setting
\begin{equation*}
\filler{3}{f_1,f_2,f_3} \:=\: ((B\circ_{i+2}C)\circ_{i+1}D)\circ_{i}E.
\end{equation*}
This construction corresponds to the diagram
\begin{equation*}
\label{E:CubicalNewmanDimThreeVersionDimP}
\begin{tikzcd}[global scale = 1.5 and 1 and 1 and 1]
& i+2 & \\
\ar[ur, shorten <= -3] \ar[rr, shorten <= -5] \ar[dd, shorten <= -7] && i+1 \\\\
i &&
\end{tikzcd}
\hspace{6em}
\begin{tikzcd}[global scale = 1.8 and 1.3 and 1 and 1, longer arrows = 2 and 2]
&& |[alias=UL]| \ar[rr] && \ar[rr] && \ar[dd] \\
& \ar[ur] \ar[rr] && |[alias=DR]| \ar[ur] &&& \\
\ar[ur] \ar[rr] \ar[dd] && \ar[ur] \ar[rr] \ar[dd] && \ar[uurr] \ar[dd] && \ar[dd] \\
& B && D &&& \\
\ar[rr] \ar[dd] && \ar[rr] && \ar[uurr] \ar[dd] && \\
&& E &&&& \\
\ar[rrrr] &&&& \ar[uurr] &&
\ar[from=UL, to=DR, phantom, "C"]
\end{tikzcd}
\end{equation*}
\end{proof}

\subsubsection{The cube law}
\label{SSS:ResiduationConflCubeLaw}
Our functional approach to confluence fillers admits an interpretation
in terms of residual paths and of the cube law.  Indeed, the map $A_2$
allows defining a \emph{residuation} operation 
\begin{equation*}
  \res{f_1}{f_2} \:\coloneq\: \partial_{i+1}^+\filler{2}{f_1,f_2},
  \end{equation*}
  for every branching $(f_1,f_2)$. This operation is well-known from
  the $\lambda$-calculus~\cite{LevyPhD78,Barendregt84}.  It gives rise
  to the confluence diagram
 \begin{equation*}
\begin{tikzcd}[global scale = 2 and 2 and 1 and 1.3]
x \ar[rr, "f_2"] \ar[dd, "f_1"'] && y_2 \ar[dd, "\res{f_1}{f_2}"] \\\\
y_1 \ar[rr, "\res{f_2}{f_1}"'] && z
\end{tikzcd}
\end{equation*}
To work with residuals, it helps
  memorising $\res{f_1}{f_2}$ as the translation of $f_1$ along
  $f_2$ in the square spanned by $f_1$ and $f_2$, and $\res{f_2}{f_1}$ as the translation of $f_2$ along $f_1$.

Lévy has shown that residuation satisfies the
cube law in $\lambda$-calculus \cite[Lemma~2.2.1]{LevyPhD78},
see also~\cite[Lemma~12.2.6]{Barendregt84}
and~\cite[Def. 4.49]{Dehornoy2015}, which is often presented as a
  single cube law up to permutation of indices.  
For a $3$-branching $(f_1,f_2,f_3)$, the \emph{cube law} state that
\begin{eqn}{equation*}
(\res{f_i}{f_j})(\res{f_k}{f_j}) \: = \: (\res{f_i}{f_k})(\res{f_j}{f_k}),
\end{eqn}
for all pairwise distinct $i,j,k$ in $\{1,2,3\}$.  Geometrically,
this law assembles rewriting paths around the following cube spanned by the $3$-branching:
\begin{eqn}{equation*}
\raisebox{-4cm}{
\begin{tikzpicture}[
    scale=1,
    arrows={-Stealth[length=4.5pt, width=3.2pt]},
    every node/.style={font=\small},
    gen/.style={very thick, red},
    resi/.style={thick, blue},
    resin/.style={thick, black},
]
  \coordinate (A)  at (0,0);      
  \coordinate (B)  at (4,0);      
  \coordinate (C)  at (4,4);      
  \coordinate (D)  at (0,4);      

  \coordinate (shift) at (2,1.6);

  \coordinate (Aprime) at ($(A)+(shift)$); 
  \coordinate (Bprime) at ($(B)+(shift)$); 
  \coordinate (Cprime) at ($(C)+(shift)$); 
  \coordinate (Dprime) at ($(D)+(shift)$); 

  \begin{pgfonlayer}{background}
    \fill[gray!22] (Aprime) -- (Bprime) -- (Cprime) -- (Dprime) -- cycle;
    \fill[gray!12] (B) -- (Bprime) -- (Cprime) -- (C) -- cycle;
    \fill[gray!12] (D) -- (Dprime) -- (Cprime) -- (C) -- cycle;
  \end{pgfonlayer}

  \draw[gen] (A) -- node[pos=0.5, left] {$f_1$} (D);
  \draw[gen] (A) -- node[pos=0.5, below] {$f_2$} (B);
  \draw[gen] (A) -- node[pos=0.4, above] {$f_3$} (Aprime);

  \draw[resin] (D) -- node[pos=0.65, above] {$\res{f_2}{f_1}$} (C);
  \draw[resin] (B) -- node[pos=0.6, left] {$\res{f_1}{f_2}$} (C);
  \draw[resin] (D) -- node[pos=0.5, left] {$\res{f_3}{f_1}$\;} (Dprime);
  \draw[resin] (B) -- node[pos=.33, right] {\;$\res{f_3}{f_2}$} (Bprime);
  \draw[resin] (Aprime) -- node[pos=0.3, below] {$\res{f_2}{f_3}$} (Bprime);
  \draw[resin] (Aprime) -- node[pos=.33, left] {$\res{f_1}{f_3}$} (Dprime);

  \draw[resi] (Dprime) -- node[pos=0.5, above] {$\res{(\res{f_2}{f_1})}{(\res{f_3}{f_1})}$} (Cprime);

  \draw[resi] (Bprime) -- node[pos=0.35, right] {$\begin{array}{c}
     \res{(\res{f_1}{f_2})}{(\res{f_3}{f_2})}\\[-0.3em]
     =\\[-0.3em]
     \res{(\res{f_1}{f_3})}{(\res{f_2}{f_3})}
    \end{array}$} (Cprime);

  \draw[resi] (C) -- node[pos=.53,below] {\!\!\!\!\!\!\!\!\!\!\!$\res{(\res{f_3}{f_2})}{(\res{f_1}{f_2})}$} (Cprime);

  \draw[gray!50] (A) -- (D) -- (C) -- (B) -- cycle;
  \draw[gray!50] (Aprime) -- (Dprime) -- (Cprime) -- (Bprime) -- cycle;

\end{tikzpicture}
}
\end{eqn}
In this cube, the residual path
$f_1\mathop{|} f_2$ is obtained by translating $f_1$ along $f_2$ in
the front square and the residual path $f_3\mathop{|} f_2$ by
translating $f_3$ along $f_2$ in the bottom square, so that
$(f_1\mathop{|}f_2)\mathop{|}(f_3\mathop{|}f_2)$ is the residual path
of these two residual paths on the back face of the cube. Similar
translations show that
$(f_1\mathop{|}f_3)\mathop{|}(f_2\mathop{|}f_3)$ represents the same
arrow. The other instantiations of $f_1$, $f_2$, $f_3$ in the cube law
produce the remaining blue arrows and thus
assemble all arrows around the cube.
  
The cube law follows from the cubical
relations~\eqref{E:AxiomPreCubClass} applied to the cube
$A_3(f_1,f_2,f_3)$, for instance, 
\begin{equation*}
(f_1\mathop{|}f_2)\mathop{|}(f_3\mathop{|}f_2) = \partial_{i+1}^+\partial_{i+1}^+\filler{3}{f_1,f_2,f_3} =
\partial_{i+1}^+\partial_{i+2}^+\filler{3}{f_1,f_2,f_3} = (f_1\mathop{|}f_3)\mathop{|}(f_2\mathop{|}f_3).
\end{equation*}
They are thus a natural and immediate consequence of the way faces are attached to cells of cubical sets, hence of the geometry of cubes
that emerges somewhat accidentally from the laws of
$\lambda$-calculus. The cube law has appeared more
recently as a postulate in $3$-confluence proofs in classical abstract
rewriting~\cite{EndrullisKlop2019,Klop2022}.

\subsection{Normalising confluence}
\label{SS:NormalisingConfluence}

Next we bring the sections and contractions from
Section~\ref{S:CCConfluence} into play and prove normalising variants
of Newman's lemma and the Church-Rosser theorem. We also state and
prove a cubical version of Squier's theorem, which requires
normalisation.

\subsubsection{Normal forms and contractions}
\label{SSS:NormalFormContraction}
Let  $\Xcal_{\Ccal}$ be a convergent $p$-ARS in a $(p+2)$-category $\Ccal$.
A cell $x\in\Ccal_{p}$ is a \emph{normal form (in direction $i$)} if there are no cells $f\in\Xcal_{\Ccal}$ for which $\partial_{i}^-f = x$.
By convergence, any rewriting path that starts from any $x\in\Ccal_{p}$
terminates in a unique normal form $\widehat{x}$.
This determines a section $\widehat{(-)}$ of the projection $\pi:\Ccal\to\overline{\Ccal}_p$, as defined in~\SSS{SSS:DefSectionCub}.
For every $x\in\Ccal_{p}$ we
choose a $(p+1)$-cell $\sigma_x\in\cstabs{\Xcal_{\Ccal}}$ such that
$\sigma:\Ccal_{p}\to\Ccal_{p+1}$ is a contraction in the
$(p+1,p)$-category generated by $\Ccal_{p+1}$, as defined in~\SSS{SSS:ContractionsCub}.

\subsubsection{Normalising fillers}
\label{SSS:NormalisingConfluence}
A \emph{normalising (local) confluence filler (in direction $i$)} of a (local) branching $(f_1,f_2)$ of $\Xcal_{\Ccal}$ is a $(p+2)$-cell $\filler{2}{f_1,f_2}$ in $\Ccal$ such that
\begin{align*}
\partial_{i}^-\filler{2}{f_1,f_2} = f_2,\qquad
\partial_{i+1}^-\filler{2}{f_1,f_2} = f_1,\qquad
\partial_{i}^+\filler{2}{f_1,f_2} = \sigma_{\partial_{i}^+f_1},\qquad
\partial_{i+1}^+\filler{2}{f_1,f_2} = \sigma_{\partial_{i}^+f_2}.
\end{align*}
These identities assemble to  a (local) confluence diagram
\begin{eqn}{equation}
\label{E:shapeA2}
\begin{tikzcd}[global scale = 2.5 and 2 and 1 and 1]
\ar[r, shorten <= -5] \ar[d, shorten <= -7] & i+1 \\
i & 
\end{tikzcd}
\hspace{4em}
\begin{tikzcd}[global scale = 2 and 2 and 1 and 1.3]
x \ar[rr, "f_2"] \ar[dd, "f_1"'] && y_2 \ar[dd, "\sigma_{y_2}"] \\
& \filler{2}{f_1,f_2} & \\
y_1 \ar[rr, "\sigma_{y_1}"'] && \widehat{x}
\end{tikzcd}
\end{eqn}
A \emph{normalising Church-Rosser filler (in direction $i$)} of a cell
$f$ in $\cstabsr{\Xcal_{\Ccal}}$ is a $(p+2)$-cell $\fillerCR{f}$ in
$\Ccal$ of shape
\begin{equation*}
\begin{tikzcd}[global scale = 2.5 and 2 and 1 and 1.2]
\ar[r, shorten <= -5] \ar[d, shorten <= -7] & i+1 \\
i & 
\end{tikzcd}
\qquad
\begin{tikzcd}[global scale = 2 and 2 and 1 and 1.2]
x \ar[rr, "f"] \ar[dd, "\sigma_x"'] && y \ar[dd, "\sigma_y"] \\
& \fillerCR{f} & \\
\widehat{x} \ar[rr, equal] && \widehat{x}
\end{tikzcd}
\end{equation*}

We write $\NCf{3}{\Xcal_{\Ccal},A_2}$
(resp. $\NLCf{3}{\Xcal_{\Ccal},A_2}$) for the set of normalising
confluence fillers (resp. normalising local confluence fillers) with
respect to~$A_2$ and $\NCR{\Xcal_{\Ccal}}$ the set of normalising
Church-Rosser fillers of $\Xcal_{\Ccal}$.

These notions allow us to prove normalising variants of Newman's lemma
and the Church-Rosser theorem with the same diagrams as before, but
with normal forms and degeneracies in suitable places.

\begin{lemma}
\label{L:CoherentNormalNewmanClassical}
For a Noetherian $p$-ARS $\Xcal_{\Ccal}$, each map $A_2$
  extends from $\LBrc{}{\Xcal_{\Ccal}} \fl \NLCf{}{\Xcal_{\Ccal}}$ to $\Brc{}{\Xcal_{\Ccal}} \fl \NCf{}{\Xcal_{\Ccal}}$.
\end{lemma}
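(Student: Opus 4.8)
The plan is to replay the Noetherian induction of Proposition~\ref{P:CubicalCoherentNewmanVersionDimP} almost verbatim, reusing the same cube-pasting recursion, and to check that the extra rigidity of normalising fillers---that their positive faces must be the chosen section paths $\sigma_{(-)}$ rather than arbitrary cells of $\cstabs{\Xcal_{\Ccal}}$---is preserved throughout. As there, I order the $p$-cells by the relation $\leq$ generated by $\Xcal_{\Ccal}$ and argue by Noetherian induction in direction $i$ on the source $x$ of a branching $(f_1,f_2)$. The hypothesis furnishes a map $A_2\colon\LBrc{}{\Xcal_{\Ccal}}\fl\NLCf{}{\Xcal_{\Ccal}}$, so the system is locally confluent; with Noetherianity this yields convergence, hence unique normal forms, so the section $\widehat{(-)}$ and the contraction $\sigma$ of~\SSS{SSS:NormalFormContraction} are available.

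The base case, where $x$ is a normal form, is trivial. In the induction step, when $f_1$ (or symmetrically $f_2$) is a degeneracy in direction $i$, the branching collapses to a single path, and this is the one point where the construction departs from Proposition~\ref{P:CubicalCoherentNewmanVersionDimP}: in place of the bare degeneracy $\epsilon_i f_2$ I build $\filler{2}{f_1,f_2}$ from the contraction cell $\sigma_{f_2}$ and connection cells---much as the base cases of Proposition~\ref{P:CubicalCoherentChurchRosserVersionDimP} are assembled from $\Gamma_i^\alpha$ and $\epsilon_i$---so that its boundary acquires the normalising shape~\eqref{E:shapeA2}, with top face $f_2$ and positive faces $\sigma_{\partial_i^+f_1}$ and $\sigma_{\partial_i^+f_2}$ read off from the faces $\sigma_{\partial_i^\pm f_2}$ of $\sigma_{f_2}$. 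The degeneracy $\epsilon_i f_2$ could not supply these section paths, so this is exactly where the contraction, absent from the non-normalising proof, becomes indispensable.

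When both $f_1,f_2$ are non-degenerate I keep the identical recursive formula $\filler{2}{f_1,f_2}=\bigl(\filler{2}{g_1,g_2}\circ_{i+1}\filler{2}{\partial_i^+\filler{2}{g_1,g_2},h_2}\bigr)\circ_i\filler{2}{h_1,\partial_i^+(\cdots)}$, with $f_1=g_1\circ_ih_1$, $f_2=g_2\circ_ih_2$ and $g_1,g_2\in\Xcal_{\Ccal}$. Its innermost sub-cube $\filler{2}{g_1,g_2}$ is normalising because $g_1,g_2$ form a local branching, filled by the given $A_2$; the remaining sub-cubes have strictly smaller source and are normalising by the induction hypothesis. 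The substance of the step is to verify that the pasting is again normalising. Computing its positive faces through the cubical relations~\eqref{E:AxiomPreCubClass} and the compatibility of the face maps with $\circ_i,\circ_{i+1}$, one obtains $\partial_i^+\filler{2}{f_1,f_2}=\sigma_{\partial_i^+f_1}$ directly, whereas $\partial_{i+1}^+\filler{2}{f_1,f_2}$ first appears as a composite $\sigma_{\partial_i^+f_2}\circ_i\sigma_{\widehat{(-)}}$; its second factor is the section of an already-normalised target, so by the contraction axiom $\sigma_{\widehat{x}}=\epsilon_1\widehat{x}$ of~\eqref{E:ConditionContractions} it collapses to a degeneracy, leaving the required $\sigma_{\partial_i^+f_2}$.

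The hard part is precisely this face bookkeeping in the induction step. In Proposition~\ref{P:CubicalCoherentNewmanVersionDimP} the positive faces only had to lie in $\cstabs{\Xcal_{\Ccal}}$, so no simplification was needed; here one must track that every intermediate target produced by a normalising sub-filler is a normal form, so that the iterated sections degenerate and the boundary sections appear un-composed. Arranging the degenerate-case assembly of $\sigma_{f_2}$ with connection cells so that it lands on~\eqref{E:shapeA2} is fiddly but routine, and away from these two points the diagram is the same tiling of the local-confluence square already drawn in~\eqref{E:CubicalSchemeNewman2}.
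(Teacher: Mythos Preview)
Your proposal is correct and follows the same route as the paper: the paper's proof is a two-sentence pointer back to Proposition~\ref{P:CubicalCoherentNewmanVersionDimP}, noting that in Diagram~\eqref{E:CubicalSchemeNewman2} the intermediate vertices $y',y'',z'$ all become $\widehat{x}$ and the arrows between them degenerate. Your detailed account of the non-degenerate recursive step---tracking that each sub-filler's positive target is already $\widehat{x}$, so the iterated $\sigma$'s collapse by the contraction axiom $\sigma_{\widehat{x}}=\epsilon_1\widehat{x}$---is exactly this observation unpacked.

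One small gap: in the degenerate case you invoke a cell $\sigma_{f_2}$ for the $(p+1)$-cell $f_2$, but the contraction supplied by~\SSS{SSS:NormalFormContraction} lives in the $(p+1,p)$-category generated by $\Ccal_{p+1}$ and is therefore only a map $\Ccal_p\to\Ccal_{p+1}$; there is no $(p+2)$-cell $\sigma_{f_2}$ available at this point. What you are really asking for here is a normalising Church--Rosser filler of $f_2$, and those are built in the \emph{next} lemma from the present one, so calling on them would be circular. The fix is easy: handle the degenerate branching $(\epsilon_i x,f_2)$ by an inner induction on the length of $f_2$, peeling off one step $g_2\in\Xcal_{\Ccal}$ and using the given local filler $\filler{2}{g_2,g_2}$ together with the Noetherian hypothesis on the shorter tail. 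The paper's proof does not spell out the degenerate case either, so this is a shared omission rather than a divergence of method.
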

\begin{proof}
  The proof is similar to that of
  Proposition~\ref{P:CubicalCoherentNewmanVersionDimP}, but confluence
  fillers are now normalising. In
  Diagram~\eqref{E:CubicalSchemeNewman2} we thus replace $y'$, $y''$
  and $z'$ by $\widehat{x}$ and arrows between the $\widehat{x}$ by
  degeneracies.
\end{proof}

\begin{lemma}
\label{L:CoherentNormalChurchRosserClassical}
For a $p$-ARS $\Xcal_{\Ccal}$ in a $(p+2,p+1)$-category $\Ccal$,
each map $A_2 : \Brc{}{\Xcal_{\Ccal}} \fl \NCf{}{\Xcal_{\Ccal}}$ induces a 
map $B : \cstabsr{\Xcal_{\Ccal}} \fl \NCR{\Xcal_{\Ccal}}$.
\end{lemma}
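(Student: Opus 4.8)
The plan is to follow the inductive scheme of Proposition~\ref{P:CubicalCoherentChurchRosserVersionDimP}, exploiting that normalisation rigidifies the fillers: in a normalising Church-Rosser filler the two side faces are forced to be the reduction cells $\sigma_x,\sigma_y$ and the bottom face is forced to be degenerate at the common normal form. Hence $B$ can be assembled by tiling single-step fillers along direction $i+1$, the only genuine input being the map $A_2$ together with the chosen reductions $\sigma_x$. Concretely, I write a zigzag $f\in\cstabsr{\Xcal_{\Ccal}}$ of minimal length as $f=f_1\circ_i\dots\circ_i f_k$, where each letter $f_j$ lies in $\cstabs{\Xcal_{\Ccal}}$ or is the $R_i$-inverse of such a cell, and define $B$ by induction on $k$.

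For the base case $k=1$ I produce the single-step filler directly from $A_2$, first extended to arbitrary branchings by Lemma~\ref{L:CoherentNormalNewmanClassical}. If $f_1$ is a forward step with source $x$, I set $\fillerCR{f_1}=\filler{2}{\sigma_x,f_1}$; its faces are $\partial_i^-=f_1$, $\partial_{i+1}^-=\sigma_x$ and $\partial_{i+1}^+=\sigma_{\partial_i^+f_1}$, while $\partial_i^+=\sigma_{\widehat{x}}$ is degenerate because the reduction of a normal form is trivial, so this is exactly a normalising Church-Rosser filler. If $f_1=R_ie$ is a backward step with $\partial_i^+f_1=x'$, I set $\fillerCR{f_1}=R_{i+1}\filler{2}{\sigma_{x'},e}$: inverting in direction $i+1$ exchanges the side faces $\sigma_{x'}$ and $\sigma_x$ and turns the top face into $R_ie=f_1$, leaving the bottom degenerate, so the shape is again correct.

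For the induction step $k\geq2$ I split off the first letter and set
\[
\fillerCR{f}=\fillerCR{f_1}\circ_{i+1}\fillerCR{f_2\circ_i\dots\circ_i f_k}.
\]
The composite is well defined since the right face of $\fillerCR{f_1}$ and the left face of $\fillerCR{f_2\circ_i\dots\circ_i f_k}$ are both the reduction cell attached to the shared intermediate state $\partial_i^+f_1$. The cubical exchange law between faces and compositions then gives $\partial_i^-\fillerCR{f}=f_1\circ_i(f_2\circ_i\dots\circ_i f_k)=f$, $\partial_{i+1}^-\fillerCR{f}=\sigma_x$ and $\partial_{i+1}^+\fillerCR{f}=\sigma_y$ for $y=\partial_i^+f$, whereas $\partial_i^+\fillerCR{f}$ is a $\circ_i$-composite of degeneracies at the common normal form, hence again degenerate. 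Convergence ensures that all intermediate states of the zigzag share this normal form, so the gluing is always possible.

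The main obstacle is the backward case, where forming $R_{i+1}\filler{2}{\sigma_{x'},e}$ presupposes that the confluence filler is $R_{i+1}$-invertible. This is precisely where the $(p+2,p+1)$-hypothesis enters: a $(p+2)$-cell of $\Ccal$ is $R_{i+1}$-invertible as soon as it has an $R_{i+1}$-invertible shell, and the shell of $\filler{2}{\sigma_{x'},e}$ is invertible because its boundary consists of reduction cells and rewriting cells, which are $R_i$-invertible in the very sense that makes $\cstabsr{\Xcal_{\Ccal}}$ well defined, together with degenerate cells in the remaining directions. The same invertibility underlies the use of $R_{i+1}$ in the proof of Proposition~\ref{P:CubicalCoherentChurchRosserVersionDimP}. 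The remaining verifications—matching of faces under $\circ_{i+1}$, degeneracy of composites of degeneracies, and the identity $\sigma_{\widehat{x}}=\epsilon_i\widehat{x}$ expressing unitality of the section—are routine consequences of the cubical axioms.
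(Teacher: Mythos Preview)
Your proof is correct, and the argument is well organised. One slip: the appeal to Lemma~\ref{L:CoherentNormalNewmanClassical} is unnecessary, since by hypothesis $A_2$ is already defined on all of $\Brc{}{\Xcal_{\Ccal}}$, not only on local branchings (and that lemma would import a Noetherianity assumption absent here). Simply drop that clause; your base case uses only $A_2$ on the branchings $(\sigma_x,f_1)$ and $(\sigma_{x'},e)$, which lie in $\Brc{}{\Xcal_{\Ccal}}$.

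Your route is genuinely different from the paper's. The paper replays the tiling of Proposition~\ref{P:CubicalCoherentChurchRosserVersionDimP} verbatim, composing in direction $i$ and padding each step with connections $\Gamma_i^\pm$ and degeneracies $\epsilon_i$; the ``obvious replacements'' send the confluence targets to $\widehat{x}$ and the arrows between them to identities. You instead compose in direction $i{+}1$, which normalisation makes possible because consecutive single-letter fillers share the $\sigma$-cell at their common intermediate state. This is more economical: no auxiliary thin cells appear in the induction step, and the forward/backward case distinction is confined entirely to the base case. The paper's route has the virtue of being a mechanical transcription of the non-normalising argument; yours exploits what normalisation actually buys.
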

\begin{proof}
  By the obvious replacements in the diagrams in the proof of
  Proposition~\ref{P:CubicalCoherentChurchRosserVersionDimP}.
\end{proof}

Finally, we state a cubical version of Squier’s theorem~\cite{SquierOttoKobayashi94} for $1$-groupoids.
Its formulation motivates the extension of the notion of cubical normalization strategies to higher dimensions, which is the subject of the next section.

\begin{proposition}
  \label{P:AcyclicityNormalisationLowDimCat}
For a convergent ARS  $\Xcal_{\Ccal}$ each map $A_2 :
   \LBrc{}{\Xcal_{\Ccal}} \fl \NLCf{}{\Xcal_{\Ccal}}$ extends to a
   witness $2$-cell for a proof of acyclicity of the groupoid ${\Xcal_{\Ccal}}^{\top_1}$.
\end{proposition}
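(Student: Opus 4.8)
The plan is to endow $\Xcal_{\Ccal}^{\top_1}$, together with the confluence $2$-cells of $\Ccal$, with the structure of a contracting $2$-groupoid, and then to conclude by the case $n=2$ of Theorem~\ref{T:ContractingImpliesAcyclic}. By \SSS{SSS:NormalFormContraction} the convergent ARS already provides the normal-form section $\widehat{(-)}$ together with the dimension-one part of a contraction, namely the normalisation paths $\sigma_x\colon x\to\widehat{x}$. What is missing is the dimension-two part: a witness $2$-cell $\sigma_f$ for each zigzag $f\colon x\to y$ of $\Xcal_{\Ccal}^{\top_1}$, with boundary the square $f^\partial$ of \SSS{SSS:ContractionsCub}.

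To produce these cells I would first extend $A_2$ from local to arbitrary branchings by the normalising Newman lemma (Lemma~\ref{L:CoherentNormalNewmanClassical}), and feed the resulting $A_2\colon\Brc{}{\Xcal_{\Ccal}}\fl\NCf{}{\Xcal_{\Ccal}}$ into the normalising Church--Rosser construction (Lemma~\ref{L:CoherentNormalChurchRosserClassical}). This yields a map $B\colon\Xcal_{\Ccal}^{\top_1}\fl\NCR{\Xcal_{\Ccal}}$ that assigns to every zigzag $f$ a $2$-cell $\fillerCR{f}$ with $\partial_1^-\fillerCR{f}=f$, with $\partial_1^+\fillerCR{f}$ degenerate, and with $\partial_2^-\fillerCR{f}=\sigma_x$, $\partial_2^+\fillerCR{f}=\sigma_y$. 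This is precisely the boundary $f^\partial$ prescribed for a contraction cell, so I set $\sigma_f\coloneqq\fillerCR{f}$.

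Acyclicity then follows by replaying the $n=2$ argument of Theorem~\ref{T:ContractingImpliesAcyclic} with these data. Connectivity in dimension zero is componentwise, each state $a$ being joined to any $b$ in its class by $\sigma_a\circ_1 R_1\sigma_b$. For a $1$-square $S$ of zigzags I fold it to $T=\Phi_2(S)$, whose two direction-$1$ faces are the composites $T_1^-,T_1^+$ running around $S$ and whose remaining faces are degenerate. Since the ambient $\Ccal$ is a $(2,1)$-category the $2$-cells $\sigma_{T_1^\pm}$ are $R_1$-invertible, so I may form $A=\sigma_{T_1^-}\circ_1 R_1\sigma_{T_1^+}$; its direction-$2$ faces collapse to degeneracies because $\sigma_u\circ_1 R_1\sigma_u=\epsilon_1 u$ is an identity, so $A$ fills $T$. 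Unfolding by $B=\overline{\Phi}_2(S,A)$ then fills $S$ by Lemma~\ref{L:UnFoldingProperties}, and the cells $\sigma_f=\fillerCR{f}$ are the witness $2$-cells for acyclicity.

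The main obstacle is the composition axiom $\sigma_{f\circ_1 g}=\sigma_f\circ_2\sigma_g$ needed to make $(\sigma_x,\sigma_f)$ a bona fide contraction and thus to apply Theorem~\ref{T:ContractingImpliesAcyclic} as a black box. Both sides carry the same boundary $(f\circ_1 g)^\partial$, but the Church--Rosser fillers are not thin, so uniqueness of thin cells does not force their equality; establishing it means reconciling the length-recursion that defines $\fillerCR{-}$ with the direction-$2$ pasting, which is the genuine work. Should this coherence resist, it can be avoided altogether: the folding step above consumes only the two values $\sigma_{T_1^-}$ and $\sigma_{T_1^+}$ attached to the composites bounding a given square, so a filler of each $1$-square, and hence acyclicity of $\Xcal_{\Ccal}^{\top_1}$, is obtained squarewise from Church--Rosser alone, without invoking the composition axiom.
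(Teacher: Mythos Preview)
Your argument is correct, but it takes a different route from the paper's. After obtaining the normalising Church--Rosser map $B$ (which both proofs do via Lemmas~\ref{L:CoherentNormalNewmanClassical} and~\ref{L:CoherentNormalChurchRosserClassical}), the paper fills an arbitrary $1$-square $S$ by an explicit $3\times 3$ pasting: it applies $B$ to each of the four faces $\partial_{2,1}^\pm S$, $\partial_{2,2}^\pm S$ separately, rotates two of them with $T_{2,1}$, and glues the four resulting cells together with connections $\Gamma_{2,1}^+$ and their $R$-inverses in the corners. This is precisely the ``rotate and glue'' method the paper attributes to Lucas in its subsequent remark, and the remark explicitly contrasts it with the folding-map approach of Theorem~\ref{T:ContractingImpliesAcyclic}.

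You instead run the $n=2$ instance of that theorem directly: fold $S$ to $T=\psi_1(S)$, fill $T$ with $B(T_1^-)\circ_1 R_1 B(T_1^+)$, and unfold. This is cleaner in that it reuses the general machinery and consumes only two Church--Rosser fillers rather than four; the paper's tiling is more explicit and exhibits the filler as a visible mosaic of the four face witnesses. Your discussion of the composition axiom $\sigma_{f\circ_1 g}=\sigma_f\circ_2\sigma_g$ is apt: it would indeed be the sticking point for invoking Theorem~\ref{T:ContractingImpliesAcyclic} as a black box, and your observation that the fold--fill--unfold step only evaluates $\sigma$ at the two composite zigzags $T_1^\pm$, hence needs no compositionality, is exactly the right way to sidestep it. The paper's tiling avoids the issue for the same underlying reason, just packaged differently.
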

\begin{proof}
  Lemmas~\ref{L:CoherentNormalNewmanClassical}
  and~\ref{L:CoherentNormalChurchRosserClassical} allow us to
  construct a map $B$ from zigzags to normalising Church-Rosser
  fillers. Every square $S$ is then obtained by the following
  composition of cubes:
\begin{eqn}{equation*}
\begin{tikzcd}[global scale = 2 and 2 and 1 and 1]
\ar[r, shorten <= -5] \ar[d, shorten <= -7] & 2 \\
1 & 
\end{tikzcd}
\hspace{4em}
\begin{tikzcd}[global scale = 3.5 and 2 and 1 and 1.3]
x \ar[rr, equal] \ar[dd, equal] && \ar[rr, "\partial_{2,1}^-S"] \ar[dd] && \ar[rr, equal] \ar[dd] && y_2 \ar[dd, equal] \\
& \Gamma_{2,1}^+ && \fillerCR{\partial_{2,1}^-S} && R_{2,2}\Gamma_{2,1}^+ & \\
\ar[rr] \ar[dd, "\partial_{2,2}^-S"'] && \widehat{x} \ar[rr, equal] \ar[dd, equal] && \widehat{x} \ar[dd, equal] && \ar[ll] \ar[dd, "\partial_{2,2}^+S"] \\
& T_{2,1}\fillerCR{\partial_{2,2}^-S} &&&& R_{2,2}T_{2,1}\fillerCR{\partial_{2,2}^+S} & \\
\ar[rr] \ar[dd, equal] && \widehat{x} \ar[rr, equal] && \widehat{x} && \ar[ll] \ar[dd, equal] \\
& R_{2,1}\Gamma_{2,1}^+ && R_{2,1}\fillerCR{\partial_{2,1}^+S} && R_{2,1}R_{2,2}\Gamma_{2,1}^+ & \\
y_1 \ar[rr, equal] && \ar[uu] \ar[rr, "\partial_{2,1}^+S"'] && \ar[uu] \ar[rr, equal] && y
\end{tikzcd}
\end{eqn}
\end{proof}

\begin{remark}
  Proposition~\label{P:AcyclicityNormalisationLowDimCat} is a
  low-dimensional version of
  Theorem~\ref{T:ContractingImpliesAcyclic}, proved without using
  folding maps. The same method has been used by Lucas~\cite{Lucas20},
  rotating and gluing confluence fillers to fill a square. Yet
  extending to higher dimensions as in
  Theorem~\ref{T:ContractingImpliesAcyclic} seems combinatorially
  difficult, as it requires rotating and gluing all confluence fillers
  of the faces of a $k$-square.

  For a converse of Squier's theorem for cubical $\omega$-groupoids
  freely generated by $(\omega,0)$-polygraphs see
  Theorem~\ref{T:AcyclicityNormalisation} below.
\end{remark}

\subsubsection{The cube law revisited}
\label{SSS:ExSkolemNormConfl}
Contractions allow defining $(f\mathop{|}g)=\sigma_{\partial_i^+(g)}$ for any
branching $(f,g)$. For $3$-branchings $(f_1,f_2,f_3)$, we can
then derive the cube law,
\begin{align*}
(f_1\mathop{|}f_2)\mathop{|}(f_3\mathop{|}f_2)
= \sigma_{\partial_i^+(\sigma_{\partial_i^+(f_2)})} 
= \epsilon_i\widehat{\partial_i^+(f_2)} 
= \epsilon_i\widehat{\partial_i^+(f_3)} 
= \sigma_{\partial_i^+(\sigma_{\partial_i^+(f_3)})} 
= (f_1\mathop{|}f_3)\mathop{|}(f_2\mathop{|}f_3),
\end{align*}
without using $3$-confluence fillers explicitly. Contractions also
allow constructing $3$-confluence fillers more easily, and extending
the techniques in this section to higher dimensions. In
Section~\ref{S:PolygraphicResolutionARS} we formalise
higher-dimensional versions of normalising confluence diagrams,
generated from the confluence of $n$-branchings, in cubical
$n$-polygraphs and for $n\geq2$. We use them further to construct
$\omega$-groupoids on convergent ARS.

\section{Cubical groupoids in abstract rewriting}
\label{S:PolygraphicResolutionARS}

In this section, we present extensions and applications of
Theorem~\ref{T:ContractingImpliesAcyclic} to cubical polygraphs, after
briefly recalling their structure in
Subsection~\ref{SS:CubicalPolygraphs}.
Theorem~\ref{T:AcyclicityNormalisation} shows that a free
$\omega$-groupoid on a polygraph is acyclic if and only if it is
contracting. In Subsection~\ref{SS:PolResolFromConfl}, we construct
free $\omega$-groupoids extending convergent ARS, defining for
each $k \geq 2$ a map $A_k$ from local $k$-branchings to local
$k$-confluence fillers and thereby accounting for the confluence of
higher-dimensional branchings. Finally, in
Theorem~\ref{T:TruncatedAcyclicExtensionARS}, we show that a suitable
choice of $2$-cells for $A_2$ refines this construction so that all
$k$-cells are thin for $k\geq 2$. This shows that abstract rewriting
with normalisation strategies does not require the generation of
coherence cells in dimensions higher than~$2$.  Together, these two
results provide cubical analogues of Squier’s theorem for ARSs.

\subsection{Cubical polygraphs, contractions and acyclicity of cubical
  groupoids}
\label{SS:ContractionAcyclicityPol}
\label{SS:CubicalPolygraphs}

First we recall the notion of \emph{cubical polygraph}.  The existence
of this structure was originally established by Lucas in the context
of Gray categories~\cite{LucasPhD2017} . Yet the explicit
construction of the free category generated by a cubical polygraph was
not made explicit therein. For completeness, we provide such a
construction while deferring a detailed
development to Appendix~\ref{A:CubPolFreeCat}, including a
  proof of existence of the free (cubical)
$(n-1)$-groupoid $\tck{X_{n-1}}$.

\subsubsection{Cubical polygraphs}
\label{D:DefCubPol}
A \emph{cubical $(1,0)$-polygraph} (a \emph{$1$-polygraph})
$(X_0,X_1)$ consists of a set $X_0$ of \emph{$0$-generators} and a set
$X_1$ of \emph{$1$-generators} or \emph{rewriting steps}, equipped with
\emph{source} and \emph{target} maps
$\partial_{1,1}^{\alpha} :X_1\to X_0$. 
It freely generates a $1$-category $\cat{X}$, as well as a $1$-groupoid $\tck{X}$.
A \emph{cubical cellular extension} of a cubical $(n-1,0)$-category $\Ccal$ is
a set $X_n$ of \emph{$n$-generators} and face maps
$\partial_{n,i}^\alpha:X_n\to\Ccal_{n-1}$ for $1\leq i\leq n$ which
satisfy the cubical relations~\eqref{E:AxiomPreCubClass}.

A \emph{cubical $(n,0)$-polygraph} $X=(X_0,\dots,X_n)$ is formed by a
cubical $(n-1,0)$-polygraph $(X_0,\dots,X_{n-1})$ and a cubical
cellular extension $X_n$ of the free (cubical) $(n-1)$-groupoid
$\tck{X_{n-1}}$. A \emph{cubical $(\omega,0)$-polygraph} is obtained
by a colimit construction; it consists of a family of sets
$X=(X_0,X_1,\dots)$ such that every subfamily $(X_0,\dots,X_n)$ is a
cubical $(n,0)$-polygraph. A polygraph is acyclic if and only if the
associated free groupoid is acyclic.

All polygraphs considered in the sequel are cubical. 

In order to construct acyclic polygraphs that extend a convergent
ARS in Sections~\ref{SS:PolResolFromConfl}
and~\ref{SS:PolResolFromConflTrunc}, we characterise acyclicity via
the existence of contractions in
Theorem~\ref{T:AcyclicityNormalisation}, adapting a similar result for globular
polygraphs~\cite{GuiraudMalbos12advances}. We start with the following
characterisation of contractions.
\begin{lemma}
\label{L:GenerationNormalisationStrategy}
Let $X$ be an $(\omega,0)$-polygraph and $\widehat{(-)}$ a section of
the projection $\pi:\tck{X}\to\overline{\tck{X}}_0$.  The contractions
of $\tck{X}$ are in bijective correspondence with the following data:
\begin{enumerate}
\item a family of $1$-cells $\sigma_x$ in $\tck{X}_1$ with boundary
  $x^\partial=(x,\widehat{x})$, for every $0$-cell $x$ in $X_0$ such
  that $x\neq\widehat{x}$,
\item a family of $(k+1)$-cells $\sigma_f$ in $\tck{X}_{k+1}$, for
  every $k>0$, with boundary $f^\partial$, for every $k$-cell $f$ in
  $X_k$ that is not of the form $\sigma_g$ for some $g$ in
  $\tck{X}_{k-1}$.
\end{enumerate}
Here, $f^\partial$ is defined recursively with respect to the
dimension of $k$-cells $f$ in $X_k^\top$, as in
Section~\ref{SSS:ContractionsCub}.
\end{lemma}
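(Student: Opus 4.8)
The plan is to present the correspondence as a restriction--extension pair and to verify it is a bijection by recursion on dimension, the essential input being the freeness of $\tck{X}$ on the polygraph $X$. In one direction, a contraction $\sigma$ restricts to the stated data by reading off, for each $0$-cell $x\neq\widehat{x}$, the $1$-cell $\sigma_x$ of boundary $(x,\widehat{x})$, and for each $k$-generator $f\in X_k$ not of the form $\sigma_g$, the $(k+1)$-cell $\sigma_f$ of boundary $f^\partial$. These boundaries are exactly the ones prescribed in \SSS{SSS:ContractionsCub}, so the output has the required shape.

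First I would verify that this restriction is injective, that is, that a contraction is determined by these values. The axioms of \SSS{D:DefLaxTransCubCat} express $\sigma$ on any $\circ_i$-composite, degeneracy and connection through $\sigma$ on its constituents; contractions are moreover compatible with inversions, $\sigma_{R_if}=R_{i+1}\sigma_f$ and $\sigma_{T_jf}=T_{j+1}\sigma_f$; unitality forces $\sigma_{\widehat{x}}=\epsilon_1\widehat{x}$; and the second identity of \eqref{E:ConditionContractions} forces $\sigma_{\sigma_g}=\Gamma_1^-\sigma_g$ on every cell already in the image of $\sigma$. Since every cell of $\tck{X}$ is generated from $X$ under these operations, two contractions agreeing on the listed data coincide everywhere.

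The substance lies in the extension. Given the data, I would define $\sigma$ by recursion on dimension: set $\sigma_x$ to the chosen $1$-cell when $x\neq\widehat{x}$ and $\sigma_{\widehat{x}}=\epsilon_1\widehat{x}$ otherwise; then, supposing $\sigma$ already defined and lax up to dimension $k-1$ so that the square $f^\partial$ is determined for every $k$-cell $f$ (it depends only on $\sigma$ at the faces $\partial_{i-1}^\alpha f$, which are of lower dimension), extend to dimension $k$ by assigning to each generator $f\in X_k$ not of the form $\sigma_g$ the chosen filler of $f^\partial$, setting $\sigma_{\sigma_g}=\Gamma_1^-\sigma_g$, and propagating through the remaining cells by the compatibility rules for $\circ_i$, $\epsilon_i$, $\Gamma_i^\alpha$, $R_i$ and $T_j$. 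The resulting family then satisfies the lax-transformation axioms and both conditions of \eqref{E:ConditionContractions} by construction, hence is a contraction restricting to the given data.

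The main obstacle is the well-definedness of this propagation: the recursively defined $\sigma_f$ must be independent of how the cell $f$ is decomposed, and so must respect every relation holding in $\tck{X}$ (associativity and interchange of the $\circ_i$, the connection and inversion identities of Appendix~\SSS{AA:InversionMaps}, and the coherence $\sigma_{\sigma_f}=\Gamma_1^-\sigma_f$). I would discharge this by the universal property of the free $\omega$-groupoid recalled in Appendix~\ref{A:CubPolFreeCat}: dimension $k+1$ of $\tck{X}$ obeys the same cubical and categorical axioms as dimension $k$ with indices shifted up by one, so a structure-compatible assignment on the generators extends uniquely. What then remains is only to check that the forced values on $\widehat{x}$ and on the cells $\sigma_g$ agree with the propagated boundaries, a direct computation from unitality and the explicit shape of $f^\partial$.
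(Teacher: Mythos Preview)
Your proposal is correct and follows essentially the same approach as the paper. The paper's own proof is extremely terse---three sentences observing that a contraction has forced values on thin cells, $R$-inverses, compositions, and on cells of the form $\widehat{x}$ or $\sigma_g$, and then deferring to the explicit free construction of $\tck{X}$ in Appendix~\ref{A:CubPolFreeCat}---while you spell out the restriction--extension decomposition, the injectivity argument, and the well-definedness check that the paper leaves implicit.
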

\begin{proof}
  A contraction has fixed values on thin cells, $R$-inverses,
  compositions and on elements of the form $\widehat{x}$ for
  $x\in X_0$ or $\sigma_g$ for some $g$ in $\tck{X}_{k-1}$. So the
  values of $\sigma_f$ for $f$ in $\tck{X}_k$ are uniquely and
  completely determined by its values on generators of the form given
  in the lemma. A construction of the free groupoid $\tck{X}$ can be
  found in Appendix~\ref{A:CubPolFreeCat}.
\end{proof}

We can now prove the converse direction to
Theorem~\ref{T:ContractingImpliesAcyclic} for
polygraphs.

\begin{theorem}
\label{T:AcyclicityNormalisation}
The free $\omega$-groupoid generated by an $(\omega,0)$-polygraph is
acyclic if and only if it is contracting.
\end{theorem}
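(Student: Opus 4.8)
The plan is to prove the two implications separately. The backward direction is immediate: a free $\omega$-groupoid on a polygraph is in particular an $\omega$-groupoid, so if it is contracting then Theorem~\ref{T:ContractingImpliesAcyclic} already gives that it is acyclic. All the work lies in the forward direction, where from the hypothesis that every square of the free groupoid $\tck{X}$ has a filler we must manufacture a contraction of $\tck{X}$.

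First I would fix a section $\widehat{(-)}$ of the projection $\pi\colon\tck{X}\to\overline{\tck{X}}_0$, which amounts to choosing a representative in each equivalence class. By Lemma~\ref{L:GenerationNormalisationStrategy}, a contraction of $\tck{X}$ compatible with this section is the same datum as a choice, for every $0$-generator $x$ with $x\neq\widehat{x}$, of a $1$-cell $\sigma_x$ with boundary $x^\partial=(x,\widehat{x})$, together with a choice, for every $k$-generator $f$ in $X_k$ not already of the form $\sigma_g$, of a $(k+1)$-cell $\sigma_f$ with boundary $f^\partial$. It therefore suffices to exhibit this data, and acyclicity is exactly the instrument that supplies it, since every square of $\tck{X}$ admits a filler.

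I would produce the data by induction on the dimension of the generators, so that when $\sigma_f$ is sought for a $k$-generator $f$, the contraction is already defined on $\tck{X_{k-1}}$ via Lemma~\ref{L:GenerationNormalisationStrategy} and the recursively defined boundary $f^\partial$ of~\SSS{SSS:ContractionsCub} makes sense. For the base case, each $0$-square $(x,\widehat{x})$ is filled by acyclicity with a $1$-cell $\sigma_x\colon x\to\widehat{x}$. For the inductive step, the faces of $f^\partial$ are $\partial_1^-f^\partial=f$, $\partial_1^+f^\partial=\epsilon_k\cdots\epsilon_1\widehat{x}$ and $\partial_i^\alpha f^\partial=\sigma_{\partial_{i-1}^\alpha f}$ for $i\geq 2$, and I would check that these satisfy the square equations~\eqref{E:SquareEquations}: for $2\leq i<j$ the required identity reduces, through the contraction face axiom $\partial_i^\alpha\sigma_z=\sigma_{\partial_{i-1}^\alpha z}$, to the cubical relation~\eqref{E:AxiomPreCubClass} applied to $f$, while the cases involving direction $1$ follow from $\partial_1^-\sigma_z=z$ and the degeneracy and face relations. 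Hence $f^\partial$ is a genuine square and acyclicity provides a filler, which I take as $\sigma_f$; iterating this over all generators yields the full data, which Lemma~\ref{L:GenerationNormalisationStrategy} then packages into a contraction of $\tck{X}$.

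The step I expect to be the main obstacle is this verification that the recursively defined boundary $f^\partial$ is a well-formed square at each stage. Its faces are assembled from the values of $\sigma$ on the faces of $f$, so the square equations can only be confirmed once one knows that $\sigma$ restricted to lower dimensions already satisfies the contraction face axioms; threading the induction through Lemma~\ref{L:GenerationNormalisationStrategy} is what guarantees this, as the lemma simultaneously extends $\sigma$ from generators to all composites, inverses and thin cells and certifies that the resulting family is a bona fide contraction. Consequently no separate consistency check is required for the non-generating cells, and the only genuine computations are the square-equation checks, which are routine applications of the cubical and degeneracy relations.
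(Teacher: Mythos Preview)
Your proposal is correct and follows essentially the same approach as the paper: the backward direction is dispatched by Theorem~\ref{T:ContractingImpliesAcyclic}, and the forward direction builds the contraction by induction on the dimension of generators, invoking Lemma~\ref{L:GenerationNormalisationStrategy} at each stage and using acyclicity to fill the square $f^\partial$. The paper's proof is terser---it simply asserts that $f^\partial$ is a $k$-square and applies acyclicity---whereas you spell out why the square equations hold for $f^\partial$, but this is elaboration rather than a different strategy.
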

\begin{proof}
  Let $X$ be an acyclic cubical $(\omega,0)$-polygraph.  We construct
  a contraction $\sigma$ recursively in the dimension of cells. This
  yields a contraction of the cubical $(k+1,0)$-polygraph
  $(X_0, \dots, X_{k+1})$ for each $0\leq k<n$.  For $k=0$ and every
  $0$-cell $x\in X_0$ such that $x\neq\widehat{x}$, we choose
  $\sigma_x:x\to\widehat{x}$ in $\tck{X_1}$, which exists by
  definition. This yields a contraction of $(X_0, X_1)$.  For $k>0$,
  suppose $\sigma$ is a contraction of $(X_0, \dots, X_k)$ and take a
  $k$-cell $f\in X_k$ which is not of the form $\sigma_g$ for some
  $g\in\tck{X_{k-1}}$. By acyclicity, the $k$-square $f^\partial$
  admits a filler $A$ in $\tck{X_{k+1}}$, and we set $\sigma_f :=
  A$. By Lemma~\ref{L:GenerationNormalisationStrategy}, $\sigma$
  extends to a contraction of the $(k+1,0)$-polygraph
  $(X_0, \dots, X_{k+1})$.  Taking the colimit yields a contraction of
  $X$.

  The reverse implication follows from
  Theorem~\ref{T:ContractingImpliesAcyclic}, considering the
  $\omega$-groupoid $\Ccal=\tck{X}$ freely generated by $X$.
\end{proof}

We use Theorem~\ref{T:AcyclicityNormalisation} in
Theorems~\ref{T:AcyclicExtensionARS}
and~\ref{T:TruncatedAcyclicExtensionARS} below to calculate acyclic extensions of ARS.

\subsection{An acyclic $\omega$-groupoid from convergence}
\label{SS:PolResolFromConfl}

Next we describe the construction that extends a convergent ARS
into an acyclic $\omega$-groupoid generated by its higher-order
branchings.

\subsubsection{Abstract rewriting systems}
For a $1$-polygraph $X$, we consider the cellular extension $X_1$ as
an ARS on $0$-cells in the free category $X^\ast$, as defined in
\SSS{SSS:RewritingSystemInCubCat}, and a section $\widehat{(-)}$
defined by the normal forms in~\SSS{SSS:ContractionsCub}.  A
\emph{normalisation strategy} for $X$ is a contraction
$\sigma : X_0 \fl X_1^\ast$ with respect to $\widehat{(-)}$ defined,
for each $x\in X_0$, as
\[
\sigma_x \: = \: \eta_x\circ_{1}\sigma_{\partial_{1}^+(\eta_x)},
\]
where $\eta_x\in X_1$ is the first rewriting step of $\sigma_x$.

\subsubsection{The polygraph $\cb{}{\omega}{X}$}
\label{SSS:PolygraphC(X)}
Let $X$ be a convergent $1$-polygraph and $\sigma$ a
normalisation strategy for~$X$.  For every $x\in X_0$, we fix a strict
order $<$ on the set $\{ f \in X_1 \mid \partial_{1}^- f = x \}$,
making $\eta_x$ the least element.  We construct an
$(\omega,0)$-polygraph involving higher-order branchings and their
confluences by transfinite recursion, defining a sequence of cellular
extensions $(\cb{}{k}{X})_{k\geq 0}$ by
\begin{enumerate}
\item $\cb{}{0}{X} := X_0$ and $\cb{}{1}{X} := X_1$,
\item For $k=2$,  $\cb{}{2}{X} := \{ \filler{2}{f_1, f_2} \mid f_1,f_2\in X_1^\ast,\; f_1<f_2, \; \partial^-_1(f_1)=\partial^-_1(f_2) \}$, whose face maps of the $2$-cell
$\filler{2}{f_1,f_2}$, drawn in~\eqref{E:shapeA2}, are defined by, for $1\leq i\leq 2$,
\item For $k \geq 3$, 
$\cb{}{k}{X} := \{ \filler{k}{f_1, \dots, f_k} \mid f_i\in X_1^\ast,\; f_i<f_{i+1}, \; \partial^-_1(f_i)=\partial^-_1(f_{i+1})\;\text{for $1\leq i \leq k-1$} \}$, whose face maps of $\filler{k}{f_1, \dots, f_k}$ are defined by, for $1\leq i\leq k$,
\begin{eqn}{align*}
\partial_{i}^-\filler{k}{f_1,\dots,f_k}=\filler{k-1}{f_1,\dots,f_{i-1},f_{i+1},\dots,f_k}, \qquad
\partial_{i}^+\filler{k}{f_1,\dots,f_k}=\Gamma_{k-2}^-\dots\Gamma_{1}^-\sigma_{\partial_{1}^+(f_i)}.
\end{eqn}
\end{enumerate}
The $(\omega,0)$-polygraph
$\cb{}{\omega}{X}$ is the colimit of this construction.

The following lemma shows that the $k$-generators
$\filler{k}{f_1, \dots, f_k}$ are well-defined.
\begin{lemma}
For every $k\geq 2$, $C_k(X)$ defined in \SSS{SSS:PolygraphC(X)} is a cubical cellular extension of $\tck{\cb{}{k-1}{X}}$.
\end{lemma}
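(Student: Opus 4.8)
The plan is to verify the two defining conditions for $\cb{}{k}{X}$ to be a cubical cellular extension of $\tck{\cb{}{k-1}{X}}$: that every face $\partial_l^\alpha\filler{k}{f_1,\dots,f_k}$ is a well-defined $(k-1)$-cell of the free groupoid $\tck{\cb{}{k-1}{X}}$, and that these faces satisfy the cubical relations~\eqref{E:AxiomPreCubClass}. The first condition is immediate: each negative face $\partial_l^-\filler{k}{f_1,\dots,f_k}=\filler{k-1}{f_1,\dots,f_{l-1},f_{l+1},\dots,f_k}$ is a $(k-1)$-generator of $\cb{}{k-1}{X}$, since deleting $f_l$ from an ordered family with common source leaves such a family; and each positive face is a thin cell built by iterated connections from a normalisation $1$-cell $\sigma_{\partial_1^+(f_l)}\in X_1^\ast\subseteq\tck{\cb{}{k-1}{X}}$. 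Hence all faces lie in $\tck{\cb{}{k-1}{X}}$, and the genuine content is the cubical relations.

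The key computational tool will be two identities for the thin cells occurring as positive faces. Writing $G_m u\coloneq\Gamma_m^-\cdots\Gamma_1^- u$ for a $1$-cell $u\colon a\to w$, so that $\partial_l^+\filler{k}{f_1,\dots,f_k}=G_{k-2}\,\sigma_{\partial_1^+(f_l)}$, I would prove by induction on $m$ that for all $1\leq l\leq m+1$
\[
\partial_l^- (G_m u) = G_{m-1} u
\qquad\text{and}\qquad
\partial_l^+ (G_m u) = \epsilon_m\cdots\epsilon_1 w .
\]
The induction uses only the connection--face laws $\partial_m^-\Gamma_m^-=\partial_{m+1}^-\Gamma_m^-=\id$ and $\partial_m^+\Gamma_m^-=\partial_{m+1}^+\Gamma_m^-=\epsilon_m\partial_m^+$, the commutation $\partial_l^\alpha\Gamma_m^-=\Gamma_{m-1}^-\partial_l^\alpha$ for $l<m$, and the degeneracy law $\Gamma_m^-\epsilon_m=\epsilon_{m+1}\epsilon_m$ from Appendix~\ref{AA:AxiomsCubCatConnections}. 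The second identity records that every positive face of $G_m u$ collapses to the totally degenerate cell at the target $w$, independently of $u$ itself.

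With these identities in hand I would verify $\partial_i^\alpha\partial_j^\beta=\partial_{j-1}^\beta\partial_i^\alpha$ for $1\leq i<j\leq k$ by splitting on the signs $(\alpha,\beta)$. The case $(-,-)$ is pure index bookkeeping: both sides equal $\filler{k-2}{\dots}$ with $f_i$ and $f_j$ deleted, since deleting the indices $i<j$ in either order yields the same sublist. In the mixed cases $(+,-)$ and $(-,+)$, one side is the positive face of a $(k-1)$-filler, which by the recursive face formula equals $G_{k-3}\,\sigma_{\partial_1^+(f_i)}$, respectively $G_{k-3}\,\sigma_{\partial_1^+(f_j)}$, while the other side is a negative face of $G_{k-2}\,\sigma_{\partial_1^+(f_i)}$, respectively $G_{k-2}\,\sigma_{\partial_1^+(f_j)}$, which reduces to the same cell by the first identity; the indices match because the $(j-1)$-st entry of the family obtained by deleting $f_i$ is $f_j$, and the $i$-th entry of the family obtained by deleting $f_j$ is $f_i$. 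In the case $(+,+)$ both sides are positive faces of iterated connections, so by the second identity each collapses to $\epsilon_{k-2}\cdots\epsilon_1 w$, where $w=\partial_1^+\sigma_{\partial_1^+(f_i)}=\partial_1^+\sigma_{\partial_1^+(f_j)}=\widehat{x}$; here I use the defining hypothesis $\partial_1^-(f_1)=\cdots=\partial_1^-(f_k)=x$, so that all $\sigma_{\partial_1^+(f_l)}$ share the normal form $\widehat{x}$ as target and the two collapses coincide. The base case $k=2$ reduces to the corner compatibilities $\partial_1^\alpha\partial_2^\beta=\partial_1^\beta\partial_1^\alpha$, which hold because the four corners of the confluence square~\eqref{E:shapeA2} are consistently determined by $f_1$ and $f_2$.

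The main obstacle will be the index bookkeeping in the two iterated-connection identities: the connection index must be tracked as it drops under lower faces and as degeneracies are created and then commuted past one another via $\epsilon_i\epsilon_j=\epsilon_{j+1}\epsilon_i$. Once these identities are established, the four sign cases follow immediately, and the only rewriting-theoretic input is the shared source of the branching, which is precisely what forces the $(+,+)$ faces to agree.
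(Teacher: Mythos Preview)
Your proposal is correct and follows essentially the same approach as the paper: verifying the cubical relations by splitting on the four sign cases $(\alpha,\beta)$, with the $(-,-)$ case being index bookkeeping, the mixed cases reducing to $G_{k-3}\sigma_{\partial_1^+(f_l)}$, and the $(+,+)$ case collapsing to the totally degenerate cell at $\widehat{x}$. Your abstraction of the two identities for $\partial_l^\pm(G_m u)$ is a clean packaging of exactly the inline connection-face computations the paper carries out case by case.
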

\begin{proof}
We need to check the square equations~\eqref{E:SquareEquations}. 
For $k=2$,
\begin{align*}
\partial_1^- \partial_1^- \filler{2}{f_1, f_2} &= x 
   = \partial_1^- \partial_2^- \filler{2}{f_1, f_2},\\
\partial_1^- \partial_1^+ \filler{2}{f_1, f_2} &= y_2 
   = \partial_1^+ \partial_2^- \filler{2}{f_1, f_2}, \\
\partial_1^+ \partial_1^- \filler{2}{f_1, f_2} &= y_1 
   = \partial_1^- \partial_2^+ \filler{2}{f_1, f_2},\\
\partial_1^+ \partial_1^+ \filler{2}{f_1, f_2} &= \widehat{x} 
   = \partial_1^+ \partial_2^+ \filler{2}{f_1, f_2},
\end{align*}
 which shows that $\filler{2}{f_1, f_2}$ forms a $2$-cell. For $k \geq 3$ and
 $1 \leq i < j \leq k$,
\begin{align*}
\partial_i^- \partial_j^- \filler{k}{f_1, \dots, f_k} &= \filler{k-2}{f_1, \dots, f_{i-1}, f_{i+1}, \dots, f_{j-1}, f_{j+1}, \dots, f_k} = \partial_{j-1}^- \partial_i^- \filler{k}{f_1, \dots, f_k}, \\
\partial_i^- \partial_j^+ \filler{k}{f_1, \dots, f_k} &= \Gamma_{k-3}^- \dots \Gamma_1^- \sigma_{t_0(f_j)} = \partial_{j-1}^+ \partial_i^- \filler{k}{f_1, \dots, f_k}, \\
\partial_i^+ \partial_j^- \filler{k}{f_1, \dots, f_k} &= \Gamma_{k-3}^- \dots \Gamma_1^- \sigma_{t_0(f_i)} = \partial_{j-1}^- \partial_i^+ \filler{k}{f_1, \dots, f_k}, \\
\partial_i^+ \partial_j^+ \filler{k}{f_1, \dots, f_k} &= \Gamma_{k-3}^- \dots \Gamma_i^- \partial_i^+ \Gamma_i^- \dots \Gamma_1^- \sigma_{t_0(f_j)} \\
&= \Gamma_{k-3}^- \dots \Gamma_i^- \epsilon_i \dots \epsilon_1 \partial_1^+ \sigma_{t_0(f_j)} 
= \epsilon_1 \dots \epsilon_1 \widehat{x} \\
&= \Gamma_{k-3}^- \dots \Gamma_{j-1}^- \epsilon_{j-1} \dots \epsilon_1 \partial_1^+ \sigma_{t_0(f_i)} \\
&= \Gamma_{k-3}^- \dots \Gamma_{j-1}^- \partial_{j-1}^+ \Gamma_{j-1}^- \dots \Gamma_1^- \sigma_{t_0(f_i)} \\
&= \partial_{j-1}^+ \partial_i^+ \filler{k}{f_1, \dots, f_k}.
\end{align*}
Thus $\filler{k}{f_1, \dots, f_k}$ is a $(k-1)$-square.
\end{proof}

\subsubsection{Extending $\sigma$ to a contraction of $\tck{\cb{}{\omega}{X}}$}
\label{SSS:ExtendingSigma}
We further extend $\sigma$ to a contraction of the $\omega$-groupoid
$\tck{\cb{}{\omega}{X}}$.  By
Lemma~\ref{L:GenerationNormalisationStrategy}, it suffices to define a
$(k+1)$-cell $\sigma_f$, for each $k\geq1$, only for those
$k$-generators $f$ that are not of the form $\sigma_g$ for some
$g\in\tck{\cb{}{k-1}{X}}$.

For each $f\in X_1$ not of the form $\sigma_z$ for some
$z\in X_0$, we define the following $2$-cell $\sigma_f$ in
$\tck{\cb{}{\omega}{X}}$ that fills the $1$-square:
\begin{equation*}
f^\partial = 
\begin{tikzcd}[global scale = 2 and 2 and 1 and 1.2]
x \ar[rr, "f"] \ar[dd, "\sigma_x"'] && y \ar[dd, "\sigma_y"] \\\\
\widehat{x} \ar[rr, equal] && \widehat{x}
\end{tikzcd}
\end{equation*}
If $f\neq\eta_x$ with $x=\partial_{1}^-(f)$, and
$x'=\partial_{1}^+(\eta_x)$, then we set
\begin{eqn}{equation*}
\sigma_f := \filler{2}{\eta_x,f}\circ_{1}\Gamma_{1}^-\sigma_{x'}=
\begin{tikzcd}[global scale = 2 and 2 and 1 and 1.2]
x \ar[rr, "f"] \ar[dd, "\eta_x"'] && y \ar[dd, "\sigma_y"] \\\\
x' \ar[rr, "\sigma_{x'}"'] \ar[dd, "\sigma_{x'}"'] && \widehat{x} \ar[dd, equal] \\\\
\widehat{x} \ar[rr, equal] && \widehat{x}
\end{tikzcd}
\end{eqn}
where $\sigma_x=\eta_x\circ_{1}\sigma_{x'}$ and $\eta_x\in X_1$.
Otherwise, if $f=\eta_x$, we set
\begin{eqn}{equation*}
\sigma_{\eta_x} := \Gamma_{1}^-\eta_x \circ_{1} \epsilon_{2}\sigma_{x'} =
\begin{tikzcd}[global scale = 2 and 2 and 1 and 1.2]
x \ar[rr, "\eta_x"] \ar[dd, "\eta_x"'] && x' \ar[dd, equal] \\\\
x' \ar[rr, equal] \ar[dd, "\sigma_{x'}"'] && x' \ar[dd, "\sigma_{x'}"] \\\\
\widehat{x} \ar[rr, equal] && \widehat{x}
\end{tikzcd}
\end{eqn}

For $k\geq 2$, let $A$ be a $k$-generator in $\cb{}{k}{X}$ that is not
of the form $\sigma_g$ for some $g\in\tck{\cb{}{k-1}{X}}$. Then
$A=\filler{k}{f_1,\dots,f_k}$ and we define a $(k+1)$-cell $\sigma_A$ in
$\tck{\cb{}{\omega}{X}}$ that fills the $k$-square $A^\partial$. If
$f_i\neq\eta_x$ for all $i$, where $x=\partial_{1}^-(f_i)$ and
$x'=\partial_{1}^+(\eta_x)$, then we set
\begin{eqn}{equation}
\label{E:NormStratDimN}
\sigma_A :=\filler{k+1}{\eta_x,f_1,\dots,f_k}\circ_{1} \Gamma_{k}^- \dots \Gamma_{1}^- \sigma_{x'}.
\end{eqn}
If $f_1=\eta_x$, then we set
\begin{eqn}{equation}
\label{E:NormStratDimNEta}
\sigma_A := \Gamma_{1}^- A \circ_{1} \epsilon_{2} \Gamma_{k-1}^- \dots \Gamma_{1}^- \sigma_{x'}.
\end{eqn}

\begin{lemma}
Each $\sigma_A$ defined as above is well-defined and a filler of $A^\partial$.
\end{lemma}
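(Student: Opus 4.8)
The plan is to verify the two assertions of the lemma—well-definedness and the filler property $\partial(\sigma_A)=A^\partial$—directly from the face formulas of the generators $\filler{k}{f_1,\dots,f_k}$ in \SSS{SSS:PolygraphC(X)}, the cubical and connection relations of Appendices~\ref{AA:AxiomsCubCat} and~\ref{AA:AxiomsCubCatConnections}, and the contraction axioms of \SSS{SSS:ContractionsCub}, proceeding by induction on the dimension $k$. I would first record that the two defining cases are exhaustive and mutually exclusive: since the $f_i$ share the source $x$ and are strictly ordered with $\eta_x$ least, the step $\eta_x$ can occur among them only in first position, so either $f_1=\eta_x$ or no $f_i$ equals $\eta_x$. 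In both cases $\sigma_A$ is a $\circ_1$-composite of two $(k+1)$-cells, so I must check \textbf{(a)} that the two factors are $\circ_1$-composable, and \textbf{(b)} that $\partial_1^-\sigma_A=A$, that $\partial_1^+\sigma_A=\epsilon_k\cdots\epsilon_1\widehat{x}$, and that $\partial_i^\alpha\sigma_A=\sigma_{\partial_{i-1}^\alpha A}$ for $2\le i\le k+1$.

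Treating Case~1 first (no $f_i=\eta_x$), I abbreviate $c_m=\Gamma_m^-\cdots\Gamma_1^-\sigma_{x'}$, so that $\sigma_A=\filler{k+1}{\eta_x,f_1,\dots,f_k}\circ_1 c_k$. For composability and the direction-$1$ faces I would use elementary connection computations: the face formula gives $\partial_1^+\filler{k+1}{\eta_x,\dots}=\Gamma_{k-1}^-\cdots\Gamma_1^-\sigma_{\partial_1^+\eta_x}=c_{k-1}$, while $\partial_1^-\Gamma_1^-=\id$ together with the cubical relation $\partial_1^\alpha\Gamma_m^-=\Gamma_{m-1}^-\partial_1^\alpha$ (for $m\ge 2$) yields $\partial_1^- c_k=c_{k-1}$, so the factors match. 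The same relation gives $\partial_1^-\sigma_A=\partial_1^-\filler{k+1}{\eta_x,f_1,\dots,f_k}=\filler{k}{f_1,\dots,f_k}=A$, and $\partial_1^+\sigma_A=\partial_1^+ c_k=\Gamma_{k-1}^-\cdots\Gamma_1^-\epsilon_1\widehat{x}=\epsilon_k\cdots\epsilon_1\widehat{x}$, the last step collapsing connections on degeneracies through $\partial_1^+\Gamma_1^-=\epsilon_1\partial_1^+$ and $\Gamma_m^-\epsilon_m=\epsilon_{m+1}\epsilon_m$, using $\partial_1^+\sigma_{x'}=\widehat{x'}=\widehat{x}$.

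For the faces in directions $2\le i\le k+1$ I would use functoriality of $\partial_i^\alpha$ over $\circ_1$, so that $\partial_i^\alpha\sigma_A=\partial_i^\alpha\filler{k+1}{\eta_x,f_1,\dots,f_k}\circ_1\partial_i^\alpha c_k$. On the $-$ side, the face formula gives $\partial_i^-\filler{k+1}{\eta_x,f_1,\dots,f_k}=\filler{k}{\eta_x,f_1,\dots,f_{i-2},f_i,\dots,f_k}$ and the connection relations give $\partial_i^- c_k=c_{k-1}$; comparing with the defining formula for $\sigma_{\partial_{i-1}^- A}=\sigma_{\filler{k-1}{f_1,\dots,f_{i-2},f_i,\dots,f_k}}$, which is a Case~1 instance one dimension down and hence equals $\filler{k}{\eta_x,f_1,\dots,f_{i-2},f_i,\dots,f_k}\circ_1 c_{k-1}$, shows the two agree. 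On the $+$ side, $\partial_i^+\filler{k+1}{\eta_x,f_1,\dots,f_k}=\Gamma_{k-1}^-\cdots\Gamma_1^-\sigma_{\partial_1^+ f_{i-1}}$, while $\partial_i^+ c_k$ is a degeneracy in direction $1$ and thus a $\circ_1$-identity, so the composite collapses to $\Gamma_{k-1}^-\cdots\Gamma_1^-\sigma_{\partial_1^+ f_{i-1}}$; this in turn equals $\sigma_{\partial_{i-1}^+ A}$ because $\partial_{i-1}^+ A=\Gamma_{k-2}^-\cdots\Gamma_1^-\sigma_{\partial_1^+ f_{i-1}}$ and the contraction axioms $\sigma_{\Gamma_j^\alpha g}=\Gamma_{j+1}^\alpha\sigma_g$ and $\sigma_{\sigma_g}=\Gamma_1^-\sigma_g$ rewrite $\sigma_{\partial_{i-1}^+ A}$ into exactly that string of connections applied to $\sigma_{\partial_1^+ f_{i-1}}$.

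Case~2 ($f_1=\eta_x$), with $\sigma_A=\Gamma_1^- A\circ_1\epsilon_2\Gamma_{k-1}^-\cdots\Gamma_1^-\sigma_{x'}$, I would treat along the same lines, using $\partial_1^-\Gamma_1^- A=A$ for the source face in direction $1$ and the face behaviour of $\Gamma_1^-$ and $\epsilon_2$ in the remaining directions; the only additional bookkeeping is that the match of $\partial_i^-\sigma_A$ with $\sigma_{\partial_{i-1}^- A}$ must be split according to whether the removed face still begins with $\eta_x$ (a Case~2 instance, for $i>2$) or not (a Case~1 instance, for $i=2$). The induction on $k$ is anchored at $k=1$, where the two formulas specialise to the $1$-generator definitions of $\sigma_f$ and $\sigma_{\eta_x}$ already given. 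I expect the main obstacle to lie entirely in step~\textbf{(b)}: verifying the commutation of $\partial_i^\alpha$ with the iterated connection $c_k$ and, in particular, that its upper faces are $\circ_1$-identities, and then recognising the result as $\sigma_{\partial_{i-1}^\alpha A}$—the decisive point being that the contraction axioms for $\Gamma$ and the relation $\sigma_{\sigma_g}=\Gamma_1^-\sigma_g$ reproduce precisely the connection strings that appear in the face formulas of the generators $\filler{k}{-}$.
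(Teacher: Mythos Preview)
Your proposal is correct and follows essentially the same approach as the paper: both check $\circ_1$-composability of the two factors and then compute each face of $\sigma_A$ directly from the connection, degeneracy, and generator-face axioms, recognising the result as the recursive formula for $\sigma_{\partial_{i-1}^\alpha A}$. Your Case~2 analysis is in fact slightly more careful than the paper's, since you correctly separate $i=2$ (where $\partial_2^-\Gamma_1^- A=A$ and the removed face $\partial_1^- A=\filler{k-1}{f_2,\dots,f_k}$ lands in Case~1) from $i>2$, a split the paper's uniform formula $\partial_j^-\Gamma_1^- A=\Gamma_1^-\partial_{j-1}^- A$ elides.
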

\begin{proof}
In the case (\ref{E:NormStratDimN}), the formula is well-defined
because
\begin{equation*}
\partial_1^+ \filler{k+1}{\eta_x, f_1, \dots, f_k} =  \Gamma_{k-1}^- \dots \Gamma_1^- \sigma_{x'}
 =  \partial_1^- \Gamma_k^- \dots \Gamma_1^- \sigma_{x'}.
\end{equation*}
Also, $\sigma_A$ fills $A^\partial$ because,
\begin{align*}
\partial_1^- \sigma_A
&= \partial_1^- \filler{k+1}{\eta_x, f_1, \dots, f_k}
= \filler{k}{f_1, \dots, f_k}
= A,
\\
\partial_1^+ \sigma_A
&= \partial_1^+ \Gamma_1^- \dots \Gamma_1^- \sigma_{x'}
= \epsilon_1 \dots \epsilon_1 \partial_1^+ \sigma_{x'}
= \epsilon_{k+1} \dots \epsilon_1 \widehat{x},
\end{align*}
and, for $j>1$,
\begin{align*}
\partial_j^- \sigma_A &= \partial_j^- \filler{k+1}{\eta_x, f_1, \dots, f_k} \circ_1 \partial_j^- \Gamma_k^- \dots \Gamma_1^- \sigma_{x'} \\
&= \filler{k}{\eta_x, f_1, \dots, f_{j-2}, f_j, \dots, f_k} \circ_1 \Gamma_{k-1}^- \dots \Gamma_j^- \partial_j^- \Gamma_j^- \dots \Gamma_1^- \sigma_{x'} \\
&= \filler{k}{\eta_x, f_1, \dots, f_{j-2}, f_j, \dots, f_k} \circ_1 \Gamma_{k-1}^- \dots \Gamma_1^- \sigma_{x'} \\
&= \sigma_{\filler{k-1}{f_1, \dots, f_{j-2}, f_j, \dots, f_k}} = \sigma_{\partial_{j-1}^- A},
\end{align*}
\begin{align*}
\partial_j^+ \sigma_A &= \partial_j^+ \filler{k+1}{\eta_x, f_1, \dots, f_k} \circ_1 \partial_j^+ \Gamma_k^- \dots \Gamma_1^- \sigma_{x'} \\
&= \Gamma_{k-1}^- \dots \Gamma_1^- \sigma_{t_0(f_{i-1})} \circ_1 \Gamma_{k-1}^- \dots \Gamma_j^- \epsilon_j \partial_j^+ \Gamma_{j-1}^- \dots \Gamma_1^- \sigma_{x'} \\
&= \Gamma_{k-1}^- \dots \Gamma_1^- \sigma_{t_0(f_{i-1})} \circ_1 \epsilon_{k-1} \dots \epsilon_j \partial_j^+ \Gamma_{j-1}^- \dots \Gamma_1^- \sigma_{x'} \\
&= \Gamma_{k-1}^- \dots \Gamma_1^- \sigma_{t_0(f_{i-1})} \circ_1 \epsilon_{k-1} \dots \epsilon_1 \partial_1^+ \sigma_{x'} \\
&= \Gamma_{k-1}^- \dots \Gamma_1^- \sigma_{t_0(f_{i-1})} = \Gamma_{k-1}^- \dots \Gamma_2^- \sigma_{\sigma_{t_0(f_{i-1})}} \\
&= \sigma_{\Gamma_{k-2}^- \dots \Gamma_1^- \sigma_{t_0(f_{i-1})}} = \sigma_{\partial_{j-1}^+ A}.
\end{align*}

In the case (\ref{E:NormStratDimNEta}), the formula is well-defined because
\begin{equation*}
\partial_1^+ \Gamma_1^- A = \epsilon_1 \partial_1^+ A
 = \epsilon_1 \Gamma_{k-2}^- \dots \Gamma_1^- \sigma_{x'}
 = \partial_1^- \epsilon_2 \Gamma_{k-1}^- \dots \Gamma_1^- \sigma_{x'},
\end{equation*}
and $\sigma_A$ fills $A^\partial$ because, we have
$\partial_1^- \sigma_A = \partial_1^- \Gamma_1^- A = A$, and
\begin{align*}
\partial_1^+ \sigma_A
= \partial_1^+ \epsilon_2 \Gamma_{k-1}^- \dots \Gamma_1^- \sigma_{x'}
= \epsilon_1 \partial_1^+ \Gamma_1^- \dots \Gamma_1^- \sigma_{x'}
= \epsilon_1 \dots \epsilon_1 \partial_1^+ \sigma_{x'}
= \epsilon_{k+1} \dots \epsilon_1 \widehat{x},
\end{align*}
and, for $j>1$, 
\begin{align*}
\partial_j^- \sigma_A &= \partial_j^- \Gamma_1^- A \circ_1 \partial_j^- \epsilon_2 \Gamma_{k-1}^- \dots \Gamma_1^- \sigma_{x'} \\
&= \Gamma_1^- \partial_{j-1}^- A \circ_1 \epsilon_2 \Gamma_{k-2}^- \dots \Gamma_{j-1}^- \partial_{j-1}^- \Gamma_{j-1}^- \dots \Gamma_1^- \sigma_{x'} \\
&= \Gamma_1^- \filler{k-1}{f_1, \dots, f_{j-2}, f_j, \dots, f_k} \circ_1 \epsilon_2 \Gamma_{k-2}^- \dots \Gamma_1^- \sigma_{x'} = \sigma_{\partial_{j-1}^- A},\\
&\\
\partial_j^+ \sigma_A &= \partial_j^+ \Gamma_1^- A \circ_1 \partial_j^+ \epsilon_2 \Gamma_{k-1}^- \dots \Gamma_1^- \sigma_{x'} \\
&= \Gamma_1^- \partial_{j-1}^+ A \circ_1 \epsilon_2 \Gamma_{k-2}^- \dots \Gamma_{j-1}^- \epsilon_{j-1} \partial_{j-1}^+ \Gamma_{j-2}^- \dots \Gamma_1^- \sigma_{x'} \\
&= \Gamma_1^- \Gamma_{k-2}^- \dots \Gamma_1^- \sigma_{t_0(f_{j-1})} \circ_1 \epsilon_2 \epsilon_{k-2} \dots \epsilon_{j-1} \partial_{j-1}^+ \Gamma_{j-2}^- \dots \Gamma_1^- \sigma_{x'} \\
&= \Gamma_{k-1}^- \dots \Gamma_1^- \sigma_{t_0(f_{j-1})} \circ_1 \epsilon_2 \epsilon_{k-2} \dots \epsilon_1 \partial_1^+ \sigma_{x'} \\
&= \Gamma_{k-1}^- \dots \Gamma_1^- \sigma_{t_0(f_{j-1})} \circ_1 \epsilon_1 \dots \epsilon_1 \widehat{x} \\
&= \Gamma_{k-1}^- \dots \Gamma_1^- \sigma_{t_0(f_{j-1})} 
= \Gamma_{k-1}^- \dots \Gamma_2^- \sigma_{\sigma_{t_0(f_{j-1})}} \\
&= \sigma_{\Gamma_{k-2}^- \dots \Gamma_1^- \sigma_{t_0(f_{j-1})}} 
= \sigma_{\partial_{j-1}^+ A}.\qedhere
\end{align*}
\end{proof}

We can now state the main theorem of this section.
\begin{theorem}
\label{T:AcyclicExtensionARS}
Every convergent $1$-polygraph $X$ extends to an acyclic
$\omega$-groupoid $\tck{\cb{}{\omega}{X}}$.
\end{theorem}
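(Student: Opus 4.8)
The plan is to reduce the statement to the acyclicity criterion already established. Since $\cb{}{\omega}{X}$ is an $(\omega,0)$-polygraph — each $\cb{}{k}{X}$ being a cubical cellular extension of $\tck{\cb{}{k-1}{X}}$ by the lemma above — Theorem~\ref{T:AcyclicityNormalisation} tells us that its free $\omega$-groupoid $\tck{\cb{}{\omega}{X}}$ is acyclic precisely when it is contracting. Moreover $\cb{}{\omega}{X}$ extends $X$ by construction, since $\cb{}{0}{X}=X_0$ and $\cb{}{1}{X}=X_1$. It therefore suffices to exhibit a single contraction of $\tck{\cb{}{\omega}{X}}$.

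To build this contraction I would invoke Lemma~\ref{L:GenerationNormalisationStrategy}, which identifies a contraction of $\tck{\cb{}{\omega}{X}}$ with a choice, for every generator $g$ that is not itself of the form $\sigma_h$, of a filler of the boundary square $g^\partial$. The normalisation strategy $\sigma$ already supplies such fillers in dimension $0$, namely the $1$-cells $\sigma_x:x\to\widehat{x}$ for $x\neq\widehat{x}$; and \SSS{SSS:ExtendingSigma} supplies them in every higher dimension: the $2$-cells $\sigma_f$ for $f\in X_1$ (the cases $f\neq\eta_x$ and $f=\eta_x$), and the $(k+1)$-cells $\sigma_A$ for each $k$-generator $A=\filler{k}{f_1,\dots,f_k}$, given by~\eqref{E:NormStratDimN} or~\eqref{E:NormStratDimNEta} according to whether $f_1=\eta_x$. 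The preceding lemma already verifies that each such $\sigma_A$ is well-defined and fills $A^\partial$, and the analogous boundary computations dispatch the $1$-generator case. By Lemma~\ref{L:GenerationNormalisationStrategy} this data assembles, dimension by dimension and then in the colimit (mirroring the construction of $\cb{}{\omega}{X}$ itself), into a unique contraction $\sigma$ of $\tck{\cb{}{\omega}{X}}$, so that $\tck{\cb{}{\omega}{X}}$ is contracting.

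Applying Theorem~\ref{T:ContractingImpliesAcyclic} — equivalently, the forward implication of Theorem~\ref{T:AcyclicityNormalisation} — then yields acyclicity of $\tck{\cb{}{\omega}{X}}$, which completes the proof. Because the delicate combinatorial verifications reside in the preceding lemma, the only residual point needing care, and the one I expect to be the main obstacle, is the well-foundedness of the recursion underlying $\sigma$: the formulas for $\sigma_f$ and $\sigma_A$ refer to $\sigma_{x'}$ with $x'=\partial_{1}^+(\eta_x)$ lying strictly below the source in the Noetherian order induced by the convergent ARS $X_1$. Convergence guarantees that this recursion terminates at normal forms, so the values of $\sigma$ are consistently defined across all dimensions and the hypotheses of Lemma~\ref{L:GenerationNormalisationStrategy} are genuinely satisfied.
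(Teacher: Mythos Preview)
Your proposal is correct and follows essentially the same route as the paper: invoke Lemma~\ref{L:GenerationNormalisationStrategy} together with the generator-by-generator data built in \SSS{SSS:ExtendingSigma} to obtain a contraction of $\tck{\cb{}{\omega}{X}}$, then conclude acyclicity via Theorem~\ref{T:AcyclicityNormalisation} (equivalently Theorem~\ref{T:ContractingImpliesAcyclic}). Your additional remark on the Noetherian well-foundedness underlying the recursion for $\sigma_{x'}$ is a welcome clarification that the paper leaves implicit.
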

\begin{proof}
In ~\SSS{SSS:ExtendingSigma}, we have defined a family of
$1$-cells $\sigma_x$ in $\tck{X}_1$ with boundary
$x^\partial=(x,\widehat{x})$, for every $0$-cell $x$ in $X_0$ such
that $x\neq\widehat{x}$. We have also defined
a family of $(k+1)$-cells $\sigma_f$ in $\tck{\cb{}{k+1}{X}}$, for every $0<k<n$,  with boundary $f^\partial$, for every $k$-cell $f$ in $\cb{}{k}{X}$ which is not of the form $\sigma_g$ for some $g$ in $\tck{\cb{}{k-1}{X}}$.
Then $\sigma$ is
a contraction on $\cb{}{\omega}{X}$ by
Lemma~\ref{L:GenerationNormalisationStrategy} and the claim follows from
Theorem~\ref{T:AcyclicityNormalisation}.
\end{proof}

\subsection{A refined acyclic $\omega$-groupoid from convergence}
\label{SS:PolResolFromConflTrunc}

Finally, we refine the construction leading to
Theorem~\label{T:AcyclicExtensionARS} so that it generates an acyclic
$\omega$-groupoid from a ARS without introducing any generating cells
of dimension higher than $2$.  We begin with a technical lemma, which
is an immediate consequence of~\cite[Prop.~2.1]{Higgins2005}.

\begin{lemma}
\label{L:SquaresThinFaces}
In every $\omega$-groupoid, each $k$-square with thin faces can be filled by a thin cell.
\end{lemma}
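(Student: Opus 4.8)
The plan is to re-run the folding/unfolding strategy of Theorem~\ref{T:ContractingImpliesAcyclic}, but to replace the contraction-built filler by a purely thin one. Fix a $k$-square $S$ whose faces $S_i^\alpha$ are all thin, and set $m=k+1$. First I would fold $S$ into the $k$-square $T=\Phi_m(S)$. Since the folding formulas only compose the faces of $S$ with connection and degeneracy cells, and thin cells are closed under $\Gamma_i^\alpha$, $\epsilon_i$ and $\circ_i$, both folded faces $T_1^-$ and $T_1^+$ are again thin. Note that invertibility is not used here, so the argument in fact runs in any cubical $\omega$-category with connections.

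Next I would show $T_1^-=T_1^+$. By~\cite[Prop.~3.6]{AlAglBrownSteiner2002}, exactly as invoked in the proof of Theorem~\ref{T:ContractingImpliesAcyclic}, every face $T_j^\alpha$ with $j>1$ is a degeneracy in direction~$1$, so $\partial_1^- T_j^\alpha=\partial_1^+ T_j^\alpha$. Feeding this into the square equations~\eqref{E:SquareEquations} for $T$ with first index $1$, namely $\partial_1^{\alpha} T_j^\beta=\partial_{j-1}^\beta T_1^{\alpha}$, makes the left-hand side independent of $\alpha$ and therefore forces $\partial_l^\beta T_1^-=\partial_l^\beta T_1^+$ for every $1\le l\le k$. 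Thus $T_1^-$ and $T_1^+$ have the same boundary; being both thin, they are equal by uniqueness of thin cells with a given boundary~\cite{LucasPhD2017}, so I set $g:=T_1^-=T_1^+$.

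Then I would fill $T$ with the degeneracy $A=\epsilon_1 g$, which is thin. It is a filler of $T$: its direction-$1$ faces are $g=T_1^\pm$, and for $j>1$ one computes $\partial_j^\alpha\epsilon_1 g=\epsilon_1\partial_{j-1}^\alpha g=\epsilon_1\partial_1^- T_j^\alpha=T_j^\alpha$, using the degeneracy identities, the square equation $\partial_1^- T_j^\alpha=\partial_{j-1}^\alpha T_1^-$, and once more that $T_j^\alpha$ is a direction-$1$ degeneracy. Finally I would unfold, setting $B=\overline{\Phi}_m(S,A)$. By Lemma~\ref{L:UnFoldingProperties} we have $\partial B=S$, so $B$ fills $S$; and since the unfolding formulas only surround the thin cell $A$ by connections, degeneracies and composites of the thin faces of $S$, the filler $B$ is thin.

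The one delicate point is the middle step: that folding concentrates the square in direction~$1$, forcing $T_1^-$ and $T_1^+$ to share a boundary and hence to coincide. Everything else is routine bookkeeping that thinness is preserved by the folding and unfolding maps. Alternatively, the statement is immediate from the theory of thin elements in cubical $\omega$-categories with connections~\cite[Prop.~2.1]{Higgins2005}, a shell with thin faces being commutative and hence admitting a thin filler; the folding argument above merely renders this self-contained within the present framework.
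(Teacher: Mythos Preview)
Your proof is correct and follows essentially the same fold/fill/unfold strategy as the paper: fold $S$ to $T$, produce a thin filler of $T$, then unfold via Lemma~\ref{L:UnFoldingProperties}. The only difference is in the middle step: the paper simply invokes~\cite[Prop.~2.1(iii)]{Higgins2005} for the thin filler of $T$, whereas you make this explicit by showing $T_1^-=T_1^+$ (same boundary, both thin, hence equal) and filling with $\epsilon_1 g$---a nice self-contained alternative that you yourself note is subsumed by the Higgins citation.
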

\begin{proof}
  Let $S$ be a $k$-square. Applying the folding maps, as defined in
  \SSS{SSS:FoldingUnfolding}, yields a $k$-square $T=\Phi_k(S)$, which
  satisfies $\partial_1^-\Psi_kT=\partial_1^+\Psi_kT$
  by~\cite[Prop. 3.6]{AlAglBrownSteiner2002} and has a unique thin
  filler $B$ by~\cite[Prop. 2.1(iii)]{Higgins2005}. Applying the
  unfolding maps yields a $k$-cell $A=\overline{\Phi}_k(S,B)$ which is
  a filler of $S$ by Lemma~\ref{L:UnFoldingProperties}.
\end{proof}

\begin{theorem}
\label{T:TruncatedAcyclicExtensionARS}
Every convergent $1$-polygraph $X$ extends to an acyclic
$\omega$-groupoid $\tck{\cb{tr}{\omega}{X}}$ which is  generated by the $(\omega,0)$-polygraph defined by
\begin{align*}
\cb{tr}{0}{X} := X_0,
\qquad
\cb{tr}{1}{X} := X_1,
\qquad
\cb{tr}{2}{X} := 
\{ \filler{2}{\eta_x,f} \mid f\in X_1, \; \partial_{1}^+(f)=x, \; \eta_x\neq f \},
\end{align*}
where the boundary of $\filler{2}{\eta_x,f}$ is given
by~\eqref{E:shapeA2}, and which has no $k$-generators for $k>2$.
\end{theorem}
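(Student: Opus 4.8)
The plan is to deduce acyclicity from Theorem~\ref{T:AcyclicityNormalisation}: it suffices to equip the free $\omega$-groupoid $\tck{\cb{tr}{\omega}{X}}$ with a contraction, and acyclicity then follows. First I would record that $\cb{tr}{\omega}{X}$ is a legitimate $(\omega,0)$-polygraph. The only point to check is that each $2$-generator $\filler{2}{\eta_x,f}$ has a boundary which is a genuine $1$-square; this is exactly the $k=2$ instance of the well-definedness lemma for $\cb{}{\omega}{X}$ in~\SSS{SSS:PolygraphC(X)}, the square closing at the common normal form because convergence gives $\widehat{\partial_1^+\eta_x}=\widehat{\partial_1^+f}=\widehat{x}$. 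Since there are no generators above dimension $2$, every cell of $\tck{\cb{tr}{\omega}{X}}$ of dimension at least $3$ is a composite of degeneracies, connections and inverses of lower cells, hence thin.

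To build the contraction I would invoke Lemma~\ref{L:GenerationNormalisationStrategy}, which reduces a contraction to a choice of $\sigma_f$ on those generators $f$ not of the form $\sigma_g$, subject only to $\partial\sigma_f=f^\partial$. In dimension $0$ I take $\sigma_x$ from the normalisation strategy, and in dimension $1$ I define $\sigma_f$ for $f\in X_1$ by the two formulas of~\SSS{SSS:ExtendingSigma}, namely $\sigma_f=\filler{2}{\eta_x,f}\circ_{1}\Gamma_{1}^-\sigma_{x'}$ when $f\neq\eta_x$ and $\sigma_{\eta_x}=\Gamma_{1}^-\eta_x\circ_{1}\epsilon_{2}\sigma_{x'}$ otherwise, where $x'=\partial_1^+(\eta_x)$. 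These expressions involve only the generators $\filler{2}{\eta_x,f}$, all of which lie in $\cb{tr}{2}{X}$, so the dimension-$0$ and dimension-$1$ data are available verbatim in the truncated polygraph.

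The decisive step is dimension $2$. Every $2$-generator has the form $A=\filler{2}{\eta_x,f}$, so its first argument is forced to be the least step $\eta_x$. Consequently the definition of $\sigma_A$ always falls into case~\eqref{E:NormStratDimNEta} of~\SSS{SSS:ExtendingSigma}, that is $\sigma_A=\Gamma_{1}^-A\circ_{1}\epsilon_{2}\Gamma_{1}^-\sigma_{x'}$, and never into case~\eqref{E:NormStratDimN}, the only clause whose right-hand side uses a generating $3$-cell $\filler{3}{\eta_x,\dots}$. In the construction of Theorem~\ref{T:AcyclicExtensionARS} such higher generators are adjoined in every dimension to supply the fillers needed for acyclicity, whereas here the cell~\eqref{E:NormStratDimNEta} is a composite of connections and degeneracies, hence thin, and supplies the dimension-$2$ filler directly; the boundary computation already carried out for~\eqref{E:NormStratDimNEta} in~\SSS{SSS:ExtendingSigma} shows $\partial\sigma_A=A^\partial$. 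For $k\geq3$ there are no generators, so Lemma~\ref{L:GenerationNormalisationStrategy} imposes no further data and the contraction is determined on the remaining thin cells by its compatibility with degeneracies, connections, composition and inversion; any squares one meets in these dimensions have thin faces and are filled by thin cells through Lemma~\ref{L:SquaresThinFaces}.

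By Lemma~\ref{L:GenerationNormalisationStrategy} this data assembles into a contraction of $\tck{\cb{tr}{\omega}{X}}$, whence Theorem~\ref{T:AcyclicityNormalisation} yields acyclicity; the polygraph extends $X$ since $\cb{tr}{0}{X}=X_0$ and $\cb{tr}{1}{X}=X_1$. The main obstacle is conceptual rather than computational: one must recognise that confining the $2$-generators to those whose lower face is the normalisation step $\eta_x$ is precisely what forces the dimension-$2$ clause of the contraction into the thin case~\eqref{E:NormStratDimNEta}, so that no generating $3$-cell, and a fortiori no generator in any dimension greater than $2$, is ever needed to fill the squares $A^\partial$.
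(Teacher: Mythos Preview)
Your proof is correct and takes a genuinely different route from the paper's. The paper argues by reduction from the larger polygraph $\cb{}{\omega}{X}$: starting from Theorem~\ref{T:AcyclicExtensionARS}, it exhibits an explicit $2$-cell (formula~\eqref{E:HomReduc1}) with the same boundary as each redundant generator $\filler{2}{f_1,f_2}$ with $f_1,f_2\neq\eta_x$, built from $\filler{2}{\eta_x,f_1}$ and $\filler{2}{\eta_x,f_2}$; then explicit thin $3$-cells (\eqref{E:HomReduc2}, \eqref{E:HomReduc3}) replacing every $3$-generator; and finally invokes Lemma~\ref{L:SquaresThinFaces} to propagate thinness to all higher generators. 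Acyclicity of the truncation is then deduced from that of $\cb{}{\omega}{X}$ via this Tietze-style replacement. You bypass $\cb{}{\omega}{X}$ entirely: working directly in $\cb{tr}{\omega}{X}$, you observe that since every $2$-generator has $\eta_x$ as its first argument, the contraction recipe of~\SSS{SSS:ExtendingSigma} lands exclusively in the thin case~\eqref{E:NormStratDimNEta}, so Lemma~\ref{L:GenerationNormalisationStrategy} assembles a contraction without any $3$-generator and Theorem~\ref{T:AcyclicityNormalisation} finishes. Your argument is shorter and self-contained, not relying on Theorem~\ref{T:AcyclicExtensionARS}; the paper's, on the other hand, yields concrete replacement formulas making the redundancy of the discarded generators visible, which feeds directly into its discussion of the cube law. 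Your closing appeal to Lemma~\ref{L:SquaresThinFaces} is harmless but superfluous in your line: once there are no generators above dimension $2$, Lemma~\ref{L:GenerationNormalisationStrategy} imposes no further data, so no squares remain to be filled.
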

\begin{proof}
  Let $X$ be a convergent $1$-polygraph equipped with the normal form
  section and with normalisation strategy $\sigma$.  We consider the
  acyclic $\omega$-groupoid $\tck{\cb{}{\omega}{X}}$ from
  Theorem~\ref{T:AcyclicExtensionARS}. Let $(f_1,f_2)$ be a local
  branching with source $x$ such that $f_1,f_2\neq\eta_x$, let $x'$ be
  the target of $\eta_x$. The $2$-generator $\filler{2}{f_1,f_2}$ has
  the same faces as the $2$-cell
\begin{eqn}{equation}
\label{E:HomReduc1}
\left(\Gamma_1^+\eta_x \circ_2 \filler{2}{\eta_x,f_2}\right) \circ_1 \left(T_1\filler{2}{\eta_x,f_1}\circ_2 \Gamma_1^-\sigma_{x'}\right) =
\begin{tikzcd}[global scale = 2.6 and 2.1 and 1 and 1.2]
x \ar[rr, equal] \ar[dd, equal] && x \ar[rr, "f_2"] \ar[dd] && y_2 \ar[dd, "\sigma_{y_2}"] \\
& \Gamma_1^+\eta_x && \filler{2}{\eta_x,f_2} & \\
x \ar[rr] \ar[dd, "f_1"'] && x' \ar[rr] \ar[dd] && \widehat{x} \ar[dd, equal] \\
& T_1\filler{2}{\eta_x,f_1} && \Gamma_1^-\sigma_{x'} & \\
y_1 \ar[rr, "\sigma_{y_1}"'] && \widehat{x} \ar[rr, equal] && \widehat{x}
\end{tikzcd}
\end{eqn}
We can thus replace $\filler{2}{f_1,f_2}$ by this $2$-cell that
  depends only on the generators $\filler{2}{\eta_x,f_1}$,
  $\filler{2}{\eta_x,f_2}$.

  Let $(f_1,f_2,f_3)$ be a local $3$-branching with source $x$ and let
  $x'$ be the target of $\eta_x$. Suppose $f_1=\eta_x$ and
  $f_2,f_3\neq\eta_x$. The $3$-generator $\filler{3}{\eta_x,f_2,f_3}$
  has faces $\filler{2}{\eta_x,f_2}$, $\filler{2}{\eta_x,f_3}$,
  $\filler{2}{f_2,f_3}$ and three thin cells. We replace
  $\filler{2}{f_2,f_3}$ by~\eqref{E:HomReduc1}, so that
  $\filler{3}{\eta_x,f_2,f_3}$ has the same faces as the $3$-cell
\begin{eqn}{equation}
\label{E:HomReduc2}
\left(\Gamma_1^-\Gamma_1^+\eta_x\circ_3\Gamma_1^-\filler{2}{\eta_x,f_3}\right)\circ_2\left(\Gamma_2^-T_1\filler{2}{\eta_x,f_2}\circ_3\Gamma_2^-\Gamma_1^-\sigma_{x'}\right).
\end{eqn}
The cases where $f_2=\eta_x$ or $f_3=\eta_x$ lead to similar thin
cells.

Now suppose $f_1,f_2,f_3\neq\eta_x$. If we replace
$\filler{2}{f_1,f_2}$, $\filler{2}{f_1,f_3}$ and $\filler{2}{f_2,f_3}$
by~\eqref{E:HomReduc1}, then the $3$-generator
$\filler{3}{f_1,f_2,f_3}$ has the same faces as the $3$-cell
\begin{eqn}{gather}
\label{E:HomReduc3}
\left(\left(\left(\Gamma_1^-\Gamma_1^+\eta_x\circ_1R_1\Gamma_1^-\Gamma_1^-\eta_x\right)\circ_3\Gamma_1^-\filler{2}{\eta_x,f_3}\right)
\circ_2\left(T_2\Gamma_1^-T_1\filler{2}{\eta_x,f_2}\circ_3\Gamma_2^-\Gamma_1^-\sigma_{x'}\right)\right) \\
\hspace{9cm} \circ_1\,\Gamma_2^-\left(T_1\filler{2}{\eta_x,f_1}\circ_2\Gamma_1^-\sigma_{x'}\right). \notag
\end{eqn}
So again we replace $\filler{3}{f_1,f_2,f_3}$ by this thin cell.

Lemma~\ref{L:SquaresThinFaces} implies that, if we replace the faces
of any $4$-generator in $\tck{\cb{}{\omega}{X}}$ by the thin $3$-cells
described in formulas~\eqref{E:HomReduc2} and~\eqref{E:HomReduc3},
then the $4$-generator itself can be replaced by a thin cell.  The
same argument applies inductively in all higher dimensions.

This allows constructing a trucacted $(\omega,0)$-polygraph
$\cb{tr}{\omega}{X}$ from the acyclic $\omega$-groupoid
$\tck{\cb{}{\omega}{X}}$, retaining only the $0$-generators, the
$1$-generators and the $2$-generators of the form
$\filler{2}{\eta_x,f}$, where $f\in X_1$ and $f\neq\eta_x$. By
construction, it freely generates an acyclic $\omega$-groupoid
$\tck{\cb{tr}{\omega}{X}}$.  In particular, it has no $k$-generators
and no non-thin $k$-cells for any~$k\geq3$.
\end{proof}

\subsubsection{Example}
To illustrate the difference between Theorem~\ref{T:AcyclicExtensionARS} 
and Theorem~\ref{T:TruncatedAcyclicExtensionARS}, we consider the $1$-polygraph $X$ 
defined by the diagram
\begin{equation*}
\begin{tikzcd}[global scale = 3.5 and 3 and 1 and 1.2]
& x \ar[dl, "f_1"'] \ar[d, "f_2"] \ar[dr, "f_3"] & \\
y_1 \ar[dr, "g_1"'] & y_2 \ar[d, "g_2"] & y_3 \ar[dl, "g_3"] \\
& z &
\end{tikzcd}
\end{equation*}
It is convergent, and $z$ is the normal form of every $0$-cell. We
define the normalisation strategy $\sigma$ by
$\sigma_x=f_1\circ_{1}g_1$, $\sigma_{y_i}=g_i$ for every
$1\leq i\leq 3$, and $\sigma_z=1_z$, and set $\eta_x=f_1$  and $f_1<f_2<f_3$.

The ARS $X$ has the critical $2$-branchings $(f_1,f_2)$,
$(f_1,f_3)$, $(f_2,f_3)$ and the critical $3$-branching
$(f_1,f_2,f_3)$. The $(\omega,0)$-polygraph $\cb{}{\omega}{X}$
extending $X$ has the $2$-generators $\filler{2}{f_1,f_2}$,
$\filler{2}{f_1,f_3}$, $\filler{2}{f_2,f_3}$ and the $3$-generator
$\filler{3}{f_1,f_2,f_3}$.  The $\omega$-groupoid
$\tck{\cb{}{\omega}{X}}$ freely generated this way is acyclic.

By contrast, the $(2,0)$-polygraph $\cb{tr}{2}{X}$ extending $X$ has
the $2$-generators $\filler{2}{f_1,f_2}$, $\filler{2}{f_1,f_3}$, but
no $3$-generator. The $2$-groupoid $\tck{\cb{tr}{2}{X}}$ freely
generated this alternative way also acyclic. The critical
$2$-branching $(f_2,f_3)$, for instance, converges to $z$ via the
confluence $(g_2,g_3)$, and it gives rise to the $1$-square
$S=(f_2,f_3,g_3,g_2)$, filled with the $2$-cell
\begin{equation*}
(\Gamma_1^+f_1 \circ_2 \filler{2}{f_1,f_3}) \circ_1 (T_1\filler{2}{f_1,f_2}\circ_2 \Gamma_1^-g_1) =
\begin{tikzcd}[global scale = 2.6 and 2.1 and 1 and 1.2]
\ar[rr, equal] \ar[dd, equal] && \ar[rr, "f_3"] \ar[dd] && \ar[dd, "g_3"] \\
& \Gamma_1^+f_1 && \filler{2}{f_1,f_3} & \\
\ar[rr] \ar[dd, "f_2"'] && \ar[rr] \ar[dd] && \ar[dd, equal] \\
& T_1\filler{2}{f_1,f_2} && \Gamma_1^-g_1 & \\
\ar[rr, "g_2"'] && \ar[rr, equal] &&
\end{tikzcd}
\end{equation*}
The critical $3$-branching $(f_1,f_2,f_3)$ converges to $z$ via the
confluence $(g_1,g_2,g_3)$. This induces the $2$-square $S$
defined by
\begin{gather*}
\partial_1^+S = \Gamma_1^-g_1,
\qquad 
\partial_2^+S = \Gamma_1^-g_2,
\qquad 
\partial_3^+S = \Gamma_1^-g_3, \\[1ex]
\partial_1^-S = (\Gamma_1^+f_1 \circ_2 \filler{2}{f_1,f_3}) \circ_1
 (T_1\filler{2}{f_1,f_2} \circ_2 \Gamma_1^-g_1),
 \qquad
\partial_2^-S = \filler{2}{f_1,f_3},
\qquad 
\partial_3^-S = \filler{2}{f_1,f_2}.
\end{gather*}
It can be filled by the thin $3$-cell
\begin{equation*}
(\Gamma_1^-\Gamma_1^+f_1 \circ_3 \Gamma_1^-\filler{2}{f_1,f_3}) \circ_2 (\Gamma_2^-T_1\filler{2}{f_1,f_2} \circ_3 \Gamma_2^-\Gamma_1^-g_1).
\end{equation*}
Then $\tck{\cb{tr}{\omega}{X}}$ is indeed acyclic; the $2$-generator
$\filler{2}{f_2,f_3}$ and the $3$-generator $\filler{3}{f_1,f_2,f_3}$
are no longer needed.

\subsection{Concluding remarks}

The only $3$-confluence fillers in the proof of
Theorem~\ref{T:TruncatedAcyclicExtensionARS} are thin cells. The
$2$-confluence fillers employed are normalising, as explained in
Remark~\ref{SSS:ExSkolemNormConfl}, and the cube law holds \emph{a
  fortiori}. Hence, the cube law always holds for any ARS, since
rewriting rules have no application context and the critical branching
lemma from classical rewriting is trivial.

By contrast, in algebraic rewriting systems (string, term, linear,
etc.), the cube law is not inherent and must be proved separately
-- as, for instance, in the $\lambda$-calculus (see
\ref{SSS:ResiduationConflCubeLaw}). Future work will apply the cubical
constructions developed in this paper to such systems. Note also that,
unlike for ARS, convergent algebraic extensions generally do not
terminate after finitely many steps (see
Theorem~\ref{T:TruncatedAcyclicExtensionARS}).

In globular higher-dimensional rewriting, the constructions of
$\omega$-groupoids and related structures from polygraphs are known as
\emph{polygraphic resolutions}, as mentioned in the introduction, and
contractions may be regarded as contracting homotopies. This
topological terminology is justified by the folk model structure on
strict $\omega$-categories and the fact that polygraphic resolutions
are cofibrant
approximations~\cite{Metayer03,Metayer2008,LafontMetayerWorytkiewicz2010}.
In the cubical case, much less is known; polygraphic resolutions as
cofibrant approximations remain an avenue for future work. The proof
of Theorem~\ref{T:TruncatedAcyclicExtensionARS} has been inspired in
particular by a categorical approach to Tietze transformations in
globular polygraphs~\cite{GaussentGuiraudMalbos15}, which appears
worth exploring via cubical categories as well.

\begin{small}
\renewcommand{\refname}{\Large\textsc{References}}
\bibliographystyle{plain}
\bibliography{biblioFormalCubes}
\end{small}

\quad

\vfill

\begin{footnotesize}

\bigskip
\auteur{Philippe Malbos}{malbos@math.univ-lyon1.fr}
{Universit\'e Claude Bernard Lyon 1\\
ICJ UMR5208, CNRS\\
F-69622 Villeurbanne cedex, France}

\bigskip
\auteur{Tanguy Massacrier}{massacrier@math.univ-lyon1.fr}
{Universit\'e Claude Bernard Lyon 1\\
ICJ UMR5208, CNRS\\
F-69622 Villeurbanne cedex, France}

\bigskip
\auteur{Georg Struth}{g.struth@sheffield.ac.uk}
{University of Sheffield\\
  Department of Computer Science\\
  Regent Court, 211 Portobello\\
  Sheffield S1 4DP, United Kingdom
}
\end{footnotesize}

\vspace{1.5cm}

\begin{small}---\;\;\today\;\;-\;\;\hhmm\;\;---\end{small}

\clearpage

\appendix

\section{Appendices}

\subsection{Axioms of cubical categories}
\label{A:AxiomsCubCat}
We give a comprehensive axiomatisation of cubical categories, which were
outlined in Subsection~\ref{SS:CubicalCategories}.

\subsubsection{Cubical categories}
\label{AA:AxiomsCubCat}
Cubical categories satisfy the following axioms, for all $i,j,k\in\Nbb$ such that $1\leq i,j\leq k$:
\begin{align*}
\partial_{k,i}^\alpha\epsilon_{k,j}=
\begin{cases*}
\epsilon_{k-1,j-1}\partial_{k-1,i}^\alpha & if $i<j$, \\
\id_{C_{k-1}} & if $i=j$, \\
\epsilon_{k-1,j}\partial_{k-1,i-1}^\alpha & if $i>j$,
\end{cases*}
\end{align*}
\begin{equation*}
\epsilon_{k+1,i}\epsilon_{k,j+1} = \epsilon_{k+1,j}\epsilon_{k,i}
\quad \text{if $i\leq j$,}
\qquad
\epsilon_{k+1,i}\epsilon_{k,j} = \epsilon_{k+1,j}\epsilon_{k,i+1} \quad \text{if $i>j$,}
\end{equation*}
\begin{gather*}
(a\circ_{k,i}b)\circ_{k,j}(c\circ_{k,i}d)=(a\circ_{k,j}c)\circ_{k,i}(b\circ_{k,j}d), \\
a\circ_{k,i}(b\circ_{k,i}c)=(a\circ_{k,i}b)\circ_{k,i}c,
\end{gather*}
\begin{align*}
\epsilon_{k+1,i}(a\circ_{k,j}b)=
\begin{cases*}
\epsilon_{k+1,i}a\circ_{k+1,j+1}\epsilon_{k+1,i}b & if $i\leq j$, \\
\epsilon_{k+1,i}a\circ_{k+1,j}\epsilon_{k+1,i}b & if $i>j$,
\end{cases*}
\end{align*}
\begin{align*}
a\circ_{k,i}\epsilon_{k,i}\partial_{k,i}^+a=\epsilon_{k,i}\partial_{k,i}^-a\circ_{k,i}a=a,
\end{align*}
\begin{align*}
\partial_{k,i}^\alpha(a\circ_{k,j}b)=
\begin{cases*}
\partial_{k,i}^\alpha a\circ_{k,j-1}\partial_{k,i}^\alpha b & if $i<j$, \\
\partial_{k,i}^-a & if $i=j$ and $\alpha=-$, \\
\partial_{k,i}^+b & if $i=j$ and $\alpha=+$, \\
\partial_{k,i}^\alpha a\circ_{k,j}\partial_{k,i}^\alpha b & if $i>j$,
\end{cases*}
\end{align*}

\subsubsection{Connections}
\label{AA:AxiomsCubCatConnections}
Cubical categories with connections satisfy the following additional axioms:
\begin{align*}
\partial_{k,i}^\alpha\Gamma_{k,j}^\beta=
\begin{cases*}
\Gamma_{k-1,j-1}^\beta\partial_{k-1,i}^\alpha & if $i<j$, \\
\id_{C_{k-1}} & if $i=j,j+1$ and $\alpha=\beta$, \\
\epsilon_{k-1,j}\partial_{k-1,j}^\alpha & if $i=j,j+1$ and $\alpha=-\beta$, \\
\Gamma_{k-1,j}^\beta\partial_{k-1,i-1}^\alpha & if $i>j+1$,
\end{cases*}
\end{align*}
\begin{equation*}
\Gamma_{k+1,i}^\alpha\epsilon_{k,j} =
\begin{cases*}
\epsilon_{k+1,j+1}\Gamma_{k,i}^\alpha & if $i<j$, \\
\epsilon_{k+1,i}\epsilon_{k,i} & if $i=j$, \\
\epsilon_{k+1,j}\Gamma_{k,i-1}^\alpha & if $i>j$,
\end{cases*}
\qquad
\Gamma_{k+1,i}^\alpha\Gamma_{k,j}^\beta =
\begin{cases*}
\Gamma_{k+1,j+1}^\beta\Gamma_{k,i}^\alpha & if $i<j$, \\
\Gamma_{k+1,j}^\alpha\Gamma_{k,j}^\alpha & if $i=j+1$ and $\alpha=\beta$, \\
\Gamma_{k+1,j}^\beta\Gamma_{k,i-1}^\alpha & if $i>j+1$.
\end{cases*}
\end{equation*}
\begin{equation*}
\Gamma_{k,i}^+a\circ_{k,i}\Gamma_{k,i}^-a =\epsilon_{k,i+1}a,\qquad
\Gamma_{k,i}^+a\circ_{k,i+1}\Gamma_{k,i}^-a = \epsilon_{k,i}a,
\end{equation*}
\begin{align*}
\Gamma_{k+1,i}^\alpha(a\circ_{k,j}b)=
\begin{cases*}
\Gamma_{k+1,i}^\alpha a\circ_{k,j+1}\Gamma_{k+1,i}^\alpha b & if $i<j$, \\
(\Gamma_{k+1,i}^-a\circ_{k,i}\epsilon_{k+1,i+1}b)\circ_{k,i+1}(\epsilon_{k+1,i}b\circ_{k,i}\Gamma_{k+1,i}^-b) & if $i=j$ and $\alpha=-$, \\
(\Gamma_{k+1,i}^+a\circ_{k,i}\epsilon_{k+1,i}a)\circ_{k,i+1}(\epsilon_{k+1,i+1}a\circ_{k,i}\Gamma_{k+1,i}^+b) & if $i=j$ and $\alpha=+$, \\
\Gamma_{k+1,i}^\alpha a\circ_{k,j}\Gamma_{k+1,i}^\alpha b & if $i>j$,
\end{cases*}
\end{align*}

\subsubsection{Functors}
\label{AA:CubicalFunctors}
A functor $F:\Ccal\to\Dcal$ of cubical $\omega$-categories is a family of
maps $(F_k:\Ccal_k\to\Dcal_k)_{0\leq k}$ satisfying
\[
F_k(a \circ_{k,i} b) = F_k a \circ_{k,i} F_k b,
\qquad
F_{k-1}\partial_{k,i}^\alpha = \partial_{k,i}^\alpha F_k,
\qquad
F_k\epsilon_{k,i} = \epsilon_{k,i} F_{k-1},
\qquad
F_k\Gamma_{k,j}^\alpha = \Gamma_{k,j}^\alpha F_{k-1},
\]
for all $i,j,k\in\Nbb$ such that $1\leq i \leq k$ and $1\leq j<k$, and all $\circ_{k,i}$-composable $a,b\in\Ccal_k$.

\subsubsection{Inverses}
\label{AA:InversionMaps}
The inversion maps $R_{i}$ and $T_{i}$ defined in~\SSS{SS:DefGrpdCubCat} are compatible with
\begin{enumerate}
\item the face maps
\begin{equation*}
\partial_{i}^\alpha R_{j}f =
\begin{cases*}
R_{j-1}\partial_{i}^\alpha f & if $i<j$, \\
\partial_{i}^{-\alpha} f & if $i=j$, \\
R_{j}\partial_{i}^\alpha f & if $i>j$,
\end{cases*}
\qquad
\partial_{i}^\alpha T_{j}f =
\begin{cases*}
T_{j-1}\partial_{i}^\alpha f & if $i<j$, \\
\partial_{i+1}^\alpha f & if $i=j$, \\
\partial_{i-1}^\alpha f & if $i=j+1$, \\
T_{j}\partial_{i}^\alpha f & if $i>j+1$,
\end{cases*}
\end{equation*}
\item the compositions
\begin{equation*}
R_{i}(f\circ_{j}g) =
\begin{cases*}
R_{i}g\circ_{i}R_{i}f & if $i=j$, \\
R_{i}f\circ_{j}R_{i}g & if $i\neq j$,
\end{cases*}
\qquad
T_{i}(f\circ_{j}g) =
\begin{cases*}
T_{i}f\circ_{i+1}T_{i}g & if $j=i$, \\
T_{i}f\circ_{i}T_{i}g & if $j=i+1$, \\
T_{i}f\circ_{j}T_{i}g & if $j\neq i,i+1$,
\end{cases*}
\end{equation*}
\item the degeneracies
\begin{equation*}
R_{i}\epsilon_{j}f =
\begin{cases*}
\epsilon_{j}R_{i}f & if $i<j$, \\
\epsilon_{i}f & if $i=j$, \\
\epsilon_{j}R_{i-1}f & if $i>j$,
\end{cases*}
\qquad
T_{i}\epsilon_{j}f =
\begin{cases*}
\epsilon_{j}T_{i-1}f & if $j<i$, \\
\epsilon_{i+1}f & if $j=i$, \\
\epsilon_{i}f & if $j=i+1$, \\
\epsilon_{j}T_{i}f & if $j>i+1$,
\end{cases*}
\end{equation*}
\item the connections
\begin{gather*}
R_{i}\Gamma_{j}^\alpha f=
\begin{cases*}
\Gamma_{j}^\alpha R_{i}f & if $i<j$, \\
\epsilon_{i+1}R_{i}f\circ_{i+1}\Gamma_{i}^+f & if $i=j$, $\alpha=-$, \\
\Gamma_{i}^-f\circ_{i}\epsilon_{i+1}R_{i}f & if $i=j$, $\alpha=+$, \\
\epsilon_{i-1}R_{i-1}f\circ_{i}\Gamma_{i-1}^+f & if $i=j+1$, $\alpha=-$, \\
\Gamma_{i-1}^-f\circ_{i}\epsilon_{i-1}R_{i-1}f & if $i=j+1$, $\alpha=+$, \\
\Gamma_{j}^\alpha R_{i-1}f & if $i>j+1$,
\end{cases*}\
\qquad
T_{i}\Gamma_{j}^\alpha f =
\begin{cases*}
\Gamma_{j}^\alpha T_{i}f & if $i<j$, \\
\Gamma_{i}^\alpha f & if $i=j$, \\
\Gamma_{j}^\alpha T_{i-1}f & if $i>j$,
\end{cases*}\\[1em]
T_{i+1}\Gamma_{i}^\alpha T_{i}f = T_{i}\Gamma_{i+1}^\alpha f, \qquad
T_{i}\Gamma_{i+1}^\alpha T_{i}f = T_{i+1}\Gamma_{i}^\alpha f,
\end{gather*}
\item other inversion maps
\begin{gather*}
R_{i}R_{j}f =
\begin{cases*}
f & if $i=j$, \\
R_{j}R_{i}f & if $i\neq j$,
\end{cases*}
                  \qquad
                  T_{i}T_{j}f =
\begin{cases*}
f & if $i=j$, \\
T_{j}T_{i}f & if $|i-j|\geq2$,
\end{cases*}\\
T_{i}R_{j}f =
\begin{cases*}
R_{i+1}T_{i}f & if $j=i$, \\
R_{i}T_{i}f & if $j=i+1$, \\
R_{j}T_{i}f & if $j\neq i,i+1$.
\end{cases*}\qquad
T_{i}T_{i+1}T_{i}f = T_{i+1}T_{i}T_{i+1}f,
\end{gather*}
\end{enumerate}

\subsection{Cubical polygraphs and free cubical categories}
\label{A:CubPolFreeCat}

In this appendix, we detail the construction of the cubical polygraphs
used in Section~\ref{S:PolygraphicResolutionARS}.  Cubical polygraphs
form systems of generators for cubical categories, defined inductively
on the dimension.  Our presentation follows the method developed by
Métayer in the globular setting~\cite{Metayer2008}.  We first
introduce the notion of \emph{cubical extension}, a set of
$(n+1)$-generators adjoined to a cubical $n$-category.
Lemma~\ref{L:ConstructionFreeFunctorCubCat} makes the construction of
the free cubical $(n+1)$-category generated by a cubical $n$-category
and equipped with a cubical extension explicit.  This construction is
then used to define cubical polygraphs recursively by adjoining
cubical extensions to freely generated cubical categories.

\subsubsection{Cubical extensions}
For $n\in\mathbb{N}$, a \emph{precubical $n$-set} is a family
$\Ccal = (\Ccal_k)_{0\leq k\leq n}$ of $k$-cells with face maps
$\partial_{k,i}^\alpha : \Ccal_k \to \Ccal_{k-1}$, for
$1\leq i\leq k\leq n$, satisfying the cubical
relations~\eqref{E:AxiomPreCubClass}. A functor $F : \Ccal \fl \Dcal$
of precubical sets is a family of maps
$(F_k : \Ccal_k \fl \Dcal_k)_{k\in \mathbb{N}}$ that preserve face
maps, that is
$F_{k-1}\partial_{k,i}^\alpha = \partial_{k,i}^\alpha F_k$, for every
$1\leq i \leq n$. We denote by $\PreCub{n}$ the category of precubical
$n$-sets and their functors.  We denote by $\CubCatG{n}$ the category
of cubical $n$-categories and their functors as defined in
\SSS{SSS:DefCubCat}.

The \emph{category of cubical extensions of cubical $n$-categories} is defined by the following pullback in $\catego{CAT}$
\begin{equation*}
\begin{tikzcd}[global scale = 7 and 4 and 1 and 1]
(\CubCatG{n})^+ \ar[r, dotted] \ar[d, dotted] & \PreCub{n+1} \ar[d] \\
\CubCatG{n} \ar[r, "U_n"] & \PreCub{n}
\end{tikzcd}
\end{equation*}
where the bottom arrow is the forgetful functor and
the right arrow the truncation functor.

Explicitly, a cubical extension of a cubical $n$-category $\Ccal$
consists of a set $X_{n+1}$ of \emph{$(n+1)$-generators} and a set of
face maps $\partial_{n+1,i}^\alpha:X_{n+1}\to\Ccal_n$, for
$1\leq i\leq n+1$, that satisfy the cubical
relations~\eqref{E:AxiomPreCubClass}.  A morphism of cubical
extensions $F:(\Ccal,X)\to(\Dcal,Y)$ consists of a functor between the
cubical $n$-categories $G:\Ccal\to\Dcal$ and a map $H:X\to Y$ such
that $\partial_{n+1,i}^\alpha H=G_n\partial_{n+1,i}^\alpha$ for all
$1\leq i\leq n+1$.

Consider the forgetful functor
\[
W_n : \CubCatG{n+1}\to(\CubCatG{n})^+ 
\]
sending a cubical $(n+1)$-category $\Ccal$ to the pair
$(\Ccal_{\leq n},\Ccal_{n+1})$, where $\Ccal_{\leq n}$ is the
$n$-category made of $k$-cells of $\Ccal$, for $k\leq n$, and
$\Ccal_{n+1}$ is the set of $(n+1)$-cells viewed as a cubical
extension.  It has a left adjoint $L_{n}$, which maps a cubical
$n$-category $\Ccal$, equipped with a cubical extension $X_{n+1}$, to
the freely generated cubical $(n+1)$-category $\Ccal[X_{n+1}]$.  For
Gray categories and polygraphs, a proof of the existence of this
adjoint functor has been given by Lucas~\cite{LucasPhD2017}, although
no explicit construction is given there.  We provide a fully syntactic
construction of the free functor $L_n$ using a type system analogous
to that of Métayer in the globular case~\cite[Section
4.1]{Metayer2008}.  Our syntax differs from the globular one in
several respects: we introduce constants for degeneracy and connection
maps rather than identity maps, and we quotient by the cubical axioms
instead of the globular ones.  Another difference concerns the type of
$(n+1)$-cells: in the globular case one uses $n$-globes; here the
corresponding types are $n$-squares.

\begin{lemma}
\label{L:ConstructionFreeFunctorCubCat}
The forgetful functor $W_n:\CubCatG{n+1}\to(\CubCatG{n})^+$ has a left
adjoint $L_n$.
\end{lemma}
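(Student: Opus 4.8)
The plan is to construct $L_n(\Ccal,X_{n+1})=\Ccal[X_{n+1}]$ explicitly as a term model, following Métayer's syntactic method in the globular case~\cite{Metayer2008}, and then to read off the adjunction from the universal property of the unit. Since $W_n$ leaves the underlying $n$-category untouched, I would set $\Ccal[X_{n+1}]_{\leq n}:=\Ccal$ and only freely generate the $(n+1)$-cells. First I would define a set of \emph{raw terms} by the inductive grammar with four clauses: every generator $a\in X_{n+1}$ is a term; for every $n$-cell $a\in\Ccal_n$ and $1\leq i\leq n+1$ there is a degeneracy constant $\epsilon_{n+1,i}a$; for every $a\in\Ccal_n$ and $1\leq i<n+1$ there are connection constants $\Gamma_{n+1,i}^\alpha a$; and whenever $u,v$ are terms that are ``$i$-composable'' in the sense made precise by the boundary function below, there is a composite $u\circ_{n+1,i}v$.

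Next I would equip the raw terms with a \emph{type}, namely a boundary that is an $n$-square of $\Ccal$ in the sense of~\eqref{E:SquareEquations}. The boundary of a generator is read off from its face maps $\partial_{n+1,i}^\alpha a\in\Ccal_n$; the boundaries of $\epsilon_{n+1,i}a$ and $\Gamma_{n+1,i}^\alpha a$ are prescribed by the degeneracy and connection axioms of Appendices~\ref{AA:AxiomsCubCat} and~\ref{AA:AxiomsCubCatConnections}; and the boundary of a composite is computed from those of its factors via the face--composition axioms. A term is \emph{well typed} precisely when, at every composition node, the two faces glued in direction $i$ coincide and the resulting family again satisfies the cubical relations~\eqref{E:AxiomPreCubClass}, which is verified by a straightforward induction. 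I would then let $\approx$ be the congruence on well-typed terms generated by all cubical axioms of Appendices~\ref{AA:AxiomsCubCat} and~\ref{AA:AxiomsCubCatConnections} that involve $(n+1)$-cells---associativity and interchange for the $\circ_{n+1,i}$, the unit laws relating $\circ_{n+1,i}$ to $\epsilon_{n+1,i}$, and the compatibility relations of $\epsilon$, $\Gamma$ and $\partial$ with composition---and set $\Ccal[X_{n+1}]_{n+1}:=\{\text{well-typed terms}\}/\approx$.

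With this in hand, I would promote the operations to the quotient. The crucial point is that each generating relation of $\approx$ is \emph{boundary-preserving}: both sides carry the same $n$-square as type, so the face maps $\partial_{n+1,i}^\alpha$ factor through $\approx$; likewise $\epsilon_{n+1,i}$, $\Gamma_{n+1,i}^\alpha$ and $\circ_{n+1,i}$ descend because $\approx$ is a two-sided congruence for composition and is stable under pre- and post-composition with degeneracies and connections. That $(\Ccal[X_{n+1}]_{\leq n},\Ccal[X_{n+1}]_{n+1})$ is a cubical $(n+1)$-category then holds essentially by construction, since the defining equations are exactly the relations we quotiented by. The inclusion of $X_{n+1}$ together with $\id_\Ccal$ yields the unit morphism of cubical extensions $\eta:(\Ccal,X_{n+1})\to W_n(\Ccal[X_{n+1}])$. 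For the universal property, given a cubical $(n+1)$-category $\Dcal$ and a morphism of extensions $(G,H):(\Ccal,X_{n+1})\to W_n(\Dcal)$, I would define $\tilde G:\Ccal[X_{n+1}]\to\Dcal$ by $G$ below dimension $n$ and, on representatives, by structural recursion sending $a\mapsto H(a)$, $\epsilon_{n+1,i}a\mapsto\epsilon_{n+1,i}G_n(a)$, $\Gamma_{n+1,i}^\alpha a\mapsto\Gamma_{n+1,i}^\alpha G_n(a)$, and $u\circ_{n+1,i}v\mapsto\tilde G(u)\circ_{n+1,i}\tilde G(v)$. This is well defined on $\approx$-classes because $\Dcal$ satisfies the very axioms generating $\approx$, it is a functor by construction, and it is the unique extension of $(G,H)$ since every $(n+1)$-cell is a class of terms and any functor must commute with the generating operations.

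The main obstacle will be the bookkeeping establishing that $\approx$ is compatible with the full cubical structure, rather than any single conceptual difficulty. Concretely, one must verify that \emph{every} cubical axiom is boundary-preserving---the delicate cases being the connection--composition compatibilities of Appendix~\ref{AA:AxiomsCubCatConnections}, whose four-case right-hand sides must be shown to carry the same $n$-square type as their left-hand sides---and that $\circ_{n+1,i}$ remains well defined on classes, i.e.\ that the matching conditions on faces are preserved under $\approx$. Once these compatibilities are in place, the adjunction $L_n\dashv W_n$ follows formally from the universal property of $\eta$.
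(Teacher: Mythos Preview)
Your proposal is correct and follows essentially the same approach as the paper's proof: both build a syntactic term model over the constants for generators, degeneracies and connections together with binary composition symbols, type the terms by $n$-squares of $\Ccal$, quotient by the congruence generated by the cubical axioms, and read off the adjunction from the resulting universal property. The only cosmetic differences are that the paper spells out the ten generating relation schemata explicitly rather than pointing to the appendices, and it verifies the adjunction via the hom-set bijection rather than the unit's universal property, but these are equivalent presentations of the same argument.
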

\begin{proof}
  Consider $(\Ccal,X_{n+1})$ in $(\CubCatG{n})^+$, with face maps of
  $X_{n+1}$ denoted $\partial_{n+1,i}^\alpha$ for all
  $1\leq i\leq n+1$. We define a formal syntax $\Er$ formed by
\begin{enumerate}
\item a constant symbol ${\bf c}_x$, for each $x\in X_{n+1}$,
\item a constant symbol ${\bf e}_{i,c}$, for each $c\in \Ccal_n$ and $1\leq i\leq n+1$,
\item a constant symbol ${\bf g}_{i,c}^\alpha$, for each $c\in\Ccal_n$
  and $1\leq i\leq n$,
\item a binary function symbol $\circ_i$, for each $1\leq i\leq n+1$.
\end{enumerate}
Then $\Er$ is the smallest set of that contains all constants and is
closed under the operation $A\circ_i B$, for all $A,B\in \Er$ and
$1\leq i\leq n+1$.  A \emph{type} is any $n$-square in $\Ccal_n$.  For
every $A\in \Er$ and every type $S$, we recursively defined the
judgement $A:S$ -- \emph{$A$ has type $S$}:
following rules:
\begin{enumerate}
\item ${\bf c}_x:\partial x$, for every $x\in X_{n+1}$,
\item ${\bf e}_{i,c}:S$, for every $n$-cell $c$ in $\Ccal$, where
\begin{align*}
  S_j^\alpha =
  \begin{cases}
     \epsilon_{n,i}\partial_{n,j-1}^\alpha c &\text{if } i<j,\\
     c &\text{if } i=j,\\
     \epsilon_{n,i-1}\partial_{n,j}^\alpha c  & \text{if } i>j,
  \end{cases}
 \end{align*}
\item ${\bf g}_{i,c}^\alpha:S$, for every $n$-cell $c$ in $\Ccal$, where
\begin{align*}
  S_j^\beta =
  \begin{cases}
   \Gamma_{n,i}^\alpha\partial_{n,j-1}^\beta c  &\text{if } i<j-1,\\
   c &\text{if } j=i,i+1 \text{ and } \alpha=\beta,\\
   \epsilon_{n,i}\partial_{n,i}^\alpha c &\text{if } j=i,i+1 \text{ and } \alpha=-\beta,\\
   \Gamma_{n,i-1}^\alpha\partial_{n,j}^\beta c  &\text{if } i>j,
  \end{cases}
\end{align*}
\item $(A\circ_i B):U$, for expressions $A:S$ and $B:T$, where
\begin{align*}
  U_j^\alpha=
  \begin{cases}
    S_j^\alpha\circ_{n,i}T_j^\alpha & \text{if } i<j,\\
    S_i^- &\text{if } i=j \text{ and } \alpha=-,\\
    T_i^+ &\text{if } i=j \text{ and } \alpha=+,\\
    S_j^\alpha\circ_{n,i-1}T_j^\alpha & \text{if } i>j.
  \end{cases}
 \end{align*}
\end{enumerate}

An expression $A$ is \emph{typable} if $A:S$ for some type $S$. A
simple structural induction shows that typable expressions are
uniquely type. Let $\Er_T\subseteq \Er$ denote the set of typable
expressions.  By uniqueness of types, there exist unique maps
$d_i^\alpha:\Er_T\to\Ccal_n$, for $1\leq i\leq n+1$, such that
$d_i^\alpha({\bf c}_x) =\partial_{n+1,i}^\alpha(x)$ and
$A:(d_i^\alpha(A))_{i,\alpha}$ for all $x\in\Ccal_n$ and $A\in \Er_T$.

We write $\rhd_i$ for the relation of being $\circ_i$-composable on
$\Ccal_n$. We extend this relation to $\Er_T$ by setting $A\rhd_i B$ if $d_i^-(A)=d_i^+(B)$.
Let $\sim$ be the smallest equivalence  on $\Er_T$ generated by the
following conditions, for all $1\leq i,j\leq n$, $A,B,C,D\in \Er_T$ and $c,d\in\Ccal_n$:
\begin{enumerate}
\item $A\circ_i (B\circ_i C)\sim(A\circ_i B)\circ_i C$, if $A\rhd_i B\rhd_i C$,
\item if $i<j$, $A\rhd_i B$, $C\rhd_i D$, $A\rhd_j C$ and $B\rhd_j D$, then
\begin{equation*}
(A\circ_i B) \circ_j (C\circ_i D) \sim (A\circ_j C) \circ_i (B\circ_j D),
\end{equation*}
\item ${\bf e}_{i,c}\circ_i A\sim A$, if $d_i^-(A)=c$,
\item $A\circ_i {\bf e}_{i,c}\sim A$, if $d_i^+(A)=c$,
\item if $c\rhd_i d$, then
\begin{gather*}
  {\bf e}_{i,c\circ_j d} \sim
  \begin{cases}
     {\bf e}_{i,c}\circ_{j+1} {\bf e}_{i,d} & \text{if }i\leq j, \\
     {\bf e}_{i,c}\circ_{j} {\bf e}_{i,d} & \text{if }i>j,
  \end{cases}
 \end{gather*}
\item ${\bf e}_{i,\epsilon_{n,j}c}\sim{\bf e}_{j+1,\epsilon_{n,i}c}$, if $i\leq j$,
\item if $c\rhd_i d$, then
\begin{equation*}
{\bf g}_{i,c\circ_{n,j}d}^\alpha \sim
\begin{cases}
{\bf g}_{i,c}^\alpha\circ_{j+1}{\bf g}_{i,d}^\alpha &\text{if }i<j, \\
({\bf g}_{i,c}^-\circ_i{\bf e}_{i+1,d})\circ_{i+1}({\bf e}_{i,d}\circ_i{\bf g}_{i,d}^-) &\text{if } i=j \text{ and } \alpha=-, \\
({\bf g}_{i,c}^+\circ_i{\bf e}_{i,c})\circ_{i+1}({\bf e}_{i+1,c}\circ_i{\bf g}_{i,d}^+) &\text{if } i=j \text{ and } \alpha=+, \\
{\bf g}_{i,c}^\alpha\circ_j{\bf g}_{i,d}^\alpha &\text{if }i>j, \\
\end{cases}
\end{equation*}
\item ${\bf g}_{i,c}^+\circ_i{\bf g}_{i,c}^-\sim{\bf e}_{i+1,c}$ and ${\bf g}_{i,c}^+\circ_{i+1}{\bf g}_{i,c}^-\sim{\bf e}_{i,c}$,
\item
\begin{gather*}
{\bf g}_{i,\epsilon_{n,j}c}^\alpha \sim
\begin{cases}
{\bf e}_{j+1,\Gamma_{n,i}^\alpha c} &\text{if } i<j, \\
{\bf g}_{i,\epsilon_{n,i}c}^\alpha \sim {\bf e}_{i,\epsilon_{n,i} c} &\text{if } i=j, \\
{\bf g}_{i,\epsilon_{n,j}c}^\alpha \sim {\bf e}_{j,\Gamma_{n,i-1}^\alpha c} &\text{if } i>j, \\
\end{cases}
\end{gather*}
\item ${\bf g}_{i,\Gamma_{n,j}^\beta c}^\alpha \sim {\bf g}_{j+1,\Gamma_{n,i}^\alpha c}^\beta$ if $i<j$ and ${\bf g}_{i,\Gamma_{n,i}^\alpha c}^\alpha \sim {\bf g}_{i+1,\Gamma_{n,i}^\alpha c}^\alpha$.
\end{enumerate}

Let $\cong$ be the congruence generated by $\sim$ on $\Er_T$.  We
define $\cat{X_{n+1}} \coloneq \Er_T/\cong$, and write~$[A]$ for the
equivalence class of an expression~$A$.  
we define the operations
\[
\partial_{n+1,i}^\alpha([A]) \:\coloneq\: d_i^\alpha(A)
\quad\text{and}\quad
[A_1]\circ_{n+1,i}[ A_2] \:\coloneq\: [ A_1\circ_i A_2],
\]
on $\cat{X_{n+1}}$ whenever $A_1\rhd_i A_2$.  We further define maps
$\epsilon_{n+1,i},\Gamma_{n+1,i}^\alpha:\Ccal_n \to \cat{X_{n+1}}$, for every $c\in \Ccal_n$, by
\[
\epsilon_{n+1,i}(c) \:\coloneq\: [ {\bf e}_{i,c}]
\quad\text{and}\quad
\Gamma_{n+1,i}^\alpha(c) \:\coloneq\: [ {\bf g}_{i,c}^\alpha],
\]

Finally, we define $L_n(\Ccal,X_{n+1})$ to be the cubical
$(n+1)$-category with underlying $n$-category $\Ccal_n$, set of
$(n+1)$-cells $\cat{X_{n+1}}$, and structure induced by the operations
just introduced.

It is routine to check that this construction produces a cubical
$(n+1)$-category, and that it extends to make
$L_n:(\CubCatG{n})^+\to\CubCatG{n+1}$ functorial.

Next, we check the adjunction $L_n\dashv W_n$.  Let $(\Ccal,X_{n+1})$
be in $(\CubCatG{n})^+$, let $\Dcal$ be in $\CubCatG{n+1}$ and let
\[
f \:\coloneq\: (g:\Ccal\to\Dcal_{\leq n},h:X_{n+1}\to\Dcal_{n+1})
\]
be a morphism $f:(\Ccal,X_{n+1})\to W_n(\Dcal)$ in
$(\CubCatG{n})^+$.

We recursively define a map $f':\Er_T \to \Dcal$, for all $x\in
X_{n+1}$, $c\in\Ccal_n$ and  $A,B\in \Er_T$,
$1\leq i\leq n+1$, as
\begin{equation*}
f'({\bf c}_x) = h(x), \qquad
f'({\bf e}_{i,c})  = \epsilon_{n+1,i}(g(c)),\qquad
f'({\bf g}_{i,c}^\alpha)  \Gamma_{n+1,i}^\alpha(g(c)),\qquad
f'(A\circ_i B) = f'(A)\circ_{n+1,i} f'(B).
\end{equation*}
It is compatible with $\cong$ in the sense that $f'(A)=f'(B)$ whenever
$A\cong B$, hence it induces a well-defined map
$\cat{f}:L_n(\Ccal,X_{n+1})\to\Dcal$. It is straightforward to check
that $\cat{f}$ is a cubical $(n+1)$-functor. Hence we obtain a map of
type
\begin{equation*}
(\CubCatG{n})^+((\Ccal,X_{n+1}),W_n(\Dcal))\to\CubCatG{n+1}(L_n(\Ccal,X_{n+1}),\Dcal).
\end{equation*} 
It is also easy to check that this map is natural in $(\Ccal,X_{n+1})$
and $\Dcal$, and that it is invertible, the inverse sending a cubical
$(n+1)$-functor $f$ to the pair $(g,h)$ where $g$ is the
$n$-truncation of $f$ and $h$ is the map between the sets of
$(n+1)$-cells. This yields a natural isomorphism between the above
hom-sets, which establishes $L_n\dashv W_n$.
\end{proof}

The construction of the left adjoint for cubical $(n,p)$-categories
proceeds as above, after adjoining inverse as constants to the syntax
and the associated invertibility axioms to the congruence~$\sim$. 

\begin{lemma}
\label{L:ConstructionFreeFunctorCubNPCat}
The forgetful functor $W_{(n,p)}:\GlobCat{(n+1,p)}\to\GlobCat{(n,p)}^+$ has a left
adjoint $L_{(n,p)}$.
\end{lemma}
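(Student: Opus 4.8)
The plan is to reuse the syntactic construction from the proof of Lemma~\ref{L:ConstructionFreeFunctorCubCat} verbatim, and to enlarge both the syntax $\Er$ and the congruence $\sim$ just enough to force the quotient to be an $(n+1,p)$-category rather than merely a cubical $(n+1)$-category. The only new requirement is that every $(n+1)$-cell with an $R_i$-invertible shell be $R_i$-invertible, and this requirement is vacuous unless $n+1>p$: when $n+1\leq p$ there are no cells above dimension $p$, the $(n+1,p)$-condition is empty, and $L_{(n,p)}=L_n$ with $W_{(n,p)}=W_n$, so Lemma~\ref{L:ConstructionFreeFunctorCubCat} already yields the claim. I would therefore assume $n\geq p$ and add formal inverses only in the top dimension.

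First I would adjoin to the grammar a unary symbol $\mathbf{r}_i(-)$ for each $1\leq i\leq n+1$, representing the inversion map $R_i$; as $T_i$ is definable from $R_i$, $\Gamma$ and $\epsilon$ by the formula recalled in~\SSS{SS:DefGrpdCubCat}, no further symbol is needed. I would extend the typing rules by the clause
\begin{equation*}
\mathbf{r}_i A : R_i S \qquad\text{whenever $A:S$ has an $R_i$-invertible shell,}
\end{equation*}
where $R_iS$ is the $n$-square obtained from $S$ by swapping $S_i^-$ with $S_i^+$ and inverting the remaining faces in the $(n,p)$-category $\Ccal$, exactly as prescribed by the face identities for $R_i$ in Appendix~\ref{AA:InversionMaps}. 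When $n>p$ the faces $S_j^\alpha$ are $n$-cells of $\Ccal$ with $n>p$ and hence already invertible, so the side condition is automatic; when $n=p$ it selects precisely those $(p+1)$-cells that must become invertible. Second, I would add to the generators of $\sim$ the two defining equations of $R_i$-invertibility,
\begin{equation*}
A\circ_i\mathbf{r}_iA \sim \mathbf{e}_{i,d_i^-(A)},
\qquad
\mathbf{r}_iA\circ_i A \sim \mathbf{e}_{i,d_i^+(A)},
\end{equation*}
together with the compatibility relations of $R_i$ with faces, degeneracies, connections, compositions and with itself listed in Appendix~\ref{AA:InversionMaps}. A structural induction shows that typability and uniqueness of types survive the extension, so the maps $d_i^\alpha$ still extend to the set $\Er_T$ of typable expressions.

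With these data the rest of the argument runs as in Lemma~\ref{L:ConstructionFreeFunctorCubCat}. Writing $\cong$ for the congruence generated by the enlarged $\sim$, I would take the free $(n+1)$-cells to be $\Er_T/\cong$ with faces, degeneracies, connections and compositions defined by the same formulas; the new axioms make $[\mathbf{r}_iA]$ a genuine $\circ_i$-inverse of $[A]$, so every $(n+1)$-cell with an $R_i$-invertible shell is invertible and $L_{(n,p)}(\Ccal,X_{n+1})$ is an $(n+1,p)$-category. For the universal property, a morphism $f=(g,h):(\Ccal,X_{n+1})\to W_{(n,p)}(\Dcal)$ into an $(n+1,p)$-category $\Dcal$ extends to $f':\Er_T\to\Dcal$ by the old clauses together with $f'(\mathbf{r}_iA)=R_if'(A)$; since $\Dcal$ obeys the same invertibility and compatibility laws, $f'$ respects $\cong$, descends to a functor $\cat{f}$, and the correspondence $f\mapsto\cat{f}$ is natural and bijective exactly as before, establishing $L_{(n,p)}\dashv W_{(n,p)}$.

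The main obstacle is to state the formation of inverses precisely and to confirm that the quotient really is an $(n+1,p)$-category. The side condition in the typing rule must encode the $R_i$-invertible shell exactly; the delicate case is $n=p$, where the $i$-directional faces of an $(n+1)$-cell need not be invertible, so $\mathbf{r}_iA$ may be formed only when its shell inverts in $\Ccal$. One then verifies that $[\mathbf{r}_iA]$ is the \emph{unique} $\circ_i$-inverse of $[A]$, so that the invertibility property defining an $(n+1,p)$-category holds and no further cells are forced to be inverted. The adjunction step hinges on $f'$ carrying each invertibility and compatibility relation to a valid identity in $\Dcal$; this is exactly where uniqueness of inverses in the $(n+1,p)$-category $\Dcal$ is used, since it forces $R_if'(A)$ to be the value compatible with the quotient and guarantees that $f'$ descends.
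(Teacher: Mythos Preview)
Your proposal is correct and follows exactly the approach the paper indicates: the paper does not give a proof of this lemma at all, stating only that ``the construction of the left adjoint for cubical $(n,p)$-categories proceeds as above, after adjoining inverse as constants to the syntax and the associated invertibility axioms to the congruence~$\sim$.'' Your write-up is a faithful and careful elaboration of that one-line sketch, including the observation that the case $n+1\leq p$ is vacuous and the identification of the delicate point at $n=p$ where the shell condition must be checked in $\Ccal$.
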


\subsubsection{Cubical polygraphs}

We can now construct cubical polygraphs along the lines of their
globular siblings~\cite{Polybook2025}.  We recursively define the
categories $\CubPol{n}$ of \emph{cubical $n$-polygraphs} and the
functors $F_n:\CubPol{n}\to\CubCatG{n}$, which send a cubical
$n$-polygraph to the cubical $n$-category $F_n(X)=\cat{X}$ freely
generates by it:
\begin{enumerate}
\item The category $\CubPol{0}$ is $\catego{Set}$ and the functor $F_0$  the identity.
\item Given $\CubPol{n}$ and $F_n$, the category $\CubPol{n+1}$ is defined by the pullback 
\begin{equation*}
\begin{tikzcd}[global scale = 7 and 4 and 1 and 1]
\CubPol{n+1} \ar[r, dotted, "J_n"] \ar[d, dotted] & (\CubCatG{n})^+ \ar[d] \\
\CubPol{n} \ar[r, "F_n"] & \CubCatG{n}
\end{tikzcd}
\end{equation*}
in $\catego{CAT}$, and the functor $F_{n+1}$ is defined as the composition
\[
\CubPol{n+1}\oto{J_n}(\CubCatG{n})^+\oto{L_n}\CubCatG{n+1}.
\]
\end{enumerate}
Explicitly, a cubical $n$-polygraph is a family $(X_0,\dots,X_n)$, where each $X_{k+1}$ is a cubical extension of $\cat{X_{\leq k}}$ for every $k<n$.
The category $\CubPol{\omega}$ of \emph{cubical $\omega$-polygraphs}
is the projective limit of the following diagram in $\catego{CAT}$
\begin{equation*}
\CubPol{0}\overset{V_0}{\longleftarrow}\CubPol{1}\longleftarrow\dots
\longleftarrow\CubPol{n}\overset{V_n}{\longleftarrow}\CubPol{n+1}\longleftarrow\dots
\end{equation*}
where, for every $n\geq 1$, the functor $V_n$ is the truncation
functor forgetting the $(n+1)$-dimensional cubical extension.

Finally, adding inverses both to the definition of cubical polygraphs
and to the construction of the free cubical category in
Lemma~\ref{L:ConstructionFreeFunctorCubNPCat} leads to the notion of
\emph{cubical $(n,p)$-polygraphs} for all $p \leq n$.  Each cubical
$(n,p)$-polygraph $X$ freely generates a cubical $(n,p)$-category,
denoted~$\tck{X}$.

\end{document}